\renewcommand\paragraph{\@startsection{paragraph}{4}{\z@}%
  {2.25ex \@plus 1ex \@minus .2ex}%
  {-0.75em}%
  {\normalfont\normalsize\bfseries}}
\definecolor{BlueViolet}{rgb}{0, 0, 0.55}
\definecolor{RubineRed}{rgb}{0.88, 0.07, 0.37}
\definecolor{ForestGreen}{rgb}{0.13, 0.55, 0.13}
\definecolor{NavyBlue}{rgb}{0.0, 0.0, 0.5}
\definecolor{Black}{rgb}{0.02, 0.02, 0.02}
\definecolor{MidnightBlue}{rgb}{0.0, 0.2, 0.4}
\definecolor{Gray}{rgb}{0.41, 0.41, 0.41}
\definecolor{TealBlue}{rgb}{0.212,0.459,0.533}
\definecolor{Plum}{rgb}{0.6,0.25,0.6}
\newlist{inlinelist}{enumerate*}{1}
\setlist*[inlinelist,1]{%
  label=(\roman*),
}
\definecolor{listingBG}{HTML}{FFFFCB}%
\definecolor{listingFrame}{HTML}{BBBB98}%
\definecolor{listingLineno}{rgb}{0.5,0.5,1.0}%
\definecolor{LightGrey}{rgb}{0.975,0.975,0.975}
\lstdefinelanguage{tins}{
	commentstyle=\color{Gray},
	morecomment=[l]{//},
	morecomment=[s]{/*}{*/},
	classoffset=0,
	morekeywords={skip,throw,if,then,else,while,do},
	keywordstyle=\color{Black}\bfseries,
	classoffset=1,
	morekeywords={sender,value,balance,undef},
	keywordstyle=\valColor{}\bfseries\itshape,
	classoffset=2,
	morekeywords={contract},
	keywordstyle=\funColor{}\bfseries,
        extendedchars=true,
        literate={\$}{{\textbf{{\dollar}}}}1
}
\definecolor{LightGrey}{rgb}{0.975,0.975,0.975}
\lstdefinelanguage{solidity}{
	commentstyle=\color{Gray},
	morecomment=[l]{//},
	morecomment=[s]{/*}{*/},
	classoffset=0,
        escapechar=\$,
	morekeywords={struct,mapping,function,this,public,private,static,final,class,extends,switch,case,break,finally,try,catch,return,if,else,new},
	keywordstyle=\color{NavyBlue}\bfseries,
	classoffset=1,
	morekeywords={unit,int,string,bool,address,uint,uint256},
	keywordstyle=\color{TealBlue},
	classoffset=2,
	morekeywords={ether,wei,finney,contract,send,throw,msg,sender,value},
	keywordstyle=\color{Plum}\bfseries,
}
\newcommand{\ifempty}[3]{%
  \ifthenelse{\isempty{#1}}{#2}{#3}%
}
\newcommand{\ifzero}[3]{%
  \ifthenelse{\equal{#1}{0}}{#2}{#3}%
}
\newcommand{\ifdots}[3]{%
  \ifthenelse{\equal{#1}{...}}{#2}{#3}%
}
\newcommand{\hidden}[1]{}
\newenvironment{nscenter}
 {\parskip=0pt\par\nopagebreak\centering}
 {\parskip=2pt\par\noindent} 
\newcommand{\eqdef}{\triangleq}
\renewcommand{\vec}[1]{\boldsymbol{#1}}
\newcommand{\relR}{\mathcal{R}}
\newcommand{\Real}[1]{\mathrm{Real}}
\newcommand{\codefont}{\fontsize{10}{11}\selectfont}
\newcommand{\code}[1]{{\tt\codefont {#1}}}
\newcommand{\codeAddr}[1]{\ensuremath{\code{\aColor{#1}}}}
\newcommand{\codeVal}[1]{\ensuremath{\code{\valColor{#1}}}}
\newcommand{\codeFun}[1]{\ensuremath{\code{\funColor{#1}}}}
\newcommand{\dollar}{{\textup{\texttt{\symbol{`\$}}}}}
\newcommand{\codeand}{\textup{\texttt{\symbol{`\&}\symbol{`\&}}}}
\def\etc{etc.\@\xspace}
\newcommand{\eg}{e.g.\@\xspace}
\newcommand{\ie}{i.e.\@\xspace}
\newcommand{\wrt}{w.r.t.\@\xspace}
\newcommand{\emptyseq}{\varepsilon}
\newcommand{\utxo}[1][]{\mathit{UTXO}_{#1}}
\newenvironment{proofof}[2]{%
  \subsection*{Proof of {#1}~\ref{#2}}
  \label{#2-proof}
  }%
  {\qed}
\newcommand{\sig}[3][]{\mathit{sig}^{#1}_{#2}\ifempty{#3}{}{({#3})}}
\newcommand{\BTC}{\textup{%
  \leavevmode
  \vtop{\offinterlineskip 
    \setbox0=\hbox{B}%
    \setbox2=\hbox to\wd0{\hfil\hskip-.03em
    \vrule height .3ex width .15ex\hskip .08em
    \vrule height .3ex width .15ex\hfil}
    \vbox{\copy2\box0}\box2}}\xspace}
\newcommand{\ether}{\textit{ether}\xspace}
\newcommand{\true}{\code{true}\xspace}
\def\aColor{\color{ForestGreen}}
\def\pColor{\color{ForestGreen}}
\def\cColor{\color{ForestGreen}}
\newcommand{\Addr}{{\aColor{\textup{\textbf{Addr}}}}}
\newcommand{\Val}{{\valColor{\mathbb{V}}}} 
\newcommand{\Obs}{\mathbb{O}} 
\newcommand{\QmvU}[1][]{\Obs}
\newcommand{\Const}{{\valColor{\textup{\textbf{Const}}}}}
\newcommand{\Tx}{{\txColor{\mathbb{T}}}}
\newcommand{\BcSt}{{\mathbb{\Sigma}}}
\newcommand{\addrToContr}[1]{{\Gamma}\ifempty{#1}{}{({#1})}}
\newcommand{\aFmt}[1]{{\aColor{\mathcal{#1}}}}
\newcommand{\pFmt}[1]{{\pColor{\mathcal{#1}}}}
\newcommand{\cFmt}[1]{{\cColor{\mathcal{#1}}}}
\newcommand{\amv}[2][]{\aFmt{#2}_{\aColor{#1}}\xspace}
\newcommand{\amvA}[1][]{\amv[{#1}]{X}}
\newcommand{\amvB}[1][]{\amv[{#1}]{Y}}
\newcommand{\pmv}[2][]{\pFmt{#2}_{\pColor{#1}}\xspace}
\newcommand{\pmvA}[1][]{\pmv[{#1}]{A}}
\newcommand{\pmvB}[1][]{\pmv[{#1}]{B}}
\newcommand{\cmv}[2][]{\cFmt{#2}_{\cColor{#1}}\xspace}
\newcommand{\cmvA}[1][]{\cmv[{#1}]{C}}
\newcommand{\cmvB}[1][]{\cmv[{#1}]{D}}
\newcommand{\qmv}[2][]{{#2}_{#1}}
\newcommand{\qmvA}[1][]{\qmv[{#1}]{p}}
\newcommand{\qmvB}[1][]{\qmv[{#1}]{q}}
\newcommand{\QmvA}[1][]{P_{#1}}
\newcommand{\QmvAi}[1][]{P'_{#1}}
\newcommand{\QmvB}[1][]{Q_{#1}}
\newcommand{\QmvC}[1][]{R_{#1}}
\def\txColor{\color{MidnightBlue}}
\newcommand{\txFmt}[1]{{\txColor{\sf #1}}}
\newcommand{\tx}[2][]{\txFmt{#2}_{\txColor{#1}}}
\newcommand{\txT}[1][]{\tx[#1]{T}}
\newcommand{\txTi}[1][]{\txFmt{T'_{\txColor{{#1}}}}}
\DeclareMathAlphabet{\mathbfsf}{\encodingdefault}{\sfdefault}{bx}{n}
\newcommand{\bcEmpty}{{\mathbfsf{\txColor{\emptyseq}}}}
\newcommand{\bcB}[1][]{{\boldsymbol{\mathsf{\txColor{B}}}}_{\txColor{#1}}}
\newcommand{\bcBi}[1][]{{\boldsymbol{\txColor{\sf B'_{\txColor{\mathrm{\textup{#1}}}}}}}}
\newcommand{\valid}{\rhd}
\newcommand{\nvalid}{\centernot{\rhd}}
\newcommand{\bcSt}[1][]{\sigma_{#1}}
\newcommand{\bcSti}[1][]{\sigma_{#1}'}
\newcommand{\bcStii}[1][]{\sigma_{#1}''}
\newcommand{\bcStInit}{\sigma_{0}}
\newcommand{\bcEnv}[1][]{\rho_{#1}}
\newcommand{\balanceInit}[1]{\valN^{0}_{#1}}
\newcommand{\mrg}{\oplus}
\newcommand{\mSubst}[1][]{\pi_{#1}}
\newcommand{\mSubsti}[1][]{\pi'_{#1}}
\newcommand{\WR}[1]{\ifempty{#1}{\Pi}{\Pi(#1)}}
\newcommand{\WRi}[1]{\ifempty{#1}{\Pi'}{\Pi'(#1)}}
\newcommand{\WRmin}[1]{\ifempty{#1}{\Pi^{\star}}{\Pi^{\star}(#1)}}
\def\fieldColor{\color{Plum}}
\def\scriptColor{\color{Black}}
\newcommand{\script}[2][]{{\scriptColor{{\it #2}_{#1}}}}
\newcommand{\expe}[1][]{\script[{#1}]{e}}
\newcommand{\expei}[1][]{\script[{#1}]{e'}}
\newcommand{\versigName}{{\sf versig}}
\newcommand{\versig}[2]{\versigName({#1},{#2})}
\newcommand{\rtx}{{\sf rtx}}
\newcommand{\rtxWit}{\rtx.\txWit[]{}}
\newcommand{\hashE}[1]{{\sf H}(#1)}
\newcommand{\sizeE}[1]{\ensuremath{| #1 |}}
\newcommand{\ifE}[3]{\mathsf{if}~{#1}~\mathsf{then}~{#2}~\mathsf{else}~{#3}}
\newcommand{\const}[2][]{#2_{#1}} 
\newcommand{\constPK}[1][]{\const[{#1}]{pk}}
\newcommand{\constSK}[1][]{\const[{#1}]{sk}}
\newcommand{\txTag}[3][]{{\fieldColor\sf #3}\ifempty{#1}{\ifempty{#2}{}{: {#2}}}{({#1})\ifempty{#2}{}{: {#2}}}}
\newcommand{\txIn}[2][]{\txTag[{#1}]{#2}{in}}
\newcommand{\txWit}[2][]{\txTag[{#1}]{#2}{wit}}
\newcommand{\txOut}[2][]{\txTag[{#1}]{#2}{out}}
\newcommand{\txf}{\txTag{}{f}} 
\newcommand{\txscript}{\txTag{}{scr}}
\newcommand{\txval}{\txTag{}{val}}
\newcommand{\trans}[1]{\xrightarrow{#1}}
\newcommand{\nottrans}[1]{\centernot{\xrightarrow{#1}}}
\newcommand{\pre}[1]{{}^{\bullet}{#1}}
\newcommand{\post}[1]{{#1}{{}^{\bullet}}}
\newcommand{\irule}[2]{\dfrac{#1}{#2}}
\newcommand{\mapstopart}{\rightharpoonup}
\newcommand{\bnfdef}{::=}
\newcommand{\bnfmid}{\;|\;}
\newcommand{\nrule}[1]{{\scriptsize \textsc{#1}}}
\newcommand{\sem}[2][]{\mbox{\ensuremath{\llbracket{#2}\rrbracket_{#1}}}}
\newcommand{\msem}[3]{\mbox{\ensuremath{\llbracket{#3}\rrbracket^{#1}_{#2}}}}
\newcommand{\semCmd}[3]{\mbox{\ensuremath{\llbracket{#1}\rrbracket_{#2}^{#3}}}}
\newcommand{\semTx}[2]{\sem[{#2}]{#1}}
\newcommand{\semBc}[2]{\sem[{#2}]{#1}}
\newcommand{\equivStSeq}[1][]{\simeq_{#1}}
\newcommand{\equivSeq}{\simeq}
\newcommand{\dom}[1]{\operatorname{dom} {#1}}
\newcommand{\keys}[1]{\dom{#1}}
\newcommand{\Nat}{\mathbb{N}}
\newcommand{\bind}[2]{\nicefrac{#2}{#1}}
\newcommand{\setenum}[1]{\{#1\}}
\newcommand{\setcomp}[2]{\left\{{#1} \,\middle|\, {#2}\right\}}
\newcommand{\emptymset}{[]}
\newcommand{\msetenum}[1]{\lbrack{#1}\rbrack}
\newcommand{\card}[1]{|#1|}
\newcommand{\seqat}[2]{{#1}.{#2}}
\crefname{lem}{lemma}{lemmas}
\Crefname{lem}{Lemma}{Lemmas}
\crefname{thm}{theorem}{theorems}
\Crefname{thm}{Theorem}{Theorems}
\crefname{appendix}{appendix}{appendices}
\Crefname{appendix}{Appendix}{Appendices}
\crefname{notation}{notation}{notations}
\Crefname{notation}{Notation}{Notations}
\definecolor{LightGrey}{rgb}{0.95,0.95,0.95}
\definecolor{keyword}{HTML}{7F0055}
\newlength\replength
\newcommand\repfrac{.1}
\newcommand\rulewidth{.6pt}
\newcommand\tdashfill[1][\repfrac]{\cleaders\hbox to \replength{%
  \smash{\rule[\arraystretch\ht\strutbox]{\repfrac\replength}{\rulewidth}}}\hfill}
\newcommand\tdotfill[1][\repfrac]{\cleaders\hbox to \replength{%
  \smash{\raisebox{\arraystretch\dimexpr\ht\strutbox-.1ex\relax}{.}}}\hfill}
\newcommand{\var}[2][]{#2_{#1}} 
\newcommand{\varX}[1][]{\var[#1]{x}} 
\newcommand{\varY}[1][]{\var[#1]{y}}
\def\funColor{\color{RubineRed}}
\newcommand{\funFmt}[1]{{\funColor{\mathtt{#1}}}}
\newcommand{\funF}[1][]{\funFmt{f}_{\funColor{#1}}} 
\newcommand{\funG}[1][]{\funFmt{g}_{\funColor{#1}}} 
\newcommand{\funH}[1][]{\funFmt{h}_{\funColor{#1}}}
\def\valColor{\color{magenta}}
\newcommand{\val}[2][]{{\valColor{#2_{#1}}}} 
\newcommand{\valV}[1][]{\val[#1]{v}}
\newcommand{\valVi}[1][]{\val[#1]{v'}}
\newcommand{\valN}[1][]{\val[#1]{n}}
\newcommand{\valNi}[1][]{\val[#1]{n'}}
\newcommand{\valK}[1][]{\val[#1]{k}}
\newcommand{\valKi}[1][]{\val[#1]{k'}}
\newcommand{\valX}[1][]{\val[#1]{x}}
\newcommand{\valY}[1][]{\val[#1]{y}}
\newcommand{\valZ}[1][]{\val[#1]{z}}
\newcommand{\obsA}[1][]{\qmvA[#1]}
\newcommand{\obsB}[1][]{\qmvB[#1]}
\def\cmdColor{\color{RubineRed}}
\newcommand{\cmdFmt}[1]{{\cmdColor{\mathit{#1}}}}
\newcommand{\cmdC}[1][]{\mathord{\cmdFmt{S}_{\cmdColor{#1}}}}
\newcommand{\cmdSkip}{\ensuremath{\code{skip}}}
\newcommand{\cmdAss}[2]{{#1} \code{:=} {#2}}
\newcommand{\cmdIfTE}[3]{\code{if}\, {#1}\, \code{then} \, {#2} \, \code{else} \, {#3}}
\newcommand{\cmdIfT}[2]{\code{if} \, {#1} \, \code{then} \, {#2}}
\newcommand{\cmdCall}[4][]{{#2}\ifempty{#3}{}{:{#3}({#4})}\ifempty{#1}{}{\dollar {#1}}}
\newcommand{\cmdSend}[2]{{#2}.\texttt{transfer}({#1}){}{}}
\newcommand{\cmdThrow}{\code{throw}\xspace}
\newcommand{\expGet}[2]{{#1}.{#2}} 
\newcommand{\expLookup}[1]{{#1}}
\newcommand{\lineno}[1]{{\tt\codefont {\textcolor{ForestGreen}{#1}}}}
\definecolor{LightGrey}{rgb}{0.95,0.95,0.95}
\definecolor{keyword}{HTML}{7F0055}
\newmdenv[linewidth=0pt]{mdNoFramed}
\newcommand*{\tabminted@finalstrut}[1]{%
  \ifdim\prevdepth>0pt
    \ifdim\dp#1>\prevdepth
      \vskip\dimexpr(\dp#1)-\prevdepth\relax
    \fi
  \else
    \vskip\dimexpr(\dp#1)\relax
  \fi
}
\newcommand*{\@tabmintedend}{%
  \let\@finalstrut\tabminted@finalstrut
}
\newcommand{\netN}{\sf{N}}
\newcommand{\Places}{{\sf P}}
\newcommand{\Transitions}{{\sf Tr}}
\newcommand{\Arcs}{{\sf{F}}}
\newcommand{\markM}[1][]{{\sf{m}_{#1}}}
\newcommand{\markMi}[1][]{{\sf{m}'_{#1}}}
\newcommand{\markMii}[1][]{{\sf{m}''_{#1}}}
\newcommand{\markMiii}[1][]{{\sf{m}}'''_{#1}}
\newcommand{\trT}[1][]{\mathsf{t}_{#1}}
\newcommand{\trTi}[1][]{{{\sf{t}}_{#1}'}}
\newcommand{\trTii}[1][]{{{\sf{t}}_{#1}''}}
\newcommand{\plP}[1][]{{\sf{p_{#1}}}}
\newcommand{\placeP}[1][]{{\sf{p_{#1}}}}
\newcommand{\placePi}[1][]{{\sf{p'_{#1}}}}
\newcommand{\PNet}[2]{\ifempty{#2}{{\sf N}_{{#1}}}{{\sf N}_{#1}(#2)}}
\newcommand{\trSU}[1][]{{\sf U}_{#1}}
\newcommand{\trSUi}[1][]{{\sf U'}_{#1}}
\newcommand{\trSUii}[1][]{{\sf U''}_{#1}}
\newcommand{\varBalance}{\ensuremath{\codeVal{balance}}\xspace}
\newcommand{\varSender}{\ensuremath{\codeVal{sender}}\xspace}
\newcommand{\varValue}{\ensuremath{\codeVal{value}}\xspace}
\newcommand{\ethtx}[5]{{#2}\xrightarrow{#1} {#3}:{#4}({#5})}
\newcommand{\contrFun}[3]{{#1}({#2}) \{ {#3} \}}
\newcommand{\contrFunSig}[2]{{#1}({#2})}
\newcommand{\swap}{\rightleftarrows}
\newcommand{\nswap}{\mbox{\ensuremath{\,\not\rightleftarrows\,}}}
\newcommand{\safeapprox}[3][]{{#2} \models^{#1} {#3}}
\newcommand{\wapprox}[2]{\safeapprox[w]{#1}{#2}}
\newcommand{\rapprox}[2]{\safeapprox[r]{#1}{#2}}
\newcommand{\pswap}[2]{\#^{#1}_{#2}}
\newcommand{\pswapWR}{\pswap{\wset{}}{\rset{}}}
\newcommand{\rset}[1]{\mathtt{R}\ifempty{#1}{}{({#1})}}
\newcommand{\wset}[1]{\mathtt{W}\ifempty{#1}{}{({#1})}}
\newcommand{\txsA}[1][]{{\txColor{\mathbb{T}}_{\txColor{#1}}}}
\newcommand{\txsAi}[1][]{{\txColor{\mathbb{T}'}_{\txColor{#1}}}}
\newcommand{\bcsA}[1][]{{\txColor{\mathbb{B}}_{\txColor{#1}}}}
\newcommand{\seqn}{\vartriangleleft}
\newcommand{\independent}{\;\mathrm{I}\;}
\newcommand{\txOfTr}[1]{\alpha\ifempty{#1}{}{({#1})}}
\newcommand{\trOfTrSU}[1]{{\it tr}\ifempty{#1}{}{({#1})}}
\newcommand{\mytitle}{A theory of transaction parallelism in blockchains}
\begin{document}

\title{\mytitle}

\author[M.~Bartoletti]{Massimo Bartoletti}
\address{University of Cagliari, Italy}
\email{bart@unica.it}

\author[L.~Galletta]{Letterio Galletta}
\address{IMT School for Advanced Studies, Lucca, Italy}	
\email{letterio.galletta@imtlucca.it}

\author[M. Murgia]{Maurizio Murgia}
\address{University of Trento, Italy}
\email{maurizio.murgia@unitn.it}

\maketitle

\begin{abstract}
  Decentralized blockchain platforms have enabled 
  the secure exchange of crypto-assets without the intermediation
  of trusted authorities.
  To this purpose, these platforms rely on a peer-to-peer network 
  of byzantine nodes, which collaboratively maintain
  an append-only ledger of transactions, called \emph{blockchain}.
  Transactions represent the actions required by users,
  \eg the transfer of some units of crypto-currency to another user,
  or the execution of a smart contract which distributes crypto-assets 
  according to its internal logic.
  Part of the nodes of the peer-to-peer network compete to
  append transactions to the blockchain.
  To do so, they group the transactions sent by users into \emph{blocks},
  and update their view of the blockchain state by 
  executing these transactions in the chosen order.
  Once a block of transactions is appended to the blockchain, 
  the other nodes validate it, 
  re-executing the transactions in the same order.
  The serial execution of transactions does not take advantage of 
  the multi-core architecture of modern processors, 
  so contributing to limit the throughput.
  In this paper we develop a theory of transaction parallelism for blockchains,
  which is based on static analysis of transactions and smart contracts.
  We illustrate how blockchain nodes can use our theory
  to parallelize the execution of transactions.
  Initial experiments on Ethereum show that our technique
  can improve the performance of nodes.
\end{abstract}

\section{Introduction}
\label{sec:intro}

Decentralized blockchain platforms like Bitcoin and Ethereum 
allow mutually untrusted users to create and exchange crypto-assets, 
without resorting to trusted intermediaries.
These exchanges can be either 
simple transfers of an asset from one user to another one,
or they can be the result of executing complex protocols, 
called \emph{smart contracts}.
All the actions performed by users
are recorded on a public data structure, called \emph{blockchain},
from which everyone can infer the amount of crypto-assets owned by each user.
The disintermediation stems from the fact that maintaining the blockchain
does not depend on trusted authorities: rather, this task 
is collaboratively performed by a peer-to-peer network, 
following a complex consensus protocol
which guarantees the consistency of the blockchain
also in the presence of (a minority of) adversaries in the network.

Users interact with the blockchain by sending \emph{transactions},
which may request direct transfers of crypto-assets, 
or invoke smart contracts which in turn trigger transfers
according to the programmed logic.
The sequence of transactions on the blockchain determines,
besides the balance of each user, the state of each smart contract.
The nodes of the peer-to-peer network process the transactions sent by users,
playing either the role of \emph{miner} or that of \emph{validator}.
Miners group transactions into blocks, 
execute them \emph{serially} to determine the new blockchain state,
and append blocks to the blockchain.
Validators read blocks, and re-execute their transactions 
to update their local view of the blockchain state.
To do this, validators process transactions exactly in the same order 
in which they occur in the block, 
since choosing a different order could potentially
result in inconsistencies between the nodes.

Executing transactions in a purely sequential fashion
is quite effective to ensure the consistency of the blockchain state, 
but in the age of multi-core processors 
it fails to properly exploit the computational capabilities of nodes.
By enabling miners and validators to concurrently execute transactions, 
it would be  possible to improve the efficiency and the throughput 
of the blockchain.
Although there exist a few works that address this problem
(we discuss them in~\Cref{sec:related} below),
their approach is eminently empirical,
and they are focussed only on Ethereum.
A comprehensive study of the theoretical foundations of transaction
parallelism in blockchains would improve the understanding of
these optimizations, and it would allow to extend them to other
blockchains beyond Ethereum.

\subsection{Contributions}

This paper exploits techniques from concurrency theory to 
provide a formal backbone for parallel execution of transactions
in blockchains.
More specifically, our main contributions can be summarised as follows:
\begin{itemize}

\item We introduce a general model of blockchain platforms, 
  parameterized over the observables and the semantics of transactions
  (\Cref{sec:transactions}).
  Building upon it, we define the semantics of a blockchain 
  by iterating the semantics of its transactions:
  this reflects the standard implementation of nodes,
  where transactions are evaluated in sequence, without any concurrency.
  We show that the two most widespread blockchain platforms,
  \ie Bitcoin and Ethereum, can be expressed as an instance 
  of this general model.

\item We introduce two notions of \emph{swappability} of transactions
  (\Cref{sec:txswap}).
  The first one is extensional: two adjacent transactions can be swapped
  if this preserves the blockchain state.
  The second notion --- \emph{strong} swappability --- is intensional:
  two adjacent transactions can be swapped is the static approximations 
  of their read/written observables satisfy a simple condition,
  inspired by Bernstein's conditions for the parallel execution of processes.
  Basically, these conditions require that the observables written
  by a transaction are not read or written by the other transaction.
  \Cref{th:pswap-implies-swap} shows that the strong swappability relation
  is included in the extensional relation.
  \Cref{th:pswapWR:mazurkiewicz} shows that, if we
  repeatedly exchange adjacent strongly swappable transactions,
  the resulting blockchain is observationally equivalent to the original one.

\item For Bitcoin, we show
  that the static approximations checked by the strong swappability 
  condition can be easily inferred by transactions:
  the least approximations of the written observables 
  are the transaction inputs and outputs,
  while those of the read observables are the transaction inputs
  (Lemma~\ref{lem:safeapprox:btc}).
  For Ethereum obtaining precise approximations is more complex, 
  because of its Turing-complete contract language.
  We discuss in~\Cref{sec:txswap:eth} a few tricky cases,
  and we report in~\Cref{sec:validation} our experience 
  with a novel tool to statically detect swappable Ethereum transactions.
  We further show that, for both Bitcoin and Ethereum,
  strong swappability is stricter then swappability
  (Examples~\ref{cex:pswap-implies-swap:btc} and~\ref{cex:pswap-implies-swap:eth}).

\item Building upon strong swappability, we devise
  a true concurrent model of transaction execution
  (\Cref{sec:txpar}).
  To this purpose, we transform a block of transactions into an
  \emph{occurrence net}, 
  describing exactly the partial order induced by the swappability relation.
  We model the concurrent executions of a blockchain
  in terms of the \emph{step firing sequences} 
  (\ie finite sequences of \emph{sets} of transitions)
  of the associated occurrence net.
  In Theorem~\ref{th:bc-to-pnet} we establish that 
  the concurrent executions are semantically equivalent to the serial one.

\item Finally, we describe how miners and validators can use our results to
  parallelize transactions, exploiting their multi-core architecture
  (\Cref{sec:validation}). 
  An initial experimental validation of our technique on Ethereum, 
  which exploits a novel static analyser of Ethereum bytecode,
  shows that there are margins to make it applicable in practice.

\end{itemize}

  \subsection{Overview of the approach: ERC-721 tokens}
\label{sec:erc721}

We illustrate the main elements of our theory by considering an archetypal Ethereum smart contract, which implements a ``non-fungible token''.
A non-fungible token represents a digital version of real-world assets, \eg access keys, pieces of arts, and 
serves as verifiable proof of authenticity and ownership within a blockchain network. 
This kind of contracts are quite relevant: 
currently, token transfers involve $\sim{50}\%$ of the 
transactions on the Ethereum blockchain~\cite{tokens},
with larger peaks due to popular contracts like Cryptokitties~\cite{Young17cointelegraph}.

We sketch the implementation of the \code{Token} contract (the full code is in the Appendix), 
using Solidity, the main high-level smart contract language in Ethereum.
This contract follows the standard ERC-721 interface~\cite{ERC721,Frowis19fc} and defines 
functions to transfer tokens between users, and to delegate their trade to other users.

In Ethereum, a smart contract is similar to an object in an object-oriented
language: it has an internal state, and a set of functions to manipulate it.
Users and contracts are identified by their \emph{addresses}.

The state of the contract \code{Token} is defined by the following mappings:
\begin{lstlisting}[language=solidity]
mapping(uint256 => address) owner;
mapping(uint256 => bool) exists;
mapping(address => uint256) balance;
mapping(address => mapping (address => bool)) operatorApprovals;
\end{lstlisting}

Tokens are uniquely identified by an integer value (of type \code{uint256}),
while users are identified by an \code{address}
(the address $0$ denotes a dummy owner).
The mapping \code{owner} associates tokens to their owners' addresses,
\code{exists} tells whether a token has been created or not,
and \code{balance} gives the number of tokens owned by each user.
The mapping \code{operatorApprovals} allows users to delegate 
the transfer of their tokens to third parties.

The following function \code{transferFrom} transfers a token from the owner to another user:
\begin{lstlisting}[language=solidity,numbersep=10pt,numbers=left]
function transferFrom(address from, address to, uint256 id) external {
  require (exists[id] && from==owner[id] && from!=to && to!=address(0));      
  if (from==msg.sender || operatorApprovals[from][msg.sender]) {
    owner[id] = to;
    balance[from] -= 1;
    balance[to] += 1;
  }
}
\end{lstlisting}

The \code{require} assertion at line~\lineno{2} rules out some undesirable cases,
\eg, if the token does not exist, 
or it is not owned by the \code{from} user,
or the user attempts to transfer the token to himself.
Once all these checks are passed, the transfer succeeds if 
the \code{sender} of the transaction owns the token, or if he has been delegated by the owner (line~\lineno{3}).
The mappings \code{owner} and \code{balance} are updated as expected
(lines~\lineno{4-6}).

The function \code{setApprovalForAll} delegates the transfers of all the tokens of the \code{sender} to the \code{operator} 
when the boolean \code{isApproved} is true, 
otherwise it revokes the delegation:
\begin{lstlisting}[language=solidity,numbersep=10pt]
function setApprovalForAll(address operator, bool isApproved) external {
  operatorApprovals[msg.sender][operator] = isApproved;
}
\end{lstlisting}

Users interact with contracts by sending transactions to the blockchain.
Transactions involve the execution of smart contract functions that may trigger contracts updates and transfer 
of crypto-currency from the caller to the callee.
For example, consider a user (with address) $\pmvA$, 
which owns two tokens identified by the integers $1$ and $2$,
and consider the following transactions:
\begin{align*}
  & \txT[1] = \ethtx{}{\pmvA}{\codeAddr{Token}}
    {\codeFun{transferFrom}}{\pmvA,\pmv{P},1} 
  \\
  & \txT[2] = \ethtx{}{\pmvA}{\codeAddr{Token}}
    {\codeFun{setApprovalForAll}}{\pmvB,\true} 
  \\
  & \txT[3] = \ethtx{}{\pmvB}{\codeAddr{Token}}
    {\codeFun{transferFrom}}{\pmvA,\pmv{Q},2}
  \\
  & \txT[4] = \ethtx{}{\pmv{P}}{\codeAddr{Token}}
    {\codeFun{transferFrom}}{\pmv{P},\pmvB,1}
\end{align*}

Intuitively, transaction $\txT[1]$ means that $\pmvA$ (the sender) 
calls the function $\codeFun{transferFrom}$ of the $\codeAddr{Token}$ contract
to transfer the ownership of token $1$ to user $\pmv{P}$.
Transaction $\txT[2]$ delegates user $\pmv{P}$ to manage $\pmvA$'s tokens.
Transaction $\txT[3]$ says that $\pmvB$ transfers token $2$ from $\pmvA$ to $\pmv{Q}$; 
$\txT[4]$ means that user $\pmv{P}$ transfers token $1$ to $\pmvB$.

Since each transaction modifies the internal state 
of the contract $\codeAddr{Token}$, the order in which a miner 
executes them is relevant.
For example, executing the sequence of transactions 
$\bcB = \txT[1] \txT[2] \txT[3] \txT[4]$ 
results in a state where $\pmvB$ owns token $1$, and $\pmv{Q}$ owns token $2$.  
It is easy to see that $\txT[3]$ can only succeed 
if executed after $\txT[2]$, because it depends on the fact that 
$\pmvB$ is delegated by $\pmvA$, 
\ie $\codeVal{operatorApprovals}[\pmvA][\pmvB]$ is $\codeVal{true}$.   
Therefore, to run in parallel the transactions of $\bcB$, 
a miner would need to find an execution schedule 
that does not affect the resulting state.
Our notion of \emph{swappability} formalizes this intuition: 
two transactions $\txT$ and $\txT'$ are swappable 
if they result in the same state, independently of their order
(Definition~\ref{def:swap}). 
For example, consider the transactions $\txT[1]$ and $\txT[2]$ above: 
regardless of whether $\txT[1]$ is executed before or after $\txT[2]$, after their execution we obtain a state where token $1$ is owned by $\pmv{P}$, 
and $\pmvB$ can act as delegate of $\pmvA$.
   
Clearly, the notion of swappability outlined above is undecidable 
whenever the contract language is Turing-equivalent, 
like in the case of Ethereum.
Therefore, swappability cannot be directly used by a miner 
to determine a parallel execution schedule.
We overcome this issue by resorting to a static analysis of the smart contract. 
The underlying idea is to derive a syntactic approximation of swappability, 
called \emph{strong swappability} (Definition~\ref{def:pswap}).
This captures the fact that two transactions $\txT$ and $\txT'$ depend and affect different portions of a contract state.
Thus, such transactions can be run in any order.
For example, the transactions $\txT[1]$ and $\txT[2]$ above depend on and modify different parts of the state of the $\codeAddr{Token}$ contract:
therefore, they are strongly swappable.

To detect if two transactions $\txT$ and $\txT'$ are strongly swappable (in symbols $\txT \pswap{}{} \txT'$), one needs to statically over-approximate 
the state variables that may be read and written during the execution 
of the called functions.
They are strongly swappable if the set of variables written by $\txT$ is disjoint from those written and read by $\txT'$ and vice versa. 
This ensures that their executions are not interfering with each other.

From the code of \code{transferFrom}, we see that
$\txT[1]$ updates the $\codeVal{owner}$ of token $1$
and the $\codeVal{balance}$ of addresses $\pmvA$ and $\pmv{P}$
(lines~\lineno{4-6}). 
The variables read by $\txT[1]$ are 
$\codeVal{exists[1]}$, $\codeVal{owner[1]}$ (line~\lineno{2}), 
$\codeVal{operatorApprovals[\pmvA][\pmvA]}$ (line~\lineno{3}), 
$\codeVal{balance[\pmvA]}$ (line~\lineno{5}), and
$\codeVal{balance[\pmv{P}]}$ (line~\lineno{6}).
Transaction $\txT[2]$ updates $\codeVal{operatorApprovals[\pmvA][\pmvB]}$.
Using the same reasoning for $\txT[3]$ and $\txT[4]$,
we obtain the following over-approximations of the 
state variables written/read by $\txT[1]$--$\txT[4]$
(we denote with $W^i$ and $R^i$ the variables written and read by $\txT[i]$, respectively):
\[
\begin{array}{rcll}
    R^1
    & = & \setenum{\codeVal{exists[1]}, \codeVal{owner[1]}, \codeVal{balance[\pmvA]}, \codeVal{balance[\pmv{P}]},\codeVal{operatorApprovals[\pmvA][\pmvA]}}
    \\
    W^1
    & = & \setenum{\codeVal{owner[1]}, \codeVal{balance[\pmvA]}, \codeVal{balance[\pmv{P}]}} 
    \\[4pt]
    R^2
    & = & \emptyset
    \\
    W^2
    & = & \setenum{\codeVal{operatorApprovals[\pmvA][\pmvB]}} 
    \\[4pt]
    R^3
    & = & \setenum{\codeVal{exists[2]}, \codeVal{owner[2]}, \codeVal{balance[\pmvA]}, \codeVal{balance[\pmv{Q}]}, \codeVal{operatorApprovals[\pmvA][\pmvB]}}
    \\
    W^3
    & = & \setenum{\codeVal{owner[2]}, \codeVal{balance[\pmvA]}, \codeVal{balance[\pmv{Q}]}} 
    \\[4pt]
    R^4
    & = & \setenum{\codeVal{exists[1]}, \codeVal{owner[1]}, \codeVal{balance[\pmv{P}]}, \codeVal{balance[\pmvB]}, 
    \codeVal{operatorApprovals[\pmv{P}][\pmv{P}]}}
    \\
    W^4
    & = & \setenum{\codeVal{owner[1]}, \codeVal{balance[\pmv{P}]}, \codeVal{balance[\pmv{B}]}} 
    \\
\end{array}
\]

By the approximations above, we have that $\txT[1] \pswap{}{} \txT[2]$, because $(R^1 \cup W^1) \cap W^2 = \emptyset = (R^2 \cup W^2)\cap W^1$. 
Similarly, it is straightforward to see that
$\txT[2] \pswap{}{} \txT[4]$ and $\txT[3] \pswap{}{} \txT[4]$,
while the other combinations are \emph{not} strongly swappable.

The strong swappability relation induces a partial order between transactions: 
this can be exploited by a blockchain node to choose a parallel execution schedule.
To do that, from a given sequence of transactions, 
we build an \emph{occurrence net}~\cite{Best87tcs}, 
a special kind of Petri net with no cycles and where places can hold 
at most 1 mark. 
This net encodes the partial order induced by the swappability relation and formalizes the concurrent semantics of transactions.
One of our main results is that 
any concurrent execution in the occurrence net is equivalent to the serial one
(see item~\ref{th:bc-to-pnet:maximal} of \Cref{th:bc-to-pnet}).

Consider again the the sequence of transactions $\bcB = \txT[1] \txT[2] \txT[3] \txT[4]$ above. 
The associated occurrence net is displayed in \Cref{fig:erc721:petri}.
Intuitively, each transaction in $\bcB$ corresponds, in the net, 
to a transition (rendered as a box), 
linked to two places (rendered as circles).
A transition can fire when all its incoming places contain the mark.
If two transactions are not strongly swappable, the corresponding 
transitions are linked through a place. 
In \Cref{fig:erc721:petri}, since $\txT[1]$ and $\txT[3]$ are not strongly swappable, the place between $\trT[1]$ and $\trT[3]$ ensures that $\trT[3]$ can be executed only after $\trT[1]$, so rendering the dependency implicitly defined in $\bcB$.
The same holds for $\txT[2]$ and $\txT[3]$, and for $\txT[1]$ and $\txT[4]$.  
Instead, transitions corresponding to strongly swappable transactions 
can be fired concurrently.
In our example, this is the case for $\trT[1]$ and $\trT[2]$ 
(since $\txT[1] \pswap{}{} \txT[2]$), as well as for 
$\trT[3]$ and $\trT[4]$ (since $\txT[3] \pswap{}{} \txT[4]$).

Although in our example we have considered the tricky case where
the sender and the receiver of tokens overlap, 
in practice this is a marginal case:
in Ethereum, the large majority of transactions in a block 
either involve distinct users, or invoke distinct ERC-721 interfaces.%
\footnote{%
  Although we are not aware of any work to support this claim,
  some empirical evidence can be obtained by inspecting
  the token-related transactions in \url{https://etherscan.io/tokentxns},
  which shows that this overlapping is a rare event in practice.}
Therefore, we expect that in practice 
the degree of concurrency of \code{transferFrom} transactions 
is higher than shown above.

\begin{figure}[t]
  \centering
  \begin{tikzpicture}[>=stealth',scale=0.65] 
    \tikzstyle{place}=[circle,thick,draw=black!75,minimum size=3mm]
    \tikzstyle{transition}=[rectangle,thick,draw=black!75,minimum size=5mm]
    \tikzstyle{edge}=[->,thick,draw=black!75]
    \node[place] (p*1) at (0,3) [label = above:{}] {};
    \node[place] (p*2) at (0,5) [label = above:{}] {};
    \node[place] (p*3) at (6,6.5) [label = above:{}] {};
    \node[place] (p*4) at (6,1.5) [label = above:{}] {};
    \node[place] (p1*) at (2,1.5) [label = above:{}] {};
    \node[place] (p2*) at (2,6.5) [label = above:{}] {};
    \node[place] (p3*) at (8,5) [label = above:{}] {};
    \node[place] (p4*) at (8,3) [label = above:{}] {};
    \node[place] (p13) at (4,4) [label = above:{}] {};
    \node[place] (p14) at (4,3) [label = above:{}] {};
    \node[place] (p23) at (4,5) [label = above:{}] {};
    \node[place,token = 1] (tokp*1) at (p*1) {};
    \node[place,token = 1] (tokp*1) at (p*2) {};
    \node[place,token = 1] (tokp*1) at (p*3) {};
    \node[place,token = 1] (tokp*1) at (p*4) {};
    \node[transition] (t1) at (2,3) {$\trT[1]$}
    edge[pre] node {} (p*1)
    edge[post] node {} (p1*)
    edge[post] node {} (p13)
    edge[post] node {} (p14);
    \node[transition] (t2) at (2,5) {$\trT[2]$}
    edge[pre] node {} (p*2)
    edge[post] node {} (p2*)
    edge[post] node {} (p23);
    \node[transition] (t3) at (6,5) {$\trT[3]$}
    edge[pre] node {} (p*3)
    edge[post] node {} (p3*)
    edge[pre] node {} (p13)
    edge[pre] node {} (p23);
    \node[transition] (t4) at (6,3) {$\trT[4]$}
    edge[pre] node {} (p*4)
    edge[post] node {} (p4*)
    edge[pre] node {} (p14);
  \end{tikzpicture}
  \caption{Occurrence net for $\bcB = \txT[1] \txT[2] \txT[3] \txT[4]$ of the ERC-721 token.}
  \label{fig:erc721:petri}
\end{figure}

  \subsection{Related work}
\label{sec:related}

A few works study how to optimize the execution of transactions
on Ethereum, using dynamic techniques adopted from software transactional 
memory.
In \cite{Dickerson17podc,Dickerson18eatcs}, miners execute
a set of transactions speculatively in parallel,
using abstract locks and inverse logs to dynamically discover conflicts
and to recover from inconsistent states.
The obtained execution is guaranteed to be equivalent to a serial
execution of the same set of transactions.
The work \cite{Anjana19pdp} proposes a conceptually similar technique, but
based on optimistic software transactional memory.
The work~\cite{Saraph19tokenomics} studies the effectiveness of 
speculatively executing smart contracts in Ethereum.
After sampling past blocks of transactions (from July 2016 to December 2017), 
the authors replay them by using a speculative execution engine, 
and measure the speedup obtained by parallel execution. 
The results show that simple speculative strategies 
are enough to obtain non-negligible speed-ups.
Another observation of~\cite{Saraph19tokenomics} 
is that many of the data conflicts 
(\ie concurrent read/write accesses to the same state location)
arise in periods of high traffic, and they are caused by a small number 
of popular contracts, like \eg ERC-20 and ERC-721 tokens.
The experiments in~\cite{Dickerson17podc} suggest that parallelizing
transaction execution may lead to a significant improvement 
of the performance of nodes:
the benchmarks on a selection of representative contracts
show an overall speedup of 1.33x for miners and 1.69x for validators,
using only three cores.

A main difference between these works and ours is that
they study empirical aspects of transaction parallelism
(\eg, the speedup obtained on a given benchmark),
while ours is more focussed on the theoretical counterpart.
Still, our theory is not intended to serve as a justification of 
the correctness of the above-mentioned approaches.
Actually, we follow a different path to transaction parallelism,
based on static analysis of transactions, rather than on speculative execution.
The reason for this divergence lies in the fact
that optimizations based on speculative execution of transactions
are not fully compatible with current blockchain platforms.
Indeed, since speculative execution is non-deterministic,
miners need to communicate the chosen schedule of transactions
to validators, which otherwise cannot correctly validate the block.
This schedule must be embedded in the mined block:
since current blockchains do not support this kind of block metadata,
implementing in practice these approaches would require 
a ``soft-fork'' of the blockchain.
Instead of performing dynamic checks, our approach relies
on a static analysis to detect potential conflicts.
Miners can use any static analysis to execute transactions in parallel;
once they have appended a block, validators just need to execute its
transactions, possibly exploiting another static analysis to parallelize
execution while preserving the semantics of the block.
In this way, our approach is compatible with any blockchain platform,
without requiring a soft-fork.

Our approach is based on static analyses of the variables read and written
by transactions.
Although the literature describes various static analyses of smart contracts, 
most of them are focussed on finding security vulnerabilities~\cite{Miller18isola},
and they do not produce the approximations needed for our purposes.
A few papers propose static analyses of read/written variables,
but they are not specifically targeted to Ethereum bytecode contracts.
The recent work~\cite{Pirlea21pldi} implements a static analysis  
that approximates the portion of a contract state affected by 
the execution of a transaction. 
This analysis is then exploited to evaluate the parallel execution 
of transactions over multiple \emph{shards}~\cite{Luu16sharding}.
Although the commutativity relation inferred by the 
analysis of~\cite{Pirlea21pldi} is similar of our swappability relation,
it is not directly usable on arbitrary Ethereum contracts,
since the analysis of~\cite{Pirlea21pldi} is targeted to contracts written in
the functional contract language Scilla~\cite{Sergey19pacmpl}.
Actually, the vast majority of transactions in Ethereum are sent to contracts
written in Solidity 
(see footnote~\ref{most-contracts-from-solidity} in Section~\ref{sec:validation}),
hence this assumption could undermine the applicability of the analysis 
of~\cite{Pirlea21pldi} in the wild.
The work~\cite{Dias11} describes an analysis based on separation logic,
and applies it to resolve conflicts in the setting of
snapshot isolation for transactional memory in Java.
When a conflict is detected, the read/write sets are used to determine how the
code can be modified to resolve it.
The work~\cite{CheremCG08} presents a static analysis 
of read and written locations in a C-like language with atomic sections,
and uses it to translate atomic sections into standard lock operations.
Designing precise static analyses for Solidity could perhaps take inspiration 
from these works.

In the permissioned setting, 
Hyperledger Fabric \cite{Androulaki18eurosys} 
natively supports transaction parallelism.
It follows the ``execute first and then order'' paradigm:
transactions are executed speculatively, 
and then their ordering is checked for correctness~\cite{Fabric19rw}.
In this paradigm, appending a transaction requires a few steps.
First, a client proposes a transaction to a set of ``endorsing'' peers, 
which simulate the transaction without updating the blockchain.
The output of the simulation includes 
the state updates of the transaction execution,
and the sets of read/written keys.
These sets are then signed by the endorsing peers, and returned to the client,
which submits them to the ``ordering'' peers. 
Ordering peers group transactions into blocks, 
and send them to the ``committing'' peers, which validate them. 
A block $\txT[1] \cdots \txT[n]$ is valid when
the keys read by transaction $\txT[i]$
are not written by a transaction $\txT[j]$ with $j<i$.
Finally, validated blocks are appended to the blockchain.

A preliminary version of this work was presented at 
COORDINATION 2020~\cite{BGM20coordination}.
The current version substantially extends it, 
generalising the theory to arbitrary blockchain platforms
(while~\cite{BGM20coordination} is focussed only on Ethereum).
Besides making it possible to extend our results to blockchains beyond Ethereum,
this generalization has allowed us to refine some of our results
to UTXO-based blockchains like Bitcoin.
The current work also contains the complete technical machinery,
including proofs, of our theory,
and the experimental validation of our optimization technique on Ethereum.

\section{An abstract model of blockchains}
\label{sec:transactions}

In this~\namecref{sec:transactions} we introduce a general model
of blockchain platforms, abstracting from the actual form of transactions,
from the language to write smart contracts,
and from the fact that transactions are grouped into blocks.
We then show how to instantiate this model
to the two most widespread blockchain platforms, \ie Bitcoin and Ethereum.

\begin{defi}[{Blockchain platform}]
  A \emph{blockchain platform} is a tuple 
  \mbox{$(\Tx,\Obs,\Val,\BcSt,\bcStInit,\semBc{\cdot}{})$}, where:
  \begin{itemize}
    \item $\Tx$ is a set of \emph{transactions}
      (ranged over by $\txT, \txTi, \ldots$);
    \item $\Obs$ is a set of \emph{observables}
      (ranged over by $\obsA, \obsB, \ldots$);
    \item $\Val$ is a set of \emph{values}
      (ranged over by $\valV, \valVi, \ldots$);
    \item $\BcSt \subseteq \Obs \rightharpoonup \Val$ is a set of \emph{valid blockchain states}
      (ranged over by $\bcSt, \bcSti, \ldots$);
    \item $\bcStInit \in \BcSt$ is the \emph{initial state};
    \item $\semBc{\cdot}{} \in  \BcSt \times \Tx \rightarrow \BcSt$ is the \emph{state transition function} (we write $\semBc{\txT}{\bcSt}$ for $\semBc{(\bcSt,\txT)}{}$).
  \end{itemize}
\end{defi}

The set of all finite sequences of transactions is denoted by $\Tx^*$,
and the empty sequence is denoted by $\bcEmpty$.
The set $\Tx^*$ with the concatenation operator 
and the neutral element $\bcEmpty$ is a monoid, 
referred to as the \emph{free monoid over $\Tx$}.
A \emph{blockchain} $\bcB$ is an element of $\Tx^*$.
The semantics of a blockchain $\bcB$ starting from a state $\bcSt$,
denoted as $\semBc{\bcB}{\bcSt}$, is obtained by iterating 
the semantics of its transactions:
\[
\semBc{\bcEmpty}{\bcSt} = \bcSt
\qquad\qquad
\semBc{\txT \bcB}{\bcSt} = \semBc{\bcB}{\bcSti}
\quad \text{where } \bcSti = \semTx{\txT}{\bcSt}
\]
We write $\semBc{\bcB}{}$ for $\semBc{\bcB}{\bcStInit}$,
where $\bcStInit$ is the initial state.
We say that a blockchain state $\bcSt$ is \emph{reachable}
if $\bcSt = \semBc{\bcB}{}$ for some $\bcB$.

A \emph{state update} $\mSubst : \Obs \mapstopart \Val$ is a function
which defines how values associated with observables are modified.
We denote with $\setenum{\bind{\qmvA}{\valV}}$
the state update which maps the observable $\qmvA$ to the value~$\valV$.
Given a blockchain state $\bcSt$ and a state update $\mSubst$,
applying $\mSubst$ to $\bcSt$ results in a blockchain state $\bcSt \mSubst$
such that, for all observables $\qmvA$:
\[
(\bcSt \mSubst) \qmvA 
\; = \;
\begin{cases}
  \mSubst \qmvA & \text{if $\qmvA \in \dom{\mSubst}$} \\
  \bcSt \qmvA & \text{otherwise}
\end{cases}
\]
We use $\QmvA,\QmvB,\hdots$ to range over sets of observables.

  \subsection{Bitcoin}
\label{sec:blockchain:btc}

Bitcoin~\cite{bitcoin} is the first crypto-currency based on 
a decentralized ledger. 
Its mechanism to transfer currency (the bitcoin, $\BTC$) 
is based on the \emph{Unspent Transaction Output} (\emph{UTXO}) model.
This means that each transaction spends the outputs generated 
by one or more previous transactions, 
and it creates new outputs, that can be spent by later transactions
according to programmable redeem conditions.
This model contrasts with the so-called \emph{account-based} model,
implemented, \eg, by Ethereum,
where transactions update a global state, 
recording the amount of crypto-currency in each account,
and updating the state of smart contracts.
By contrast, in Bitcoin the state is given by the 
unspent transactions outputs, which
represent either $\BTC$ deposits redeemable by users
or the state of smart contracts~\cite{bitcoinsok}.
Although the language for specifying redeem conditions is quite basic,
complex smart contracts can be crafted by suitably chaining transactions
\cite{BZ18bitml}.

We now formalise the basic functionality of Bitcoin
within our general blockchain model,
simplifying or omitting the parts that are irrelevant for our subsequent technical development.

\paragraph*{Transactions}

Bitcoin transactions are records with the following fields:
\begin{itemize}
\item $\txOut{}$ is the list of \emph{outputs}.  
  Each output is a record of the form $\{ \txscript: \expe, \txval: \valV \}$, 
  where $\expe$ is a \emph{script}, and $\valV \geq 0$
  is the amount of bitcoins stored in the output.
  Intuitively, a later transaction can \emph{spend} the bitcoins 
  stored in a transaction output
  by providing a \emph{witness} which satisfies its script.
\item $\txIn{}$ is the list of \emph{inputs}. 
  Each input is a pair $(\txT,i)$, 
  meaning that the transaction wants to spend 
  the $i$-th output of the transaction $\txT$;
\item $\txWit{}$ is the list of \emph{witnesses},
  of the same length as $\txIn{}$.
  Intuitively, if the $j$-th input is $(\txT,i)$, 
  then the $j$-th witness must make the $i$-th script 
  of $\txT$ evaluate to true.%
  \footnote{Bitcoin transactions can also impose time constraints on when they can be appended to the blockchain, or when they can be spent. 
    Since our theory is applied to parallelize transactions within the \emph{same} block,
    hence satisfying exactly the same time constraints, we omit them.}
\end{itemize} 

We let $\txf$ range over transaction fields, 
and we denote with $\txT.\txf$ the content of field $\txf$ of transaction $\txT$.
We write $\txT.\txf(i)$ for the $i$-th element of the sequence $\txT.\txf$, when in range.
We interchangeably use the notation
$(\txT,i)$ and $\txT.\txOut[]{}(i)$ for \emph{transaction outputs}.
We use $\pmvA, \pmvB, \ldots$ to range over users,
and, we just write the name $\pmvA$ of a user in place
of her public/private keys, \eg we write $\versig{\pmvA}{\expe}$ for $\versig{\constPK[\pmvA]}{\expe}$,
and $\sig{\pmvA}{\txT}$ for $\sig{\constSK[\pmvA]}{\txT}$.

\label{sec:bitcoin-scripts}

Bitcoin scripts are small programs written in a non-Turing equivalent language.
Following~\cite{bitcointxm}, 
we model them as terms with the following syntax:
\begin{align*}
  \expe 
  \; \bnfdef \;
  & \valV  
  && \text{constant (integer or bitstring)}
  \\[-1pt]
  \bnfmid 
  & \expe \circ \expe 
  && \text{operators $(\circ \in \{+, -, =, <\})$}
  \\[-1pt]
  \bnfmid 
  & \ifE{\expe}{\expe}{\expe}  
  && \text{conditional}
  \\[-1pt]
  \bnfmid 
  & \seqat{\expe}{n}
  && \text{$n$-th element of sequence $\expe$ ($n \in \Nat$)}
  \\[-1pt]
  \bnfmid
  & \rtx.\txWit{}
  && \text{witnesses of the redeeming tx}
  \\[-1pt]
  \bnfmid
  & \sizeE{\expe} 
  && \text{size (number of bytes)}
  \\[-1pt]
  \bnfmid
  & \hashE{\expe} 
  && \text{hash}
  \\[-1pt]
  \bnfmid
  & \versig{\expe}{\expei}
  && \text{signature verification}
\end{align*}

Besides constants $\valV$, basic arithmetic/logical operators, and conditionals,  
scripts can access the elements of a sequence ($\seqat{\expe}{n}$),
and the sequence of witnesses of the redeeming transaction ($\rtx.\txWit{}$);
further, they can compute the size $\sizeE{\expe}$ of a bitstring
and its hash $\hashE{\expe}$.
The script $\versig{\expe}{\expei}$
evaluates to $1$ if the signature resulting from the evaluation of $\expei$
is verified against the public key resulting from the evaluation of~$\expe$,
and $0$ otherwise.
For all signatures, the signed message is the redeeming transaction (except its witnesses).%

The evaluation of scripts is defined as a function $\sem[\txT,i]{\cdot}$,
which takes two additional parameters (used for signature verification):
the redeeming transaction $\txT$, and
the index $i$ of the redeeming input/witness.
The result of the semantics can be an integer or a bitstring.
The rules of the semantics are standard:
we refer to~\cite{bitcointxm} for a formalization.

\begin{exa}
  \label{ex:blockchain-btc:1}
  Consider a transaction of the form:
  \begin{nscenter}
    \small
    \begin{tabular}[t]{|l|}
      \hline
      \multicolumn{1}{|c|}{$\txT[0]$} \\
      \hline
      \txIn[1]{$\emptyseq$} \\
      \txWit[1]{$\emptyseq$} \\
      \txOut[1]{$\{ \txscript: \versig{\pmvA}{\rtxWit}, \txval: 80 \BTC \}$} \\
      \txOut[2]{$\{ \txscript: \versig{\pmvB}{\rtxWit}, \txval: 20 \BTC \}$} \\
      \hline
    \end{tabular}
  \end{nscenter}

  The $\txIn{}$ and $\txWit{}$ fields are empty, 
  making $\txT[0]$ a \emph{coinbase} transaction
  (\ie, the first transaction in the blockchain).
  This transaction has two outputs:
  $(\txT[0],1)$ allows $\pmvA$ to redeem $80 \BTC$,
  while $(\txT[0],2)$ allows $\pmvB$ to redeem $20 \BTC$.
  Assume that $(\txT[0],1)$ is unspent, 
  and that $\pmvA$ wants to transfer $10 \BTC$ to $\pmvB$, 
  and keep the remaining $70 \BTC$.
  To do this, $\pmvA$ can append to the blockchain a new transaction, \eg:
  \begin{nscenter}
    \small
    \begin{tabular}[t]{|l|}
      \hline
      \multicolumn{1}{|c|}{$\txT[1]$} \\
      \hline
      \txIn[1]{$(\txT[0],1)$} \\
      \txWit[1]{$\sig{\pmvA}{\txT[1]}$} \\
      \txOut[1]{$\{ \txscript: \versig{\pmvA}{\rtxWit}, \txval: 70 \BTC \}$} \\
      \txOut[2]{$\{ \txscript: \versig{\pmvB}{\rtxWit}, \txval: 10 \BTC \}$} \\
      \hline
    \end{tabular}
  \end{nscenter}

  The $\txIn{}$ field points to the first output of $\txT[0]$,
  and the $\txWit{}$ field contains $\pmvA$'s signature on $\txT[1]$ 
  (but for the $\txWit{}$ field itself).
  This witness makes the script $(\txT[0],1).\txscript$ evaluate to true,
  hence the redemption succeeds, and the output $(\txT[0],1)$ is \emph{spent}.

  Assume now that the outputs $(\txT[0],2)$ and $(\txT[1],2)$ are unspent.
  Participant $\pmvB$ can spend both of them by appending a new transaction 
  $\txT[2]$ to the blockchain:
  \begin{nscenter}
    \small
    \begin{tabular}[t]{|l|}
      \hline
      \multicolumn{1}{|c|}{$\txT[2]$} \\
      \hline
      \txIn[1]{$(\txT[0],2)$} \hspace{50pt}
      \txIn[2]{$(\txT[1],2)$} \\
      \txWit[1]{$\sig{\pmvB}{\txT[2]}$} \hspace{35pt}
      \txWit[2]{$\sig{\pmvB}{\txT[2]}$} \\
      \txOut[1]{$\{ \txscript: \hashE{\rtxWit} = 51, \txval: 30\BTC \}$} \\
      \hline
    \end{tabular}
  \end{nscenter}

  In this case, the recipient of the $30 \BTC$ is not explicitly 
  specified by the script $(\txT[2],1).\txscript$: 
  actually, any transaction which provides as witness a preimage of 
  51 can spend that output.
\end{exa}

\paragraph*{Blockchain states}

We define observables as transaction outputs $(\txT,i)$, 
and the set of values as $\Val = \setenum{0,1}$.
In this way, blockchain states are partial functions 
$\bcSt \in \Obs \rightharpoonup \setenum{0,1}$,
modelling the set of \emph{unspent transaction outputs} ($\utxo{}$).
We denote with $U_{\bcSt}$ the set whose characteristic function is $\bcSt$,
\ie $U_{\bcSt} = \sigma^{-1}\setenum{1}$.
Hereafter, when not ambiguous we treat $\txT.\txIn{}$ as a set,
rather than as a sequence, and we write $\txT.\txOut{}$ for the set of pairs
$\setenum{(\txT,1), \ldots, (\txT,n)}$, where $n = |\txT.\txOut{}|$.
The initial blockchain state is the UTXO $\setenum{\txT[0].\txOut{}}$,
where $\txT[0]$ is a coinbase transaction (\ie, $\txT[0].\txIn{} = \emptyseq$).

\paragraph*{State transitions}

We start by defining when a transaction is valid in a blockchain state.

\begin{defi}[{Valid Bitcoin transactions}]
  \label{def:blockchain-btc:valid}
  We say that a transaction $\txT$ is \emph{valid} in a blockchain state $\bcSt$ 
  (in symbols, $\bcSt \valid \txT$) when the following conditions hold:
  \begin{enumerate}
  \item \label{consistent-update:unspent}
    $(\txTi,j) \in U_{\bcSt}$,
    for each $(\txTi,j)$ in $\txT.\txIn{}$
    \smallskip
  \item \label{consistent-update:output} 
    $\sem[{\txT},i]{(\txTi,j).\txscript} = \valV \neq 0$,
    for each $(\txTi,j)$ in $\txT.\txIn{}$
    \smallskip
  \item \label{consistent-update:value}
    \(
    \sum_{\obsA \in \txT.\txIn{}} \obsA.\txval
    \; \geq \;
    \sum_{\obsB \in \txT.\txOut{}} \obsB.\txval
    \)
  \end{enumerate}

  \smallskip\noindent
  We say that $\txT$ is \emph{consistent} when there exists some
  $\bcSt$ such that $\bcSt \valid \txT$.
\end{defi}

Condition~\eqref{consistent-update:unspent} requires that all the inputs
of $\txT$ are unspent in $\bcSt$;
condition~\eqref{consistent-update:output} asks that all the scripts
referred to by $\txT.\txIn{}$ evaluate to true, using the witnesses
in $\txT.\txWit{}$;
condition~\eqref{consistent-update:value} asks that
the value of the inputs of $\txT$ is greater or equal
to the value of its outputs.

We now define the state transition function of Bitcoin as 
$\sem[\bcSt]{\txT} = \bcSti$, where:
\[
U_{\bcSti} 
\; = \;
\begin{cases}
  (U_{\bcSt} \setminus \txT.\txIn{}) \cup \txT.\txOut{} & \text{if $\bcSt \valid \txT$} \\
  U_{\bcSt} & \text{otherwise}
\end{cases}
\]
We extend validity to blockchains by passing through their semantics, 
\ie we write $\bcB \valid \txT$ when $\sem{\bcB} \valid \txT$.
Further, we write $\bcSt \valid \txT[1] \cdots \txT[n]$ iff
$\sem[\bcSt]{\txT[1] \cdots \txT[i-1]} \valid \txT[i]$,
for all $i \leq n$.%
\footnote{The Bitcoin consensus protocol ensures that 
  each transaction $\txT[i]$ in the blockchain is valid
  with respect to the sequence of past transactions $\txT[0] \cdots \txT[i-1]$.
  Since our model requires the state transition function to be total, 
  we make the operation of appending invalid transactions idempotent.}

\begin{exa}
  \label{ex:blockchain-btc:2}
  Recall the transactions $\txT[0]$, $\txT[1]$, $\txT[2]$ 
  from Example~\ref{ex:blockchain-btc:1},
  and let $U_{\bcSt[0]} = \setenum{\txT[0].\txOut{}}$.
  We have that $\bcSt[0] \valid \txT[1] \txT[2]$,
  and $\semBc{\txT[1] \txT[2]}{\bcSt[0]}$
  is the UTXO $\setenum{(\txT[1],1), (\txT[2],1)}$.
\end{exa}

  \subsection{Ethereum}

Ethereum~\cite{ethereum} is one of the most used platforms for smart contracts: 
it actually implements a decentralized virtual machine 
that runs contracts written in a Turing-complete bytecode language,
called EVM~\cite{EVMdoc}.
Abstractly, an Ethereum  contract is similar to an object in an object-oriented
language: 
it has an internal state, and a set of functions to manipulate it.
Each contract controls an amount of crypto-currency (the \ether), 
that it can exchange with other users and contracts.
Transactions trigger contracts updates, 
which may possibly involve a transfer of 
crypto-currency from the caller to the callee.

Users and contracts are identified by their \emph{addresses}.
We use $\cmvA, \cmvB, \ldots$ to range over contract addresses,
and $\funF, \funG, \ldots$ for contract functions.
We denote with $\Addr$ the set of all \emph{addresses} $\amvA,\amvB,\ldots$,
including both user and contract addresses.

\paragraph{Transactions}

Ethereum transactions are terms of the form:
\[
\ethtx{\valN}{\pmvA}{\amvA}{\funF}{\vec{\valV}}
\]
where $\pmvA$ is the address of the caller,
$\amvA$ is the address of the called contract or user,
$\funF$ is the called function,
$n$ is the amount of $\ether$ transferred from $\pmvA$ to $\amvA$, and
$\vec{\valV}$ is the sequence of actual parameters.
A contract has a finite set of functions, \ie terms of the form
$\contrFun{\funF}{\vec{\varX}}{\cmdC}$,
where $\funF$ is a function name,
$\vec{\varX}$ is the sequence of formal parameters (omitted when empty),
and $\cmdC$ is the function body.
The functions in a contract have distinct names.
We denote with $\addrToContr{\amvA}$ the contract at address $\amvA$.
We abstract from the actual syntax of $\cmdC$,
and we just assume that the semantics of function bodies is defined
(see \eg \cite{BGM19cbt,Crafa19wtsc,Jiao20sp} 
for concrete instances of syntax and semantics of function bodies).
For uniformity, we assume that user addresses are associated with a 
contract having exactly one function, which just skips.
In this way, the statement $\cmdSend{n}{\pmvA}$,
which transfers $n$ currency units to user $\pmvA$,
can be rendered as a call to this function.

\paragraph{Blockchain states}

Each Ethereum contract has a key-value store, 
rendered as a partial function $\Val \mapstopart \Val$
from values to values.
The elements in the domain of this function are also called \emph{keys}.
The set of values $\Val$ includes basic types, \eg integers and strings.
An observable is a term of the form $\amvA.\valK$,
\ie a key in the key-value store at a given address.
The possible blockchain states are the partial functions 
$\bcSt \in \Obs \rightharpoonup \Val$ such that:
\begin{itemize}
\item for all addresses $\amvA$, $\bcSt \, \amvA.\varBalance$ is defined;
\item for all user addresses $\pmvA$,
  $\bcSt \, \pmvA.\valK$ is defined iff $\valK = \varBalance$.
\end{itemize}
The second constraint allows for a uniform treatment of users and contracts.
The initial state $\bcStInit$ maps
each address $\amvA$ to a balance $\balanceInit{\amvA} \geq 0$,
while all the other keys are unbound.

\paragraph{State transitions}

Let $\Const$ be a set  of \emph{constant names} $\varX, \varY, \ldots$.
We denote with $\semCmd{\cmdC}{\bcSt,\bcEnv}{\amvA}$ the semantics of 
the statement $\cmdC$.
This semantics is either a blockchain state $\bcSti$, 
or it is undefined (denoted by $\bot$).
The semantics is parameterised over a state $\bcSt$, an address $\amvA$
(the contract wherein $\cmdC$ is evaluated),
and an \emph{environment} $\bcEnv : \Const \mapstopart \Val$,
used to evaluate the formal parameters
and the special names $\varSender$ and $\varValue$.
These names represent, respectively, the caller of the function,
and the amount of \ether transferred along with the call.
We postulate that \varSender and \varValue 
are not used as formal parameters.

We define the auxiliary operators $+$ and $-$ on blockchain states as follows:
\[
\bcSt \circ (\amvA:\valN)
\; = \;
\bcSt\setenum{\bind{\amvA.\varBalance}{(\bcSt \amvA.\varBalance) \,\circ\, \valN}}
\tag*{($\circ \in \setenum{+,-}$)}
\]
\ie, $\bcSt + \amvA:\valN$ updates $\bcSt$ 
by increasing the $\varBalance$ of $\amvA$ of $\valN$ currency units.

\begin{defi}[{Valid Ethereum transactions}]
  \label{def:blockchain-eth:valid}
  A transaction $\txT = \ethtx{\valN}{\pmvA}{\amvA}{\funF}{\vec{\valV}}$ 
  is \emph{valid} in a blockchain state $\bcSt$ 
  (in symbols, $\bcSt \valid \txT$) when the following conditions hold:
  \begin{enumerate}
  \item \label{cond:blockchain-eth:valid:balance}
    $\bcSt \, \pmvA.\varBalance \geq \valN$
  \item \label{cond:blockchain-eth:valid:sem}
    if $\contrFun{\funF}{\vec{\varX}}{\cmdC} \in \addrToContr{\amvA}$, 
    then $\semCmd{\cmdC}{\bcSt-\pmvA:\valN+\amvA:\valN,\, \setenum{\bind{\varSender}{\pmvA},\bind{\varValue}{\valN},\bind{\vec{\varX}}{\vec{\valV}}}}{\amvA} \neq \bot$
  \end{enumerate}
  \smallskip\noindent
  We say that $\txT$ is \emph{consistent} when there exists
  $\bcSt$ such that $\bcSt \valid \txT$.
\end{defi}

Condition~\eqref{cond:blockchain-eth:valid:balance} requires that
$\pmvA$'s balance is sufficient to transfer $\valN$ \ether to $\amvA$;
condition~\eqref{cond:blockchain-eth:valid:sem} asks that
the function call terminates in a non-error state.

We define the semantics of a transaction 
in a blockchain state $\bcSt$ as follows:
\[
\semTx{\ethtx{\valN}{\pmvA}{\amvA}{\funF}{\vec{\valV}}}{\bcSt} 
= \begin{cases}
  \semCmd{\cmdC}{\bcSt-\pmvA:\valN+\amvA:\valN,\, \setenum{\bind{\varSender}{\pmvA},\bind{\varValue}{\valN},\bind{\vec{\varX}}{\vec{\valV}}}}{\amvA} 
  & \text{if $\bcSt \valid \txT$ and $\contrFun{\funF}{\vec{\varX}}{\cmdC} \in \addrToContr{\amvA}$} \\
  \bcSt & \text{otherwise}
\end{cases}
\]
If the transaction is valid, the updated state is the one resulting
from the execution of the function call.
Note that $\valN$ units of currency are transferred to $\amvA$
\emph{before} starting to execute $\funF$,
and that the names $\varSender$ and $\varValue$ are bound, respectively,
to $\pmvA$ and $\valN$.
If the transaction is not valid, 
\ie $\pmvA$'s balance is not enough or the execution of $\funF$ fails,
then the transaction does not alter the blockchain state.
Invalid transactions can actually occur in the Ethereum blockchain,
but they have no effect on the state of contracts: 
so, our semantics makes them identities \wrt the append operation.

\begin{exa}
  \label{ex:th-sem}
  Consider a contract at address $\cmvA$ which includes the following functions:
  \[
  \contrFun{\funF[0]}{}{\cmdAss{\valX}{1}}
  \qquad
  \contrFun{\funF[1]}{}{\cmdIfT{\valX=0}{\cmdSend{1}{\pmvB}}}
  \qquad
  \contrFun{\funF[2]}{y}{\cmdSend{\varValue}{y}}
  \]
  The first function only sets the value of the key $\valX$ to $1$; 
  the second one transfers a unit of \ether to address $\pmvB$ when 
  $\valX$ is $0$; 
  the last one always sends a unit of \ether to $y$. 
  Consider a blockchain $\bcB = \txT[0]\txT[1]\txT[2]$ where:
  \[
  \txT[0] = \ethtx{0}{\pmvA}{\cmvA}{\funF[0]}{}
  \hspace{30pt}
  \txT[1] = \ethtx{1}{\pmvA}{\cmvA}{\funF[1]}{}
  \hspace{30pt}
  \txT[2] = \ethtx{1}{\pmvA}{\cmvA}{\funF[2]}{\pmvB}
  \]
  Let $\bcStInit$ be a state such that $\bcStInit \, \pmvA.\varBalance \geq 2$.
  The semantics of $\bcB$ in $\bcStInit$ is: 
  \[
  \semBc{\bcB}{\bcStInit} = \bcStInit\setenum{\bind{\cmvA.\varX}{1}} - \pmvA:2 + \pmvB:1 + \cmvA:1
  \]
  where the semantics of the single transactions is:
  \begin{align*}
    \semTx{\txT[0]}{\bcStInit} 
    & = \semCmd{\cmdAss{\valX}{1}}{\bcStInit,\,
      \setenum{\bind{\varSender}{\pmvA},\bind{\varValue}{0}}}{\cmvA} = 
      \bcStInit\setenum{\bind{\cmvA.\varX}{1}} = \bcSt[1]
    \\
    \semTx{\txT[1]}{\bcSt[1]} 
    &  = \semCmd{\cmdIfT{\valX=0}{\cmdSend{1}{\pmvB}}}{\bcSt[1] - \pmvA:1 + \cmvA:1,\,
      \setenum{\bind{\varSender}{\pmvA},\bind{\varValue}{1}}}{\cmvA} 
      = \bcSt[1] - \pmvA:1 + \cmvA:1 = \bcSt[2]
    \\
    \semTx{\txT[2]}{\bcSt[2]} 
    &  = \semCmd{\cmdSend{1}{y}}{\bcSt[2] - \pmvA:1 + \cmvA:1,\,
      \setenum{\bind{y}{\pmvB},\bind{\varSender}{\pmvA},\bind{\varValue}{1}}}{\cmvA} = \bcSt[2] - \pmvA:1 + \pmvB:1
  \end{align*}
\end{exa}

\section{Swapping transactions}
\label{sec:txswap}

We define two blockchain states to be \emph{observationally equivalent}
when they agree on the values associated to all observables.
The actual definition of equivalence is a bit more general, 
allowing us to restrict the set $\QmvA$ of observables over which 
we require the agreement.

\begin{defi}[{Observational equivalence}]
  \label{def:sim}
  For all $\QmvA \subseteq \Obs$,
  we define 
  \mbox{$\bcSt \sim_{\QmvA} \bcSti$}
  iff 
  $\forall \qmvA \in \QmvA: \bcSt \qmvA = \bcSti \qmvA$.
  We say that $\bcSt$ and $\bcSti$ are \emph{observationally equivalent},
  in symbols $\bcSt \sim \bcSti$,
  when $\bcSt \sim_{\QmvA} \bcSti$ holds for all $\QmvA$.
\end{defi}

The following lemma ensures that our notion of observational equivalence 
is an equivalence relation, 
and that it is preserved when we restrict the set of observables: 

\begin{restatable}{lem}{lemsimequiv}
  \label{lem:sim:equiv}
  For all $\QmvA,\QmvB \subseteq \Obs$:
  \begin{inlinelist}
  \item \label{lem:sim:equiv:equiv}
    $\sim_{\QmvA}$ is an equivalence relation;
  \item \label{lem:sim:equiv:subseteq}
    if $\bcSt \sim_{\QmvA} \bcSti$ and $\QmvB \subseteq \QmvA$,
  then $\bcSt \sim_{\QmvB} \bcSti$;
  \item \label{lem:sim:equiv:sim}
    $\sim \, = \, \sim_{\Obs}$.
  \end{inlinelist}
\end{restatable}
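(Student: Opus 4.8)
The plan is to prove the three items essentially by unfolding Definition~\ref{def:sim} and reasoning pointwise over observables. All three facts are elementary, so the proof is a routine verification; the only mild subtlety is item~\ref{lem:sim:equiv:sim}, where I must be careful about the quantifier ``for all $\QmvA$'' in the definition of $\sim$.

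\begin{proof}
  \emph{Item~\ref{lem:sim:equiv:equiv}.}
  Fix $\QmvA \subseteq \Obs$. Reflexivity and symmetry of $\sim_{\QmvA}$ are immediate,
  since equality of values $\bcSt \qmvA = \bcSti \qmvA$ is reflexive and symmetric for each $\qmvA \in \QmvA$.
  For transitivity, suppose $\bcSt \sim_{\QmvA} \bcSti$ and $\bcSti \sim_{\QmvA} \bcStii$.
  Then for every $\qmvA \in \QmvA$ we have $\bcSt \qmvA = \bcSti \qmvA$ and $\bcSti \qmvA = \bcStii \qmvA$,
  hence $\bcSt \qmvA = \bcStii \qmvA$ by transitivity of equality. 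Thus $\bcSt \sim_{\QmvA} \bcStii$.

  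\smallskip\noindent
  \emph{Item~\ref{lem:sim:equiv:subseteq}.}
  Assume $\bcSt \sim_{\QmvA} \bcSti$ and $\QmvB \subseteq \QmvA$.
  Take any $\qmvA \in \QmvB$. Since $\QmvB \subseteq \QmvA$, we have $\qmvA \in \QmvA$,
  and therefore $\bcSt \qmvA = \bcSti \qmvA$ by the hypothesis. As $\qmvA$ was arbitrary in $\QmvB$,
  we conclude $\bcSt \sim_{\QmvB} \bcSti$.

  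\smallskip\noindent
  \emph{Item~\ref{lem:sim:equiv:sim}.}
  We show the two inclusions. If $\bcSt \sim \bcSti$, then by Definition~\ref{def:sim}
  $\bcSt \sim_{\QmvA} \bcSti$ holds for \emph{all} $\QmvA \subseteq \Obs$;
  in particular it holds for $\QmvA = \Obs$, so $\bcSt \sim_{\Obs} \bcSti$.
  Conversely, suppose $\bcSt \sim_{\Obs} \bcSti$. For an arbitrary $\QmvA \subseteq \Obs$ we have
  $\QmvA \subseteq \Obs$, so item~\ref{lem:sim:equiv:subseteq} applied with $\QmvA$ in the role of
  $\QmvB$ and $\Obs$ in the role of $\QmvA$ yields $\bcSt \sim_{\QmvA} \bcSti$.
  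Since $\QmvA$ was arbitrary, $\bcSt \sim \bcSti$ by Definition~\ref{def:sim}.
\end{proof}

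There is essentially no obstacle here: the statement is a bookkeeping lemma whose role is to justify later uses of $\sim$ as an equivalence and of the monotonicity in the observable set. The one place to be attentive is not conflating $\sim$ (a single relation) with the family $\{\sim_{\QmvA}\}_{\QmvA}$; writing the proof of item~\ref{lem:sim:equiv:sim} as two inclusions, and reusing item~\ref{lem:sim:equiv:subseteq} for the nontrivial direction, keeps this clean and avoids re-deriving monotonicity.
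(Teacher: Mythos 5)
Your proof is correct and follows essentially the same route as the paper, which dismisses items~(i) and~(ii) as trivial and proves item~(iii) by the same two inclusions, with the non-trivial direction $\sim_{\Obs}\subseteq\;\sim$ obtained from item~(ii). If anything, your write-up assigns the ``trivial'' and ``follows from (ii)'' justifications to the correct inclusions, whereas the paper's one-line proof appears to have them swapped.
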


We extend the relations above to blockchains, 
by passing through their semantics.
For all $\QmvA$, we define 
$\bcB \sim_{\QmvA} \bcBi$
iff
$\semBc{\bcB}{\bcSt} \sim_{\QmvA} \semBc{\bcBi}{\bcSt}$
holds for all reachable $\bcSt$
(note that all the definitions and results in this paper apply to reachable states, 
since the unreachable ones do not represent actual blockchain executions).
We write $\bcB \sim \bcBi$ when $\bcB \sim_{\QmvA} \bcBi$ holds for all $\QmvA$.

A relation $\relR \subseteq \Tx^* \times \Tx^*$ is a 
is a \emph{congruence} (with respect to concatenation) if:
\[
  \bcB \;\relR\; \bcBi 
  \implies 
  \forall \bcB[0],\bcB[1] \, : \, 
  \bcB[0] \bcB \bcB[1] \;\relR\; \bcB[0] \bcBi \bcB[1]
\] 
The following lemma states that $\sim$ is a congruence:
therefore, if $\bcB$ and $\bcBi$ are observationally equivalent,
then we can replace $\bcB$ with $\bcBi$ 
in a larger blockchain, preserving its semantics.

\begin{restatable}{lem}{lemsimappend}
  \label{lem:sim:append}
  $\sim$ is a congruence relation.
\end{restatable}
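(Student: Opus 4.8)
The plan is to unfold the definition of congruence and reduce it to a statement purely about blockchain states, which we then settle by a small observation about the iterated semantics $\semBc{\cdot}{\cdot}$. Concretely, assume $\bcB \sim \bcBi$; I must show $\bcB[0]\bcB\bcB[1] \sim \bcB[0]\bcBi\bcB[1]$ for all $\bcB[0],\bcB[1]$, i.e.\ by the definition of $\sim$ on blockchains that $\semBc{\bcB[0]\bcB\bcB[1]}{\bcSt} \sim \semBc{\bcB[0]\bcBi\bcB[1]}{\bcSt}$ for every reachable $\bcSt$. By \Cref{lem:sim:equiv}\eqref{lem:sim:equiv:sim} it suffices to prove $\sim_{\Obs}$, i.e.\ agreement on every observable.

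First I would record the elementary fact that the blockchain semantics composes over concatenation: $\semBc{\bcB\bcBi}{\bcSt} = \semBc{\bcBi}{\semBc{\bcB}{\bcSt}}$ for all $\bcB,\bcBi,\bcSt$. This is an easy induction on the length of $\bcB$ using the defining equations $\semBc{\bcEmpty}{\bcSt}=\bcSt$ and $\semBc{\txT\bcB}{\bcSt}=\semBc{\bcB}{\semTx{\txT}{\bcSt}}$. Using this, fix a reachable $\bcSt$ and set $\bcSt[1] = \semBc{\bcB[0]}{\bcSt}$; note $\bcSt[1]$ is again reachable (it is $\semBc{\bcB[0]'}{\bcStInit}$ where $\bcSt = \semBc{\bcB[0]'}{\bcStInit}$, concatenating with $\bcB[0]$ and applying composition). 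Then
\[
  \semBc{\bcB[0]\bcB\bcB[1]}{\bcSt}
  = \semBc{\bcB[1]}{\semBc{\bcB}{\bcSt[1]}},
  \qquad
  \semBc{\bcB[0]\bcBi\bcB[1]}{\bcSt}
  = \semBc{\bcB[1]}{\semBc{\bcBi}{\bcSt[1]}}.
\]
From $\bcB \sim \bcBi$ and reachability of $\bcSt[1]$ we get $\semBc{\bcB}{\bcSt[1]} \sim \semBc{\bcBi}{\bcSt[1]}$, and by \Cref{lem:sim:equiv}\eqref{lem:sim:equiv:sim} these two states agree on \emph{all} observables, hence are literally equal as partial functions $\Obs \rightharpoonup \Val$.

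Once the two intermediate states are equal, applying the same suffix $\bcB[1]$ to both yields the same final state, so in particular they are $\sim_{\Obs}$-related, i.e.\ $\semBc{\bcB[0]\bcB\bcB[1]}{\bcSt} \sim \semBc{\bcB[0]\bcBi\bcB[1]}{\bcSt}$. Since $\bcSt$ was an arbitrary reachable state, this gives $\bcB[0]\bcB\bcB[1] \sim \bcB[0]\bcBi\bcB[1]$, as required. The only mild subtlety — and the step I would be most careful about — is the handling of the observable set: $\bcB \sim \bcBi$ gives agreement on every $\QmvA$, so I can invoke \Cref{lem:sim:equiv}\eqref{lem:sim:equiv:sim} to pass to genuine equality of states (not merely agreement on some fixed $\QmvA$), which is what makes ``apply $\bcB[1]$ to both'' legitimate; if one only had $\sim_{\QmvA}$ for a proper subset $\QmvA$, applying a further suffix need not preserve $\sim_{\QmvA}$ in general, so the full strength of $\sim$ is essential here.
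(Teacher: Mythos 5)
Your proof is correct and follows essentially the same route as the paper's (one-line) argument: $\bcB \sim \bcBi$ forces the intermediate states after $\bcB[0]\bcB$ and $\bcB[0]\bcBi$ to agree on all observables, hence to be equal as states, and the functional nature of the transaction semantics then propagates this equality through the suffix $\bcB[1]$. You are in fact more careful than the paper in checking that the intermediate state $\semBc{\bcB[0]}{\bcSt}$ is reachable, which is needed to invoke the hypothesis since $\sim$ on blockchains quantifies only over reachable states.
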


We say that two transactions are \emph{swappable} when
exchanging their order preserves observational equivalence.

\begin{defi}[{Swappability}]
  \label{def:swap}
  We say that two transactions $\txT \neq \txTi$ are \emph{swappable},
  in symbols $\txT \swap \txTi$, when $\txT \txTi \sim \txTi \txT$.
\end{defi}

The theory of trace languages originated from Mazurkiewicz's works~\cite{Mazurkiewicz88rex}
allows us to study observational equivalence under various swappability relations.
In general, given an alphabet $\Sigma$ and 
a symmetric and irreflexive relation $I \subseteq \Sigma \times \Sigma$
(which models independence between two elements in $\Sigma$),
the Mazurkiewicz's trace equivalence $\equivStSeq[I]$ is
a congruence between words on $\Sigma$.
Intuitively, all the words in the same equivalence class of $\equivStSeq[I]$ 
represent equivalent concurrent executions.
In our setting, $\Sigma$ is the set of transactions,
and $I$ will be instantiated with various swappability relations.
The fact that $\equivStSeq[I]$ is a congruence will allow us to
replace a sequence of transactions with an equivalent one 
within a blockchain.

\begin{defi}[{Mazurkiewicz equivalence}]
  \label{def:mazurkiewicz}
  Let $I$ be a symmetric and irreflexive relation on $\Tx$.
  The \emph{Mazurkiewicz equivalence} $\equivStSeq[I]$
  is the least congruence in the free monoid $\Tx^*$ such that:
  $\forall \txT, \txTi \in \Tx$:
  \(\;
  \txT \, I \, \txTi \implies \txT \txTi \equivStSeq[I] \txTi \txT
  \).
\end{defi}

To exemplify Definition~\ref{def:mazurkiewicz},
let $I = \setenum{(\txT[1],\txT[2]),(\txT[2],\txT[1])}$.
The equivalence class of the word $\txT[0] \txT[1] \txT[1] \txT[2] \txT[0]$ 
under the relation $\equivStSeq[I]$ is
\(
\setenum{
  \txT[0] \txT[1] \txT[1] \txT[2] \txT[0], \,
  \txT[0] \txT[1] \txT[2] \txT[1] \txT[0], \,
  \txT[0] \txT[2] \txT[1] \txT[1] \txT[0]
}
\).
Note that, starting from the word $\txT[0] \txT[1] \txT[1] \txT[2] \txT[0]$,
the other words in its equivalence class can be obtained by
swapping adjacent occurrences of $\txT[1]$ and $\txT[2]$.
This reflects the fact that $\txT[1]$ and $\txT[2]$ 
are assumed to be concurrent, as they are related by $I$.

\medskip
Intuitively, all the words in the same equivalence class 
(with respect to $\equivStSeq[I]$) represent equivalent executions.
This is made formal by Theorem~\ref{th:swap:mazurkiewicz} below, which
ensures that the Mazurkiewicz equivalence
constructed on the swappability relation $\swap$ is an observational equivalence.
Hence, we can transform a blockchain into an observationally equivalent one
by a finite number of swaps of adjacent swappable transactions.

\begin{restatable}{thm}{thswapmazurkiewicz}
  \label{th:swap:mazurkiewicz}
  $\equivStSeq[\swap] \;\; \subseteq \;\; \sim$.
\end{restatable}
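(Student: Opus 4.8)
The plan is to exploit the universal property of $\equivStSeq[\swap]$. By Definition~\ref{def:mazurkiewicz}, the relation $\equivStSeq[\swap]$ is the \emph{least} congruence on the free monoid $\Tx^*$ that relates $\txT\txTi$ with $\txTi\txT$ whenever $\txT \swap \txTi$; this instantiation is legitimate because, by Definition~\ref{def:swap}, the swappability relation $\swap$ is symmetric (as $\sim$ is symmetric) and irreflexive (by the side condition $\txT \neq \txTi$). Consequently, to establish $\equivStSeq[\swap] \subseteq \sim$ it is enough to check that $\sim$, viewed as a relation on blockchains, is itself a congruence on $\Tx^*$ that satisfies the same generating condition; the inclusion then follows immediately from minimality of $\equivStSeq[\swap]$.

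Concretely, I would verify three facts. First, $\sim$ is an equivalence relation on $\Tx^*$: recall that $\bcB \sim \bcBi$ unfolds to $\semBc{\bcB}{\bcSt} \sim_{\QmvA} \semBc{\bcBi}{\bcSt}$ for all $\QmvA$ and all reachable $\bcSt$, i.e. $\semBc{\bcB}{\bcSt} \sim \semBc{\bcBi}{\bcSt}$ for every reachable $\bcSt$ (using item~\ref{lem:sim:equiv:sim} of Lemma~\ref{lem:sim:equiv}); since $\sim$ on states is an equivalence relation by item~\ref{lem:sim:equiv:equiv} of Lemma~\ref{lem:sim:equiv}, reflexivity, symmetry and transitivity lift pointwise to blockchains. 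Second, $\sim$ is a congruence with respect to concatenation: this is exactly Lemma~\ref{lem:sim:append}. Third, $\sim$ contains the generators of $\equivStSeq[\swap]$: if $\txT \swap \txTi$, then $\txT\txTi \sim \txTi\txT$ holds by the very definition of swappability (Definition~\ref{def:swap}).

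Combining these, $\sim$ is a congruence on $\Tx^*$ closed under the generating rule of Definition~\ref{def:mazurkiewicz}, so minimality of $\equivStSeq[\swap]$ gives $\equivStSeq[\swap] \subseteq \sim$, as required. I do not expect any real obstacle at this level: the whole argument is a one-line appeal to a least–fixed–point characterization once the ingredients are in place. The genuine content has already been discharged in Lemma~\ref{lem:sim:append} — namely that observational equivalence of two transaction sequences is stable under arbitrary surrounding context $\bcB[0]\,\cdot\,\bcB[1]$ — and that is the place where the induction on the prefix/suffix, together with the fact that $\semBc{\cdot}{}$ is obtained by iterating $\semTx{\cdot}{\cdot}$, actually does the work.
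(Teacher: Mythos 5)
Your proposal is correct and is essentially the paper's own argument: the paper makes the minimality appeal explicit as a rule induction over the generating rules of $\equivStSeq[\swap]$ (reflexivity cases, the generator case via Definition~\ref{def:swap}, and the congruence case via two applications of Lemma~\ref{lem:sim:append}), which is exactly the unfolding of your "least congruence containing the generators" argument. The key ingredients you identify — $\sim$ being an equivalence (Lemma~\ref{lem:sim:equiv}), a congruence (Lemma~\ref{lem:sim:append}), and containing the generating pairs — are the same ones the paper uses.
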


Note that the converse of Theorem~\ref{th:swap:mazurkiewicz} does not hold:
indeed, $\bcB \equivStSeq[\swap] \bcBi$ requires that $\bcB$ and $\bcBi$
have the same length, while $\bcB \sim \bcBi$ may also hold for blockchains
of different lengths (\eg, $\bcBi = \bcB \txT$ where $\txT$ is a transaction
which does not alter the state).

\paragraph*{Safe approximations of read/written observables}

The relation $\swap$ is undecidable whenever the contract language 
is Turing-equivalent, \eg, in the case of Ethereum.
When $\swap$ is undecidable, to detect swappable transactions 
we can follow a static approach.
First, we over-approximate the set of observables read and written 
by transactions (Definition~\ref{def:safeapprox}).
We then check a simple condition on these approximations 
(Definition~\ref{def:pswapWR}) to detect if two transactions can be swapped.
Of course, the quality of the approximation is crucial to the
effectiveness of the approach.
In general, the coarser the approximation,
the stricter the induced swappability relation:
therefore, an overly coarse approximation would undermine the
parallelization of transactions.

In Definition~\ref{def:safeapprox} we state that $\QmvA$ safely
approximates the observables \emph{written} by $\txT$
(in symbols, $\wapprox{\QmvA}{\txT}$) 
when executing $\txT$ does not alter the state of the observables not in~$\QmvA$.
Defining the set of \emph{read} observables is a bit trickier: 
we require that executing $\txT$ in two states that agree 
on the values of the observables in the read set
results in two states that differ at most on the 
observables where they did not agree before the execution of $\txT$.

\begin{defi}[{Safe approximation of read/written observables}]
  \label{def:safeapprox}
  Given a set of observables $\QmvA$ and a transaction $\txT$, we define:
  \begin{align*}
    & \wapprox{\QmvA}{\txT}
    && \text{iff}
    && \forall \QmvB: \QmvB \cap \QmvA = \emptyset \implies \txT \sim_{\QmvB} \bcEmpty
    \\
    & \rapprox{\QmvA}{\txT}
    && \text{iff}
    && \forall \bcB,\bcBi,\QmvB: \bcB \sim_{\QmvA} \bcBi \land
       \bcB \sim_{\QmvB} \bcBi
       \implies \bcB\txT \sim_{\QmvB} \bcBi\txT
  \end{align*}
\end{defi}

\begin{exa}
  Recall from Example~\ref{ex:th-sem} the Ethereum transaction:
  \[
  \txT[2] = \ethtx{1}{\pmvA}{\cmvA}{\funF[2]}{\pmvB}
  \qquad\qquad
  \text{where } \;\contrFun{\funF[2]}{y}{\cmdSend{1}{y}}  
  \]
  The execution of $\txT[2]$ affects the $\varBalance$ of $\pmvA$, $\pmvB$ and $\cmvA$; however, $\cmvA.\varBalance$ is first incremented
  and then decremented, so its value is unchanged.
  Then, $\setenum{\expGet{\pmvA}{\varBalance},\expGet{\pmvB}{\varBalance}}$ is a safe approximation of the observables \emph{written} by $\txT[2]$, \ie
  $\wapprox{\setenum{\expGet{\pmvA}{\varBalance},\expGet{\pmvB}{\varBalance}}}{\txT[2]}$.
 
  A safe approximation of the observables \emph{read} by $\txT[2]$
  is $\QmvA = \setenum{\pmvA.\varBalance}$.
  To prove this, consider two blockchains $\bcB$ and $\bcBi$, 
  and a set of observables $\QmvB$ such that 
  $\bcB \sim_{\QmvA} \bcBi$ and $\bcB \sim_{\QmvB} \bcBi$.
  We have two cases:
  \begin{itemize}
    
  \item If $\semBc{\bcB}{} \pmvA.\varBalance < 1$, 
    then $\txT[2]$ is not valid in $\bcB$, and so
    $\semBc{\bcB\txT[2]}{} = \semBc{\bcB}{}$.
    Since $\bcB \sim_{\QmvA} \bcBi$, 
    then $\semBc{\bcBi}{} \pmvA.\varBalance < 1$,
    so $\txT[2]$ is not valid also in $\bcBi$,
    from which we have $\semBc{\bcBi\txT[2]}{} = \semBc{\bcBi}{}$.
    
  \item If $\semBc{\bcB}{} \pmvA.\varBalance = n \geq 1$, 
    then $\txT[2]$ is valid in $\bcB$, and $\txT[2]$ 
    affects exactly $\pmvA.\varBalance$ and $\pmvB.\varBalance$,
    as it transfers $1 \, \ether$ from $\pmvA$ to $\pmvB$.
    Since $\bcB \sim_{\QmvB} \bcBi$, 
    the states of $\bcB \txT[2]$ and $\bcBi \txT[2]$ may only 
    differ on $\pmvA.\varBalance$ or $\pmvB.\varBalance$.
    However: 
    \begin{align*}
      & \semBc{\bcBi\txT[2]}{} \pmvA.\varBalance = n - 1 = \semBc{\bcB\txT[2]}{} \pmvA.\varBalance
      \\
      & \semBc{\bcBi\txT[2]}{} \pmvB.\varBalance = \semBc{\bcBi}{} \pmvB.\varBalance + 1 =
        \semBc{\bcB}{} \pmvB.\varBalance + 1 = \semBc{\bcB\txT[2]}{} \pmvB.\varBalance
    \end{align*}
  \end{itemize}
  
  \noindent
  Therefore, in both cases $\bcB\txT[2] \sim_{\QmvB} \bcBi \txT[2]$,
  and so we have proved that $\rapprox{\setenum{\pmvA.\varBalance}}{\txT[2]}$.
\end{exa}

Widening a safe approximation (either of read or written observables)
preserves its safety;
further, the intersection
of two write approximations is still safe.
From this, it follows that there exists a \emph{least} safe approximation
of the observables written by a transaction.

\begin{restatable}{lem}{lemsafeapprox}
  \label{lem:safeapprox}
  Let $\bullet \in \setenum{r,w}$. Then:
  \begin{enumerate}[(a)]
  \item \label{lem:safeapprox:subseteq}
    if $\safeapprox[\bullet]{\QmvA}{\txT}$ and $\QmvA \subseteq \QmvAi$,
    then $\safeapprox[\bullet]{\QmvAi}{\txT}$;
  \item \label{lem:safeapprox:cap}
    if $\safeapprox[w]{\QmvA}{\txT}$
    and $\safeapprox[w]{\QmvB}{\txT}$,
    then $\safeapprox[w]{\QmvA \cap \QmvB}{\txT}$.
  \end{enumerate}
\end{restatable}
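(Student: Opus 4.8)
The plan is to unfold Definition~\ref{def:safeapprox} in each case and reduce the claims to elementary reasoning about sets of observables, using only the monotonicity of $\sim_{\QmvA}$ in $\QmvA$ (Lemma~\ref{lem:sim:equiv}, item~\ref{lem:sim:equiv:subseteq}), which lifts verbatim from states to blockchains, since $\sim_{\QmvA}$ on blockchains is defined by quantifying $\sim_{\QmvA}$ on reachable states. No reasoning about $\semBc{\cdot}{}$ is needed beyond what is already packaged inside the predicates $\models^{w}$ and $\models^{r}$.

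For part~\ref{lem:safeapprox:subseteq} with $\bullet = w$: assume $\wapprox{\QmvA}{\txT}$ and $\QmvA \subseteq \QmvAi$, and take any $\QmvB$ with $\QmvB \cap \QmvAi = \emptyset$. Then also $\QmvB \cap \QmvA = \emptyset$, so the hypothesis gives $\txT \sim_{\QmvB} \bcEmpty$, which is exactly what $\wapprox{\QmvAi}{\txT}$ requires. For $\bullet = r$: assume $\rapprox{\QmvA}{\txT}$ and $\QmvA \subseteq \QmvAi$, and take $\bcB, \bcBi, \QmvB$ with $\bcB \sim_{\QmvAi} \bcBi$ and $\bcB \sim_{\QmvB} \bcBi$. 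By monotonicity $\bcB \sim_{\QmvA} \bcBi$, so $\rapprox{\QmvA}{\txT}$ yields $\bcB \txT \sim_{\QmvB} \bcBi \txT$; hence $\rapprox{\QmvAi}{\txT}$. Both directions are immediate; read-safety is monotone because enlarging the read set only \emph{weakens} the premise $\bcB \sim_{\QmvAi} \bcBi$ that may be used, while leaving the conclusion unchanged.

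The only point with any content is part~\ref{lem:safeapprox:cap}, where the mild obstacle is that $\QmvC \cap (\QmvA \cap \QmvB) = \emptyset$ does \emph{not} imply $\QmvC \cap \QmvA = \emptyset$ or $\QmvC \cap \QmvB = \emptyset$, so a single write-hypothesis cannot be applied to $\QmvC$ as a whole. I would therefore argue observable by observable: assuming $\wapprox{\QmvA}{\txT}$ and $\wapprox{\QmvB}{\txT}$, take $\QmvC$ with $\QmvC \cap \QmvA \cap \QmvB = \emptyset$; to show $\txT \sim_{\QmvC} \bcEmpty$ it suffices, by Definition~\ref{def:sim}, that $\semBc{\txT}{\bcSt}$ and $\bcSt$ agree on each $\obsA \in \QmvC$, for every reachable $\bcSt$. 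Fix such an $\obsA$. Since $\obsA \notin \QmvA \cap \QmvB$, either $\obsA \notin \QmvA$ or $\obsA \notin \QmvB$. In the first case $\setenum{\obsA} \cap \QmvA = \emptyset$, so $\wapprox{\QmvA}{\txT}$ gives $\txT \sim_{\setenum{\obsA}} \bcEmpty$, \ie the required agreement on $\obsA$; the second case is symmetric, using $\wapprox{\QmvB}{\txT}$. Hence $\txT \sim_{\QmvC} \bcEmpty$, which is $\wapprox{\QmvA \cap \QmvB}{\txT}$. As a side remark, combining parts~\ref{lem:safeapprox:subseteq} and~\ref{lem:safeapprox:cap} yields the existence of a least safe write-approximation (the intersection of all of them), as noted in the text just after the lemma; this is a corollary, not part of the statement.
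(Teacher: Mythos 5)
Your proof is correct and follows essentially the same route as the paper's: part~(a) is the same monotonicity argument in both the write and read cases, and part~(b) differs only cosmetically, since arguing per observable $\obsA \in \QmvC$ by cases on $\obsA \notin \QmvA$ versus $\obsA \notin \QmvB$ is just the pointwise version of the paper's decomposition $\QmvC = (\QmvC \setminus \QmvA) \cup (\QmvC \setminus \QmvB)$, with each piece handled by the corresponding write-hypothesis. Nothing is missing.
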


The following example shows that, in general, part~\ref{lem:safeapprox:cap} 
of Lemma~\ref{lem:safeapprox}
does not hold for read approximations.

\begin{exa}
  \label{ex:safeapprox:read-not-cap}
  Let $\cmvA$ be an Ethereum contract with functions:
  \begin{align*}
    & \contrFunSig{\funF}{\varX}
      \; \{ \cmdAss{\valK}{x}; \cmdAss{\valKi}{x} \}
    && \contrFunSig{\funG}{}
       \; \{\cmdIfTE{\expLookup{\valK} \neq \pmvA}{\cmdSend{\expLookup{\varBalance}}{\pmvB}}{\cmdSkip} \}
  \end{align*}
  and let $\txT = \ethtx{0}{\pmvA}{\cmvA}{\funG}{}$.
  Note that, in any reachable state $\bcSt$,
  it must be $\bcSt \, \cmvA.\valK = \bcSt \, \cmvA.\valKi$.
  Let $\QmvB$ be such that $\bcB \sim_{\QmvB} \bcBi$,
  and let $\bcSt = \semBc{\bcB}{}$, $\bcSti = \semBc{\bcBi}{}$,
  let $\valN = \bcSt \, \cmvA.\varBalance$, and
  let $\valNi = \bcSti \, \cmvA.\varBalance$.
  Appending $\txT$ to $\bcB$ and $\bcBi$ will result in:
  \begin{align*}
    & \semTx{\txT}{\bcSt} = \begin{cases}
      \bcSt - \cmvA:\valN + \pmvB:\valN & \text{if $\bcSt \, \cmvA.\valK \neq \pmvA$} \\
      \bcSt & \text{otherwise}
    \end{cases}
    \quad
    && \semTx{\txT}{\bcSti} = \begin{cases}
      \bcSti - \cmvA:\valNi + \pmvB:\valNi & \text{if $\bcSti \, \cmvA.\valK \neq \pmvA$} \\
      \bcSti & \text{otherwise}
    \end{cases}
  \end{align*}
  If $\bcB \sim_{\setenum{\cmvA.\valK}} \bcBi$, then the conditions
  $\bcSt \, \cmvA.\valK \neq \pmvA$ and $\bcSti \, \cmvA.\valK \neq \pmvA$
  are equivalent.
  Therefore,
  $\sem{\bcB \txT} = \semTx{\txT}{\bcSt} \sim_{\QmvB} \semTx{\txT}{\bcSti} = \sem{\bcBi \txT}$,
  and so we have proved that $\rapprox{\setenum{\cmvA.\valK}}{\txT}$.
  Similarly, we obtain that $\rapprox{\setenum{\cmvA.\valKi}}{\txT}$,
  since $\valK$ and $\valKi$ are always bound to the same value.
  Note however that $\setenum{\cmvA.\valK} \cap \setenum{\cmvA.\valKi} = \emptyset$
  is \emph{not} a safe approximation of the observables read by $\txT$.
  For instance, if $\bcSt \, \cmvA.\valK = \pmvA \neq \bcSti \, \cmvA.\valKi$
  and $\bcSt \, \cmvA.\varBalance = \bcSti \, \cmvA.\varBalance$,
  then appending $\txT$ to $\bcB$ or to $\bcBi$ results in states
  which differ in the balance of $\cmvA$.
\end{exa}

\paragraph*{Strong swappability}

We use safe approximations of read/written observables 
to detect when two transactions are swappable,
recasting in our setting Bernstein's conditions~\cite{bernstein}
for the parallel execution of processes.
More specifically,
we require that the set of observables written by $\txT$
is disjoint from those written or read by $\txTi$, and vice versa.
When this happens, we say that the
two transactions are \emph{strongly swappable}.

\begin{defi}[{Strong swappability}]
  \label{def:pswap}
  We say that two transactions $\txT \neq \txTi$ are \emph{strongly swappable},
  in symbols $\txT \pswap{}{} \txTi$, when
  there exist $W,W',R,R' \subseteq \Obs$ such that
  $\wapprox{W}{\txT}$,
  $\wapprox{W'}{\txTi}$,
  $\rapprox{R}{\txT}$,
  $\rapprox{R'}{\txTi}$,
  and:
  \[
    \big( R \cup W \big) \cap W'
    \; = \;
    \emptyset
    \; = \;
    \big( R' \cup W' \big) \cap W
  \]
\end{defi}

The following theorem ensures the soundness of our approximation:
if two transactions are strongly swappable, then they are also swappable.
Since its proof depends on notions that have yet to be defined, 
we postpone it at the end of the section.
The converse implication does not hold neither in Bitcoin nor in Ethereum,
as shown by Examples~\ref{cex:pswap-implies-swap:btc} and~\ref{cex:pswap-implies-swap:eth}.

\begin{restatable}{thm}{thpswapimpliesswap}
  \label{th:pswap-implies-swap}
  $\txT \pswap{}{} \txTi \implies \txT \swap \txTi$.
\end{restatable}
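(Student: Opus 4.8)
The plan is to show that if $\txT \pswap{}{} \txTi$, then $\txT \txTi \sim \txTi \txT$, i.e., that for every reachable state $\bcSt$ and every observable $\qmvA$ we have $\semBc{\txT \txTi}{\bcSt}\,\qmvA = \semBc{\txTi \txT}{\bcSt}\,\qmvA$. By Definition~\ref{def:pswap} we may fix sets $W, W', R, R'$ with $\wapprox{W}{\txT}$, $\wapprox{W'}{\txTi}$, $\rapprox{R}{\txT}$, $\rapprox{R'}{\txTi}$, and $(R \cup W) \cap W' = \emptyset = (R' \cup W') \cap W$. The proof naturally splits according to whether $\qmvA$ lies outside $W \cup W'$, inside $W$, or inside $W'$ (the last two cases being symmetric). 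The key tool is that a write-safe approximation lets us ignore the effect of a transaction on observables outside the write set, while a read-safe approximation lets us compare the two execution orders by treating one execution order as the ``$\bcB$'' and the other as the ``$\bcBi$'' in Definition~\ref{def:safeapprox}.

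First I would treat the easy case $\qmvA \notin W \cup W'$. Applying $\wapprox{W}{\txT}$ with $\QmvB = \{\qmvA\}$ (which is disjoint from $W$) gives $\txT \sim_{\{\qmvA\}} \bcEmpty$, so appending $\txT$ does not change the value at $\qmvA$; since $\sim$ is a congruence (Lemma~\ref{lem:sim:append}), neither does appending $\txT$ after $\txTi$. Likewise $\wapprox{W'}{\txTi}$ shows $\txTi$ does not change the value at $\qmvA$. Hence both $\semBc{\txT\txTi}{\bcSt}$ and $\semBc{\txTi\txT}{\bcSt}$ agree with $\bcSt$ at $\qmvA$, and therefore with each other. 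For the case $\qmvA \in W$ (the case $\qmvA \in W'$ being symmetric), note first that $\qmvA \notin W'$ since $W \cap W' = \emptyset$, so $\txTi$ does not alter the value at $\qmvA$: formally, using $\wapprox{W'}{\txTi}$ with $\QmvB$ a suitable set containing $\qmvA$, the value at $\qmvA$ is unchanged by $\txTi$ whether it is executed first or last. So it suffices to compare the value at $\qmvA$ produced by running $\txT$ on $\bcSt$ versus running $\txT$ on $\semBc{\txTi}{\bcSt}$. This is exactly where $\rapprox{R}{\txT}$ enters: I would instantiate Definition~\ref{def:safeapprox} for $\txT$ with the two blockchains $\bcB_0\bcEmpty$ and $\bcB_0\txTi$ (where $\bcB_0$ witnesses reachability of $\bcSt$), taking the read set $R$ and an auxiliary set $\QmvB$. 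For this to go through I need two facts: (i) $\semBc{\bcB_0}{} \sim_R \semBc{\bcB_0\txTi}{}$, which holds because $R \cap W' = \emptyset$ together with $\wapprox{W'}{\txTi}$ shows $\txTi$ does not touch observables in $R$; and (ii) a suitable choice of $\QmvB$, namely the set of observables on which the two pre-states already agree — which by (i) and the structure of $\sim$ can be taken to include $\qmvA$ after a short argument. Then $\rapprox{R}{\txT}$ yields agreement at $\qmvA$ after appending $\txT$, i.e. $\semBc{\bcB_0\txT}{} \sim_{\QmvB} \semBc{\bcB_0\txTi\txT}{}$, which is the desired equality at $\qmvA$.

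The main obstacle, I expect, will be the bookkeeping in the $\qmvA \in W$ case: choosing the auxiliary set $\QmvB$ in the read-approximation condition so that it simultaneously contains $\qmvA$ and is a set on which $\semBc{\bcB_0}{}$ and $\semBc{\bcB_0\txTi}{}$ genuinely agree. The natural candidate is $\QmvB = \{\qmvB \mid \semBc{\bcB_0}{}\,\qmvB = \semBc{\bcB_0\txTi}{}\,\qmvB\}$, and one must verify $\qmvA \in \QmvB$, which follows because $\qmvA \notin W'$ and $\wapprox{W'}{\txTi}$ (applied with singleton $\{\qmvA\}$, disjoint from $W'$) gives $\txTi \sim_{\{\qmvA\}} \bcEmpty$, hence $\semBc{\bcB_0\txTi}{}\,\qmvA = \semBc{\bcB_0}{}\,\qmvA$. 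Everything else is a careful but routine unwinding of Definitions~\ref{def:sim}, \ref{def:safeapprox}, \ref{def:pswap} together with Lemmas~\ref{lem:sim:equiv} and~\ref{lem:sim:append}; the symmetry between the $\qmvA \in W$ and $\qmvA \in W'$ cases means only one of them needs to be written in full.
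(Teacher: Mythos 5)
Your argument is correct and is essentially the paper's own proof: the paper factors the statement through parameterised strong swappability (Lemma~\ref{lem:commutativity-writes} and Lemma~\ref{lem:pswapWR-implies-swap}), but the mathematical content is exactly your three-way case split on $W$, $W'$ and their complement, with write-safety handling the complement and read-safety of $R$ --- fed with $\txTi \sim_{R} \bcEmpty$ and $\txTi \sim_{W} \bcEmpty$, both obtained from $\wapprox{W'}{\txTi}$ and the disjointness hypothesis --- absorbing the execution of $\txTi$ before $\txT$ on the observables in $W$. The one detail to repair is your candidate $\QmvB = \setcomp{\qmvB}{\semBc{\bcB[0]}{}\,\qmvB = \semBc{\bcB[0]\txTi}{}\,\qmvB}$: the premise $\bcB[0] \sim_{\QmvB} \bcB[0]\txTi$ of $\rapprox{R}{\txT}$ quantifies over \emph{all} reachable starting states, not just the initial one, so $\QmvB$ must be chosen uniformly --- e.g.\ $\QmvB = \setenum{\qmvA}$ or $\QmvB = \Obs \setminus W'$, for which the required agreement follows from $\qmvA \notin W'$ and $\wapprox{W'}{\txTi}$ via left-congruence, precisely the facts you already establish.
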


Theorem~\ref{th:pswap:mazurkiewicz} states that
the Mazurkiewicz equivalence $\equivStSeq[\pswap{}{}]$
is stricter than $\equivStSeq[\swap]$.
Together with Theorem~\ref{th:swap:mazurkiewicz},
if $\bcB$ is transformed into $\bcBi$
by exchanging adjacent strongly swappable transactions,
then $\bcB$ and $\bcBi$ are observationally equivalent.

\begin{restatable}{thm}{thpswapmazurkiewicz}
  \label{th:pswap:mazurkiewicz}
  $\mathord{\equivStSeq[\pswap{}{}]} \subseteq \mathord{\equivStSeq[\swap]}$.
\end{restatable}

\paragraph*{Parameterised strong swappability}

Note that if the contract language is Turing-equivalent, 
then finding approximations which satisfy 
the disjointness condition in Definition~\ref{def:pswap} is not computable, 
and so the relation $\pswap{}{}$ is undecidable.
This is because strong swappability abstracts from the actual static analysis 
used to compute the safe approximations:
it just assumes that these approximations exist.
Definition~\ref{def:pswapWR} below parameterises strong swappability over 
a static analysis, which we render as a function 
from transactions to sets of observables.
Formally, $\wset{}$ is a \emph{static analysis of written observables}
when $\wapprox{\wset{\txT}}{\txT}$, for all $\txT$;
similarly, $\rset{}$ is a \emph{static analysis of read observables}
when $\rapprox{\rset{\txT}}{\txT}$, for all $\txT$.

\begin{defi}[{Parameterised strong swappability}]
  \label{def:pswapWR}
  Let $\wset{}$ and $\rset{}$ be static analyses of written/read observables.
  We say that two transactions $\txT \neq \txTi$ are
  \emph{strongly swappable \wrt $\wset{}$ and $\rset{}$},
  in symbols $\txT \pswapWR \txTi$, if:
  \[
  \big( \rset{\txT} \cup \wset{\txT} \big) \cap \wset{\txTi}
  \; = \;
  \emptyset
  \; = \;
  \big( \rset{\txTi} \cup \wset{\txTi} \big) \cap \wset{\txT}
  \]
\end{defi}

By the definition, it directly follows that 
$\txT \pswapWR \txTi$ implies that $\txT \pswap{}{} \txTi$.
Further, if $\wset{}$ and $\rset{}$ are computable, 
then $\pswapWR$ is decidable.
Later on, we will show that the relations $\pswapWR$ and $\pswap{}{}$ 
are equivalent in Bitcoin (Theorem~\ref{th:pswap-implies-pswapWR:btc}).

\begin{lem}
  \label{lem:commutativity-writes}
  \(
  \txT \pswapWR \txTi \implies \txT\txTi \sim_{\wset{\txT}} \txTi\txT
  \)
\end{lem}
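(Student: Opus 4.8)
The plan is to unfold $\sim_{\wset{\txT}}$ into a pointwise statement about observables and then reduce the whole claim to two applications of the write‑safety of $\txTi$ and one application of the read‑safety of $\txT$. So I would fix a reachable state $\bcSt$ and an observable $\qmvA \in \wset{\txT}$, and aim to prove $\semBc{\txTi}{\semBc{\txT}{\bcSt}}\,\qmvA = \semBc{\txT}{\semBc{\txTi}{\bcSt}}\,\qmvA$; by the iterated definition of the blockchain semantics the two sides are exactly $\semBc{\txT\txTi}{\bcSt}\,\qmvA$ and $\semBc{\txTi\txT}{\bcSt}\,\qmvA$, so this suffices. From the hypothesis $\txT\pswapWR\txTi$ I would only use the conjunct $\big(\rset{\txT}\cup\wset{\txT}\big)\cap\wset{\txTi}=\emptyset$, which yields both $\qmvA\notin\wset{\txTi}$ (since $\qmvA\in\wset{\txT}$) and $\rset{\txT}\cap\wset{\txTi}=\emptyset$.

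Next I would extract from $\wapprox{\wset{\txTi}}{\txTi}$ --- which holds because $\wset{}$ is a static analysis of written observables --- the two facts $\txTi\sim_{\setenum{\qmvA}}\bcEmpty$ and $\txTi\sim_{\rset{\txT}}\bcEmpty$, obtained by instantiating the universal quantifier in Definition~\ref{def:safeapprox} with $\QmvB=\setenum{\qmvA}$ and with $\QmvB=\rset{\txT}$, both of which are disjoint from $\wset{\txTi}$ by the previous paragraph. Unfolding the first of these at the state $\semBc{\txT}{\bcSt}$ --- which is reachable, since reachable states are closed under applying one more transaction --- immediately gives the left side: $\semBc{\txTi}{\semBc{\txT}{\bcSt}}\,\qmvA = \semBc{\txT}{\bcSt}\,\qmvA$. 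Hence it only remains to show $\semBc{\txT}{\semBc{\txTi}{\bcSt}}\,\qmvA = \semBc{\txT}{\bcSt}\,\qmvA$.

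For the remaining equality I would apply $\rapprox{\rset{\txT}}{\txT}$ --- which holds because $\rset{}$ is a static analysis of read observables --- taking the blockchains $\bcB=\txTi$ and $\bcBi=\bcEmpty$ and the set $\QmvB=\setenum{\qmvA}$. Its two premises, $\bcB\sim_{\rset{\txT}}\bcBi$ and $\bcB\sim_{\setenum{\qmvA}}\bcBi$, are precisely the facts $\txTi\sim_{\rset{\txT}}\bcEmpty$ and $\txTi\sim_{\setenum{\qmvA}}\bcEmpty$ just established; its conclusion is $\txTi\txT\sim_{\setenum{\qmvA}}\bcEmpty\,\txT$, i.e. $\txTi\txT\sim_{\setenum{\qmvA}}\txT$, which unfolded at $\bcSt$ gives exactly $\semBc{\txT}{\semBc{\txTi}{\bcSt}}\,\qmvA = \semBc{\txT}{\bcSt}\,\qmvA$. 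Combining with the previous step closes the argument.

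The only delicate point I anticipate is the bookkeeping around the two ``levels'' at which the definitions live: $\wapprox{}{}$, $\rapprox{}{}$, and the congruence $\sim_{\QmvA}$ are all phrased with an implicit quantification over reachable starting states, or over blockchains, whereas the goal concerns a single fixed state $\bcSt$. The move that makes this painless is to pick $\bcB=\txTi$ and $\bcBi=\bcEmpty$ in the read‑safety condition, so that its premises coincide exactly with the two write‑safety facts for $\txTi$; everything else is routine rewriting using $\semBc{\txT'\bcB}{\bcSt} = \semBc{\bcB}{\semBc{\txT'}{\bcSt}}$. Note also that only one conjunct of $\txT\pswapWR\txTi$ enters here --- the symmetric one is what would be needed for the companion fact $\txT\txTi\sim_{\wset{\txTi}}\txTi\txT$.
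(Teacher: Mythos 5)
Your proof is correct and follows essentially the same route as the paper's: both establish $\txT\txTi \sim_{\wset{\txT}} \txT$ via the write-safety of $\txTi$ on sets disjoint from $\wset{\txTi}$, and $\txTi\txT \sim_{\wset{\txT}} \txT$ via $\rapprox{\rset{\txT}}{\txT}$ instantiated with $\bcB=\txTi$, $\bcBi=\bcEmpty$, then combine by transitivity. The only (cosmetic) difference is that you unfold $\sim$ pointwise at singletons $\setenum{\qmvA}$ instead of taking $\QmvB = \wset{\txT}$ wholesale, which incidentally avoids the paper's slightly informal appeal to $\sim_{\wset{\txT}}$ being a congruence.
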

\begin{proof}
  By Definition~\ref{def:pswapWR}, 
  $\wset{\txT} \cap \wset{\txTi} = \emptyset$ and
  $\rset{\txT} \cap \wset{\txTi} = \emptyset$.
  Since $\wapprox{\wset{\txTi}}{\txTi}$, 
  by Definition~\ref{def:safeapprox} we have
  $\txTi \sim_{\wset{\txT}} \bcEmpty$ and
  $\txTi \sim_{\rset{\txT}} \bcEmpty$.
  Since $\sim_{\wset{\txT}}$ is a congruence, $\txT\txTi \sim_{\wset{\txT}} \txT$.
  Since $\rapprox{\rset{\txT}}{\txT}$, $\txTi \sim_{\rset{\txT}} \bcEmpty$
  and $\txTi \sim_{\wset{\txT}} \bcEmpty$,
  by Definition~\ref{def:safeapprox} we have
  $\txTi \txT \sim_{\wset{\txT}} \txT$.
  By simmetry and transitivity of $\sim$ (Lemma~\ref{lem:sim:equiv}), 
  we conclude $\txT\txTi \sim_{\wset{\txT}} \txTi\txT$.
\end{proof}

The following lemma states that the relation $\pswapWR$ 
is a sound approximation of swappability.

\begin{lem}
  \label{lem:pswapWR-implies-swap}
  $\txT \pswapWR \txTi \implies \txT \swap \txTi$
\end{lem}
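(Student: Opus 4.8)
The plan is to unfold the definitions and reduce the claim to a pointwise statement about values of observables. Since $\txT \swap \txTi$ means $\txT\txTi \sim \txTi\txT$, and by Lemma~\ref{lem:sim:equiv} the relation $\sim$ coincides with $\sim_{\Obs}$, it suffices to fix an arbitrary reachable state $\bcSt$ and an arbitrary observable $\qmvA$ and to show $\semBc{\txT\txTi}{\bcSt}\,\qmvA = \semBc{\txTi\txT}{\bcSt}\,\qmvA$. I would then distinguish two cases according to whether $\qmvA \in \wset{\txT} \cup \wset{\txTi}$ or not.

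For the case $\qmvA \in \wset{\txT} \cup \wset{\txTi}$, I would invoke Lemma~\ref{lem:commutativity-writes}. The relation $\pswapWR$ is symmetric, since its defining equation is symmetric in the two transactions; hence $\txT \pswapWR \txTi$ also gives $\txTi \pswapWR \txT$. Applying Lemma~\ref{lem:commutativity-writes} to both orientations yields $\txT\txTi \sim_{\wset{\txT}} \txTi\txT$ and $\txTi\txT \sim_{\wset{\txTi}} \txT\txTi$, and by symmetry of $\sim_{\QmvB}$ (Lemma~\ref{lem:sim:equiv}) the latter rewrites to $\txT\txTi \sim_{\wset{\txTi}} \txTi\txT$. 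Agreement on two sets of observables is just agreement on their union, so $\txT\txTi \sim_{\wset{\txT} \cup \wset{\txTi}} \txTi\txT$, which settles this case.

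For the case $\qmvA \notin \wset{\txT} \cup \wset{\txTi}$, I would use the write-safety hypotheses directly. The singleton $\setenum{\qmvA}$ is disjoint from $\wset{\txT}$ and from $\wset{\txTi}$, so $\wapprox{\wset{\txT}}{\txT}$ and $\wapprox{\wset{\txTi}}{\txTi}$ give, via Definition~\ref{def:safeapprox}, $\txT \sim_{\setenum{\qmvA}} \bcEmpty$ and $\txTi \sim_{\setenum{\qmvA}} \bcEmpty$; that is, running either transaction in any reachable state leaves the value of $\qmvA$ unchanged. Chaining this along both orderings — and observing that $\semTx{\txT}{\bcSt}$ and $\semTx{\txTi}{\bcSt}$ are again reachable because $\bcSt$ is — gives $\semBc{\txT\txTi}{\bcSt}\,\qmvA = \bcSt\,\qmvA = \semBc{\txTi\txT}{\bcSt}\,\qmvA$. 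Combining the two cases and applying Lemma~\ref{lem:sim:equiv} yields $\txT\txTi \sim \txTi\txT$, i.e. $\txT \swap \txTi$.

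I do not expect a genuine obstacle here; this is essentially bookkeeping. The one point that needs a little care is ensuring that the intermediate states $\semTx{\txT}{\bcSt}$ and $\semTx{\txTi}{\bcSt}$ are reachable, so that the safe-approximation properties (which are only promised on reachable states) may be reapplied. Note also that the read approximations $\rset{\cdot}$ play no explicit role in this argument: the disjointness conditions in $\pswapWR$ already force the write sets to be small enough that Lemma~\ref{lem:commutativity-writes} together with write-safety does all the work.
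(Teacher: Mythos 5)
Your proof is correct and follows essentially the same route as the paper's: it splits the observables into $\wset{\txT} \cup \wset{\txTi}$ (handled by two applications of Lemma~\ref{lem:commutativity-writes}, using the symmetry of $\pswapWR$) and the complement (handled directly from the write-safety of $\wset{\txT}$ and $\wset{\txTi}$), then unions the two agreements to conclude $\txT\txTi \sim \txTi\txT$. The only cosmetic difference is that the paper phrases the complement case via the congruence property of $\sim_{\QmvA}$ rather than your pointwise chaining, which is an equivalent bookkeeping choice.
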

\begin{proof}
  By applying Lemma~\ref{lem:commutativity-writes} twice, we obtain
  $\txT\txTi \sim_{\wset{\txT}} \txTi\txT$ and 
  $\txTi\txT \sim_{\wset{\txTi}} \txT\txTi$.
  Let $\QmvA = \QmvU \setminus (\wset{\txT} \cup \wset{\txTi})$.
  Since $\QmvA \cap \wset{\txT} = \emptyset = \QmvA \cap \wset{\txTi}$,
  by applying Definition~\ref{def:safeapprox} twice we obtain
  $\bcEmpty \sim_{\QmvA} \txT$ and $\bcEmpty \sim_{\QmvA} \txTi$.
  Since $\sim_{\QmvA}$ is a congruence, $\txT\txTi \sim_{\QmvA} \txTi\txT$.
  Summing up:
  \[
  \txT\txTi \sim_{\QmvA \,\cup\, (\wset{\txT} \cup \wset{\txTi})} \txTi\txT
  \]
  from which we obtain the thesis, since 
  $\QmvA \cup (\wset{\txT} \cup \wset{\txTi}) = \QmvU$ and
  $\sim_{\QmvU} \, = \, \sim$.
\end{proof}

Note that if $\txT \pswap{}{} \txTi$, then there exist
$\wset{}$ and $\rset{}$ such that $\txT \pswapWR \txTi$.
Then, from Lemma~\ref{lem:pswapWR-implies-swap} it follows
that $\txT$ and $\txTi$ are swappable.
This proves Theorem~\ref{th:pswap-implies-swap},
from which in turns we obtain Theorem~\ref{th:pswap:mazurkiewicz}.
Putting it all together, we have proved the inclusions:

\begin{restatable}{thm}{thpswapWRmazurkiewicz}
  \label{th:pswapWR:mazurkiewicz}
  $\equivStSeq[\pswapWR] \;\; \subseteq \;\; \equivStSeq[\pswap{}{}] \;\; \subseteq \;\; \equivStSeq[\swap]$.
\end{restatable}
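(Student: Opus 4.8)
The plan is to reduce the statement to two ingredients: a pointwise inclusion between the underlying relations on $\Tx$, and a monotonicity property of the Mazurkiewicz construction of Definition~\ref{def:mazurkiewicz}. For the pointwise inclusions, recall that immediately after Definition~\ref{def:pswapWR} we observed that $\txT \pswapWR \txTi$ implies $\txT \pswap{}{} \txTi$ (the sets $\wset{\txT},\wset{\txTi},\rset{\txT},\rset{\txTi}$ are themselves safe approximations, hence serve as the witnesses $W,W',R,R'$ required by Definition~\ref{def:pswap}), so $\pswapWR \subseteq \pswap{}{}$ as relations on $\Tx$; and Theorem~\ref{th:pswap-implies-swap} gives $\pswap{}{} \subseteq \swap$. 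All three relations are symmetric (the disjointness condition in Definitions~\ref{def:pswap} and~\ref{def:pswapWR} is symmetric in $\txT,\txTi$, and $\txT\txTi \sim \txTi\txT$ iff $\txTi\txT \sim \txT\txTi$) and irreflexive (each is defined only for $\txT \neq \txTi$), so the Mazurkiewicz equivalence is well defined on each of them.

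The second ingredient is the general lemma: if $I \subseteq J$ are symmetric irreflexive relations on $\Tx$, then $\equivStSeq[I] \subseteq \equivStSeq[J]$. I would prove this directly from the characterization of $\equivStSeq[I]$ as the \emph{least} congruence on the free monoid $\Tx^*$ satisfying $\txT \, I \, \txTi \implies \txT\txTi \equivStSeq[I] \txTi\txT$. Indeed, $\equivStSeq[J]$ is itself a congruence on $\Tx^*$, and whenever $\txT \, I \, \txTi$ we also have $\txT \, J \, \txTi$ by $I \subseteq J$, hence $\txT\txTi \equivStSeq[J] \txTi\txT$; so $\equivStSeq[J]$ is a congruence closed under the swap rule associated with $I$, and by minimality $\equivStSeq[I] \subseteq \equivStSeq[J]$.

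Combining the two ingredients: $\pswapWR \subseteq \pswap{}{}$ yields $\equivStSeq[\pswapWR] \subseteq \equivStSeq[\pswap{}{}]$, and $\pswap{}{} \subseteq \swap$ (Theorem~\ref{th:pswap-implies-swap}) yields $\equivStSeq[\pswap{}{}] \subseteq \equivStSeq[\swap]$ — the latter is precisely Theorem~\ref{th:pswap:mazurkiewicz} — so altogether $\equivStSeq[\pswapWR] \subseteq \equivStSeq[\pswap{}{}] \subseteq \equivStSeq[\swap]$, as claimed.

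As for the main obstacle: there is essentially no hard computation left here, since the substantive work has already been carried out in Theorem~\ref{th:pswap-implies-swap} (which itself rests on Lemmas~\ref{lem:commutativity-writes} and~\ref{lem:pswapWR-implies-swap}). The only points deserving care are making the monotonicity step fully rigorous — being explicit that ``least congruence'' means least in the family of congruences closed under the relevant swap rule, so that the inclusion $\equivStSeq[I] \subseteq \equivStSeq[J]$ follows from $\equivStSeq[J]$ belonging to that family — and checking the symmetry/irreflexivity side conditions so that Definition~\ref{def:mazurkiewicz} legitimately applies to $\pswapWR$ and $\pswap{}{}$; both are routine.
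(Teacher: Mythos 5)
Your proposal is correct and follows essentially the same route as the paper: the pointwise inclusions $\pswapWR \subseteq \pswap{}{}$ (immediate from the definitions) and $\pswap{}{} \subseteq \swap$ (Theorem~\ref{th:pswap-implies-swap}) are lifted to the Mazurkiewicz equivalences by monotonicity of the least-congruence construction. The paper leaves that monotonicity step implicit ("putting it all together"), whereas you spell it out via minimality; this is the only difference, and it is a welcome one.
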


  \subsection{Swapping transactions in Bitcoin}
\label{sec:txswap:btc}

By instantiating our general blockchain model to Bitcoin,
we can refine some of the swappability results presented before.
In particular, in Bitcoin we can easily construct safe approximations
of the observables read/written by a transaction, by just considering
their inputs and outputs (Lemma~\ref{lem:safeapprox:btc}).
Further, while strong swappability is stricter than swappability
(Example~\ref{cex:pswap-implies-swap:btc}),
strong and parameterized strong swappability coincide in Bitcoin
(Theorem~\ref{th:pswap-implies-pswapWR:btc}).

The following lemma provides the least safe approximations
of the observables read and written by consistent transactions.
For inconsistent transactions, these approximations are just
the empty set (Lemma~\ref{lem:inconsistent-emptyset}).
Intuitively, the observables written by $\txT$ can be approximated as
$\txT.\txIn{} \cup \txT.\txOut{}$, because $\txT$ spends all the 
transaction outputs in $\txT.\txIn{}$, 
and creates the transaction outputs in $\txT.\txOut{}$.
Instead, the read observables can be approximated as $\txT.\txIn{}$,
since by Definition~\ref{def:blockchain-btc:valid},
executing $\txT$ from two states which agree on $\txT.\txIn{}$
leads to two states which only differ on the observables
for which they differed before.

\begin{lem}
  \label{lem:safeapprox:btc}
  Let $\txT$ be a consistent Bitcoin transaction, and let:
  \[
  W \; = \; \txT.\txIn{} \cup \txT.\txOut{} 
  \qquad\qquad
  R \; = \; \txT.\txIn{}
  \]
  Then, $W$ (resp.\ $R$) is the least safe approximation
  of written (resp.\ read) observables.
\end{lem}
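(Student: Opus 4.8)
The plan is to reduce everything to two structural observations about the Bitcoin semantics, and then argue safety and minimality of $W$ and $R$ separately. \textbf{Observation 1:} for any transaction $\txT$ and any state $\bcSt$, the states $\bcSt$ and $\semTx{\txT}{\bcSt}$ agree on every observable outside $\txT.\txIn{} \cup \txT.\txOut{}$; this is immediate from the definition of $U_{\semTx{\txT}{\bcSt}}$, which, when $\bcSt \valid \txT$, deletes the observables in $\txT.\txIn{}$ and inserts those in $\txT.\txOut{}$, and otherwise equals $U_{\bcSt}$. \textbf{Observation 2:} whether $\bcSt \valid \txT$ holds depends on $\bcSt$ only through the values it assigns to $\txT.\txIn{}$ --- conditions~\eqref{consistent-update:output} and~\eqref{consistent-update:value} of Definition~\ref{def:blockchain-btc:valid} are functions of $\txT$ alone (they inspect only the scripts and values carried by the outputs named in $\txT.\txIn{}$, plus $\txT$ itself as the redeeming transaction), while condition~\eqref{consistent-update:unspent} only queries membership in $U_{\bcSt}$ for the points of $\txT.\txIn{}$. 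Hence $\bcSt \sim_{\txT.\txIn{}} \bcSti$ implies that $\txT$ is valid in $\bcSt$ iff it is valid in $\bcSti$.

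\emph{Safety.} For $\wapprox{W}{\txT}$: given $\QmvB$ with $\QmvB \cap W = \emptyset$, Observation~1 gives $\semTx{\txT}{\bcSt} \sim_{\QmvB} \bcSt$ for every reachable $\bcSt$, which is precisely $\txT \sim_{\QmvB} \bcEmpty$. For $\rapprox{R}{\txT}$: assume $\bcB \sim_{R} \bcBi$ and $\bcB \sim_{\QmvB} \bcBi$ and fix a reachable state; writing $\bcSt$, $\bcSti$ for the states it reaches through $\bcB$ and $\bcBi$, we have $\bcSt \sim_{\txT.\txIn{}} \bcSti$ and $\bcSt \sim_{\QmvB} \bcSti$, so by Observation~2 the transaction $\txT$ is valid in $\bcSt$ iff in $\bcSti$. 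If it is valid in neither, the states are unchanged and we conclude from $\bcSt \sim_{\QmvB} \bcSti$. If it is valid in both, I would check each $o \in \QmvB$ in three cases: $o \in \txT.\txIn{}$ ($o$ is removed from both), $o \in \txT.\txOut{}$ ($o$ is inserted in both), and $o$ in neither (membership is unchanged by $\txT$, hence still matches because $\bcSt \sim_{\QmvB} \bcSti$); this gives $\semTx{\txT}{\bcSt} \sim_{\QmvB} \semTx{\txT}{\bcSti}$, i.e.\ $\bcB\txT \sim_{\QmvB} \bcBi\txT$.

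\emph{Minimality.} For $W$: suppose $\wapprox{\QmvA}{\txT}$; using consistency (and, if needed, prepending a blockchain that creates the outputs referenced in $\txT.\txIn{}$) fix a reachable $\bcSt$ with $\bcSt \valid \txT$. For $o \in \txT.\txIn{}$ we have $o \in U_{\bcSt}$ but $o \notin U_{\semTx{\txT}{\bcSt}}$ (as $\txT.\txIn{}$ and $\txT.\txOut{}$ are disjoint), and for $o \in \txT.\txOut{}$ we have $o \in U_{\semTx{\txT}{\bcSt}}$ but $o \notin U_{\bcSt}$ (otherwise $\txT$ would already occur in the blockchain reaching $\bcSt$, contradicting $\txT.\txIn{} \subseteq U_{\bcSt}$); in both cases $\bcSt$ and $\semTx{\txT}{\bcSt}$ disagree on $\{o\}$, so $\txT \not\sim_{\{o\}} \bcEmpty$ and Definition~\ref{def:safeapprox} forces $o \in \QmvA$. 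For $R$: suppose $\rapprox{\QmvA}{\txT}$ and, for a contradiction, $o \in \txT.\txIn{} \setminus \QmvA$. Let $\bcB$ reach a state $\bcSt$ with $\bcSt \valid \txT$, let $\txTii$ be a transaction whose only input is $o$ and which has no outputs, and set $\bcBi = \bcB\,\txTii$. By Observation~1 the blockchains $\bcB$ and $\bcBi$ induce states differing only on $\{o\}$ from every reachable start, so $\bcB \sim_{\QmvA} \bcBi$ (since $o \notin \QmvA$) and $\bcB \sim_{\QmvB} \bcBi$ for $\QmvB = \txT.\txOut{}$ (since $o \notin \txT.\txOut{}$). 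However, running from $\bcStInit$, $\txT$ is valid after $\bcB$ and creates the outputs in $\QmvB$, whereas after $\bcBi$ its input $o$ is spent, so $\txT$ is invalid and those outputs are absent; thus $\bcB\txT \not\sim_{\QmvB} \bcBi\txT$, contradicting $\rapprox{\QmvA}{\txT}$.

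\emph{Main obstacle.} I expect the delicate point to be the minimality of $R$: one must exhibit $\txTii$ as a genuinely valid transaction, which requires producing a witness that satisfies $o$'s script with $\txTii$ as the redeeming transaction --- obtainable from the fact that $\txT$ already satisfies it together with the standard script semantics --- and one must take a little care with the degenerate case in which $\txT$ has no outputs (choosing $\QmvB$ to be another input of $\txT$ when one exists). A secondary bookkeeping issue is verifying that the states witnessing validity and inequality can be chosen \emph{reachable}, since the ambient definitions quantify over reachable states only.
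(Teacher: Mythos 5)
Your proof follows essentially the same route as the paper's: the same two structural observations (that $\semTx{\txT}{\bcSt}$ differs from $\bcSt$ only on $\txT.\txIn{} \cup \txT.\txOut{}$, and that validity of $\txT$ depends on the state only through $\txT.\txIn{}$), the same disagreement-on-a-single-observable argument for the minimality of $W$, and the same construction for the minimality of $R$ (append a transaction spending the missing input, then observe $\txT.\txOut{}$). The obstacles you flag at the end --- reachability of the witnessing states, the existence of a valid spending transaction $\txTii$, and the degenerate case $\txT.\txOut{} = \emptyset$ --- are genuine, but the paper's own proof glosses over them too, so your treatment is, if anything, the more careful one.
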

\begin{proof}
  We first show that $W$ and $R$ are safe approximations
  of written / read observables,
  and then that they are the least ones.
  \begin{itemize}
    
  \item \emph{$W$ is a safe approximation of written observables.}
    Let $\QmvB$ be such that $\QmvB \cap (\txT.\txIn{} \cup \txT.\txOut{}) = \emptyset$.
    For all blockchain states $\bcSt$, 
    we have that $\semBc{\emptyseq}{\bcSt} = \bcSt$, and:
    \[
    \semBc{\txT}{\bcSt} = \bcSti 
    \qquad \text{where }
    U_{\bcSti} = \begin{cases}
      (U_{\bcSt} \setminus \txT.\txIn{}) \cup \txT.\txOut{} & \text{if $\bcSt \valid \txT$} \\
      U_{\bcSt} & \text{otherwise}
    \end{cases}
    \]
    Since $Q$ and $\txT.\txIn{} \cup \txT.\txOut{}$ are disjoint, 
    we have that $\bcSt \sim_{\QmvB} \bcSti$.
    Therefore, $\txT \sim_{\QmvB} \bcEmpty$.
    By Definition~\ref{def:safeapprox}, it follows that 
    $\wapprox{\txT.\txIn{} \cup \txT.\txOut{}}{\txT}$.

  \item \emph{$R$ is a safe approximation of read observables.}
    Assume that $\bcB[0] \sim_{\txT.\txIn{}} \bcB[1]$.
    Then, $\txT$ is valid in $\bcB[0]$ iff it is valid in $\bcB[1]$.
    Let $\QmvB$ be such that $\bcB[0] \sim_{\QmvB} \bcB[1]$,
    let $\bcSt[0] = \semBc{\bcB[0]}{}$ and $\bcSt[1] = \semBc{\bcB[1]}{}$.
    For $i \in \setenum{0,1}$, we have that:
    \[
    \semBc{\txT}{\bcSt[i]} = \bcSti[i] 
    \qquad \text{where }
    U_{\bcSti[i]} = \begin{cases}
      (U_{\bcSt[i]} \setminus \txT.\txIn{}) \cup \txT.\txOut{} & \text{if $\bcSt[i] \valid \txT$} \\
      U_{\bcSt[i]} & \text{otherwise}
    \end{cases}
    \]
    Since $\bcSt[0] \sim_{\QmvB} \bcSt[1]$, it follows that
    $\bcSti[0] \sim_{\QmvB} \bcSti[1]$, 
    and therefore $\bcB[0]\txT \sim_{\QmvB} \bcB[1]\txT$.
    By Definition~\ref{def:safeapprox}, it follows that   
    $\rapprox{\txT.\txIn{}}{\txT}$.

  \item \emph{$W$ is the least safe approximation of written observables.}
    By contradiction, 
    let $W' \subsetneq W$
    be such that $\wapprox{W'}{\txT}$,
    and let $R' \subsetneq R$
    be such that $\rapprox{R'}{\txT}$.  
    Let $\qmvA \in W \setminus W'$.
    Since $\setenum{\qmvA} \cap W' = \emptyset$ and 
    $\wapprox{W'}{\txT}$,
    then $\txT \sim_{\setenum{\qmvA}} \emptyseq$.
    Since $\txT$ is consistent, there exists $\bcSt$ such that
    $\bcSt \valid \txT$.
    Then, 
    $\semBc{\txT}{\bcSt} = \bcSti$, where
    $U_{\bcSti} = (U_{\bcSt} \setminus \txT.\txIn{}) \cup \txT.\txOut{}$.
    Since $\qmvA \in \txT.\txIn{} \cup \txT.\txOut{}$,
    it follows that $\bcSti\qmvA \neq \bcSt\qmvA$, and so
    $\txT \not\sim_{\setenum{\qmvA}} \emptyseq$ --- contradiction.
    Therefore, $W$ is the least approximation of the observables
    written by $\txT$.
    
  \item \emph{$R$ is the least safe approximation of read observables.}
    Let $\QmvB = \txT.\txOut{}$.
    Since $\txT$ is consistent, 
    there exists a reachable $\bcSti$ such that $\bcSti \valid \txT$.
    Since $\bcSti$ is reachable, there exists $\bcBi$ such that
    $\semBc{\bcBi}{} = \bcSti$.
    Let $\bcB = \bcBi \txTi$, where $\txTi$ spends $R \setminus R'$.
    Then, $\bcB \sim_{R'} \bcBi$ and $\bcB \sim_{\QmvB} \bcBi$.
    Since $\rapprox{R'}{\txT}$, it must be
    $\bcB\txT \sim_{\QmvB} \bcBi\txT$.
    Since in $\bcB$ some of the inputs needed by $\txT$ have been spent,
    we have that $\txT$ is not valid in $\bcB$, and so 
    $\semBc{\bcB\txT}{} = \semBc{\bcB}{}$.
    On the other hand, since $\txT$ is valid in $\bcBi$, then
    $\semBc{\bcBi\txT}{} = \bcStii$, where 
    $U_{\bcStii} = (U_{\bcSti} \setminus \txT.\txIn{}) \cup \txT.\txOut{}$.
    Since $\QmvB = \txT.\txOut{}$ belongs to $\semBc{\bcBi\txT}{}$ 
    but not to $\semBc{\bcB\txT}{}$, 
    it follows that $\bcB\txT \not\sim_{\QmvB} \bcBi\txT$ --- contradiction.
    Therefore, $R$ is the least approximation of the observables
    read by $\txT$.
    \qedhere

  \end{itemize}
\end{proof}

\begin{lem}
  \label{lem:inconsistent-emptyset}
  $\txT$ is inconsistent if and only if 
  $\wapprox{\emptyset}{\txT}$ and $\rapprox{\emptyset}{\txT}$.
\end{lem}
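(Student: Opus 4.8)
The plan is to prove the two directions of the iff separately. The ($\Rightarrow$) direction is a direct unfolding of the Bitcoin state-transition function, whereas the ($\Leftarrow$) direction reuses the ``least safe approximation'' part of Lemma~\ref{lem:safeapprox:btc}.

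For ($\Rightarrow$), assume $\txT$ is inconsistent, i.e.\ $\bcSt \nvalid \txT$ for \emph{every} blockchain state $\bcSt$. Then, by definition of the Bitcoin state-transition function, $\semBc{\txT}{\bcSt} = \bcSt$ for all $\bcSt$, so $\txT$ and $\bcEmpty$ induce the same state transformation, and by induction on $\bcB$ one gets $\semBc{\bcB\txT}{\bcSt} = \semBc{\bcB}{\bcSt}$ for every blockchain $\bcB$. To obtain $\wapprox{\emptyset}{\txT}$: the hypothesis $\QmvB \cap \emptyset = \emptyset$ of Definition~\ref{def:safeapprox} holds for every $\QmvB$, and $\txT \sim_{\QmvB} \bcEmpty$ follows since $\semBc{\txT}{\bcSt} = \bcSt = \semBc{\bcEmpty}{\bcSt}$. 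To obtain $\rapprox{\emptyset}{\txT}$: the hypothesis $\bcB \sim_{\emptyset} \bcBi$ is vacuous, and from $\bcB \sim_{\QmvB} \bcBi$ we get $\bcB\txT \sim_{\QmvB} \bcBi\txT$ immediately, because $\semBc{\bcB\txT}{\bcSt} = \semBc{\bcB}{\bcSt}$ and $\semBc{\bcBi\txT}{\bcSt} = \semBc{\bcBi}{\bcSt}$.

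For ($\Leftarrow$) I would argue by contraposition: assume $\txT$ is consistent and derive $\neg\wapprox{\emptyset}{\txT}$, which already suffices to falsify the conjunction $\wapprox{\emptyset}{\txT} \land \rapprox{\emptyset}{\txT}$. By Lemma~\ref{lem:safeapprox:btc}, the \emph{least} safe approximation of the observables written by $\txT$ is $W = \txT.\txIn{} \cup \txT.\txOut{}$. Since a well-formed Bitcoin transaction has at least one output, $\txT.\txOut{} \neq \emptyset$, hence $W \neq \emptyset$; and since every safe write-approximation must include the least one, $\emptyset$ cannot be a safe write-approximation, i.e.\ $\neg\wapprox{\emptyset}{\txT}$.

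The one genuinely delicate point — and where I expect to spend thought — is this appeal to well-formedness in the ($\Leftarrow$) direction: consistency of $\txT$ by itself does not force its written set to be nonempty, since a degenerate transaction with no inputs and no outputs is valid in every state by Definition~\ref{def:blockchain-btc:valid} yet acts as the identity on the semantics, so it would satisfy both $\wapprox{\emptyset}{\txT}$ and $\rapprox{\emptyset}{\txT}$ while being consistent. Excluding this case requires the standard Bitcoin constraint $\txT.\txOut{} \neq \emptyset$ (implicitly assumed throughout, and also the constraint that makes the $\QmvB = \txT.\txOut{}$ step in the proof of Lemma~\ref{lem:safeapprox:btc} go through). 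Everything else is a mechanical unfolding of Definition~\ref{def:safeapprox} and of the state-transition function.
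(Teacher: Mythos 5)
Your proof is correct, and its forward direction coincides with the paper's: both unfold the state-transition function to get $\semBc{\txT}{\bcSt} = \bcSt$ for every $\bcSt$ and then read off both safety conditions from Definition~\ref{def:safeapprox}. The backward direction is where you diverge. The paper argues directly: $\wapprox{\emptyset}{\txT}$ forces $\txT \sim_{\QmvB} \emptyseq$ for every $\QmvB$, hence $\txT \sim \emptyseq$, hence $\semBc{\txT}{\bcSt} = \bcSt$ for all $\bcSt$, and it concludes that $\txT$ is valid in no state and therefore inconsistent. You instead argue by contraposition through Lemma~\ref{lem:safeapprox:btc}: a consistent transaction has $\txT.\txIn{} \cup \txT.\txOut{}$ as its \emph{least} safe write-approximation, which is nonempty, so $\emptyset$ cannot be safe. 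Both routes ultimately rest on the same fact --- that a valid transaction strictly changes the UTXO set --- but yours packages it as a reuse of the ``least approximation'' argument, while the paper's is self-contained (and, like yours, only needs to refute $\wapprox{\emptyset}{\txT}$, never touching the read-approximation). Your closing caveat is also well taken and is, if anything, a point where you are more careful than the source: a degenerate transaction with empty $\txIn{}$ and $\txOut{}$ satisfies all three conditions of Definition~\ref{def:blockchain-btc:valid} vacuously, is therefore consistent, yet acts as the identity on every state, so it would falsify the lemma; the paper's step ``therefore $\txT$ is not valid in any blockchain state'' silently breaks on exactly this case, and making the well-formedness assumption $\txT.\txOut{} \neq \emptyset$ explicit is the right fix.
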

\begin{proof}
  For the ``only if'' part, assume that $\txT$ is inconsistent.
  For $\wapprox{}{}$, for all $\bcSt$ we have that 
  $\semBc{\txT}{\bcSt} = \bcSt = \semBc{\emptyseq}{\bcSt}$.
  Therefore, for all $\QmvB$ e have $\txT \sim_{\QmvB} \emptyseq$,
  from which it follow that $\wapprox{\emptyset}{\txT}$.
  For all $\rapprox{}{}$, for all $\bcB$, $\bcBi$ and $\QmvB$ we have that
  if $\bcB \sim_{\QmvB} \bcBi$ then $\bcB\txT \sim_{\QmvB} \bcBi\txT$.
  Therefore, $\rapprox{\emptyset}{\txT}$.
  For the ``if'' part, assume that $\wapprox{\emptyset}{\txT}$.
  Then, for all $\QmvB$ it must be $\txT \sim_{\QmvB} \emptyseq$, 
  \ie $\txT \sim \emptyseq$.
  By definition of $\sim$ this implies that, for all $\bcSt$,
  $\semBc{\txT}{\bcSt} = \semBc{\emptyseq}{\bcSt} = \bcSt$.
  Therefore, $\txT$ is not valid in any blockchain state $\bcSt$, 
  and so $\txT$ is inconsistent.
\end{proof}

By exploiting the results above, we can provide an alternative
sufficient condition for (strong) swappability.
If $\txT$ is valid in some state 
where it is also possible to append another transaction $\txTi$
before $\txT$ (\ie, $\txTi \txT$ is valid in that state), 
then $\txT$ and $\txTi$ are strongly swappable.
This is a peculiar property of UTXO-based blockchains like Bitcoin:
in Example~\ref{ex:contextual-txswap:eth} we show that this is not 
the case for Ethereum.

\begin{lem}
  \label{lem:contextual-txswap:btc}
  In Bitcoin,
  if there exists $\bcSt$ such that 
  $\bcSt \valid \txT$ and $\bcSt \valid \txTi \txT$,
  then $\txT \pswap{}{} \txTi$.
\end{lem}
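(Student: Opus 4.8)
The goal is to show that, in Bitcoin, the hypothesis "$\bcSt \valid \txT$ and $\bcSt \valid \txTi \txT$" implies $\txT \pswap{}{} \txTi$. The natural strategy is to instantiate Definition~\ref{def:pswap} using the least safe approximations from Lemma~\ref{lem:safeapprox:btc}. First I would observe that the hypothesis forces both $\txT$ and $\txTi$ to be consistent (each is valid in some reachable state --- $\txT$ in $\bcSt$, and $\txTi$ in $\bcSt$ as well, since $\bcSt \valid \txTi \txT$ unpacks to $\bcSt \valid \txTi$ together with $\semBc{\txTi}{\bcSt} \valid \txT$). Hence Lemma~\ref{lem:safeapprox:btc} applies to both, giving safe approximations $W = \txT.\txIn{} \cup \txT.\txOut{}$, $R = \txT.\txIn{}$ for $\txT$, and symmetrically $W' = \txTi.\txIn{} \cup \txTi.\txOut{}$, $R' = \txTi.\txIn{}$ for $\txTi$.

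**The core disjointness argument.** It then remains to check $(R \cup W) \cap W' = \emptyset = (R' \cup W') \cap W$, i.e. $(\txT.\txIn{} \cup \txT.\txOut{}) \cap (\txTi.\txIn{} \cup \txTi.\txOut{}) = \emptyset$ (the two conjuncts are the same set). I would break this into four intersections. (i) $\txT.\txOut{} \cap \txTi.\txOut{} = \emptyset$: transaction outputs $(\txT, k)$ and $(\txTi, k')$ are distinguished by the transaction component, and $\txT \neq \txTi$ --- strictly, one should note that fresh outputs of distinct transactions are distinct observables. (ii) $\txT.\txIn{} \cap \txTi.\txOut{} = \emptyset$: since $\bcSt \valid \txTi \txT$, the inputs of $\txT$ are unspent in $\semBc{\txTi}{\bcSt}$; but every element of $\txTi.\txOut{}$ was freshly created by $\txTi$ and, being an output of a transaction appearing later than the coinbase, cannot coincide with an input $(\txTii, j)$ of $\txT$ that was already present in $\bcSt$ --- more cleanly, $\txT.\txIn{} \subseteq U_{\bcSt}$ by condition~\eqref{consistent-update:unspent} applied to $\bcSt \valid \txT$, whereas $\txTi.\txOut{}$ are outputs of $\txTi$ which do not belong to $U_{\bcSt}$ (they only enter the UTXO set after $\txTi$ is appended). (iii) $\txT.\txOut{} \cap \txTi.\txIn{} = \emptyset$: symmetrically, $\txTi.\txIn{} \subseteq U_{\bcSt}$ (from $\bcSt \valid \txTi$) while $\txT.\txOut{}$ are not in $U_{\bcSt}$. (iv) $\txT.\txIn{} \cap \txTi.\txIn{} = \emptyset$: this is the real content. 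Since $\bcSt \valid \txTi$, appending $\txTi$ removes $\txTi.\txIn{}$ from the UTXO set; since afterwards $\semBc{\txTi}{\bcSt} \valid \txT$, we have $\txT.\txIn{} \subseteq U_{\semBc{\txTi}{\bcSt}} = (U_{\bcSt} \setminus \txTi.\txIn{}) \cup \txTi.\txOut{}$. Combined with $\txT.\txIn{} \cap \txTi.\txOut{} = \emptyset$ from (ii), we get $\txT.\txIn{} \subseteq U_{\bcSt} \setminus \txTi.\txIn{}$, hence disjoint from $\txTi.\txIn{}$.

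**Concluding.** Putting the four cases together yields $(\txT.\txIn{} \cup \txT.\txOut{}) \cap (\txTi.\txIn{} \cup \txTi.\txOut{}) = \emptyset$, which is exactly the Bernstein-style condition of Definition~\ref{def:pswap} for the chosen $W, W', R, R'$; therefore $\txT \pswap{}{} \txTi$. (One also needs $\txT \neq \txTi$, which is part of the definition of $\pswap{}{}$; this is immediate since if $\txT = \txTi$ then $\bcSt \valid \txTi\txT$ would require $\txT$'s inputs to be unspent after spending them, impossible for a consistent $\txT$ --- or we simply inherit $\txT \neq \txTi$ as a standing assumption.)

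**Main obstacle.** I expect the delicate point to be case (iv) --- the input/input disjointness --- since it is the one that genuinely uses the *sequential* validity hypothesis $\bcSt \valid \txTi\txT$ rather than just separate validity, and it is where the UTXO-model bookkeeping (what is removed from / added to the UTXO set by $\txTi$) must be tracked carefully through the definition of $\semBc{\cdot}{\cdot}$. A secondary subtlety is making precise the "fresh outputs of distinct transactions are distinct observables" claim used in (i)--(iii); the paper treats $\txT.\txOut{}$ as the set $\setenum{(\txT,1),\dots,(\txT,n)}$, so this is really just the injectivity of the pairing on the transaction component, but it is worth stating explicitly since it is what prevents accidental observable collisions.
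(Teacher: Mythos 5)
Your proposal is correct and takes essentially the same route as the paper's proof: it instantiates Definition~\ref{def:pswap} with the least approximations $W = \txT.\txIn{} \cup \txT.\txOut{}$, $R = \txT.\txIn{}$ from Lemma~\ref{lem:safeapprox:btc} and derives the four pairwise input/output disjointness facts from condition~\eqref{consistent-update:unspent} of Definition~\ref{def:blockchain-btc:valid}. You simply spell out the UTXO bookkeeping (and the freshness of outputs and the $\txT \neq \txTi$ side condition) that the paper compresses into a single asserted equation.
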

\begin{proof}
  Let $\bcSt$ be such that $\bcSt \valid \txT$ and $\bcSt \valid \txTi \txT$.
  Then, by condition~\eqref{consistent-update:unspent} 
  of Definition~\ref{def:blockchain-btc:valid}:
  \begin{equation} 
    \label{eq:contextual-txswap:btc}
    \txTi.\txIn{} \cap \txT.\txOut{} 
    \;\; = \;\;
    \txT.\txIn{} \cap \txTi.\txOut{} 
    \;\; = \;\;
    \txT.\txIn{} \cap \txTi.\txIn{} 
    \;\; = \;\;
    \txT.\txOut{} \cap \txTi.\txOut{}
    \;\; = \;\;
    \emptyset  
  \end{equation}
  By Lemma~\ref{lem:safeapprox:btc},
  $\txT.\txIn{} \cup \txT.\txOut{}$ and $\txT.\txIn{}$
  are safe approximations of written/read observables.
  By~\eqref{eq:contextual-txswap:btc}, these approximations satisfy 
  the condition of Definition~\ref{def:pswap}, 
  and so $\txT \pswap{}{} \txTi$.
\end{proof}

The following example shows that the converse of
Theorem~\ref{th:pswap-implies-swap} does not hold in Bitcoin, \ie
there exist transactions which are swappable but not strongly swappable.

\begin{exa}[{Swappable transactions, but not strongly}]
  \label{cex:pswap-implies-swap:btc}
  Consider the transactions in Figure~\ref{fig:pswap-implies-swap:btc},
  where the scripts and currency values are immaterial
  (we just assume that condition~\eqref{consistent-update:output} 
  of Definition~\ref{def:blockchain-btc:valid} is satisfied for each
  matching input/output pair).
  We show that $\txT[1]$ and $\txT[3]$ are swappable.
  Let $\bcSt$ be a blockchain state.
  If $\txT[1]$ is not valid in $\bcSt$, 
  then $\semBc{\txT[1]\txT[3]}{\bcSt} = \semBc{\txT[1]\txT[3]}{\bcSt}$
  holds trivially, since also $\txT[3]$ is not valid.
  Otherwise, if $\bcSt \valid \txT[1]$:
  \begin{align*}
    \semBc{\txT[1]\txT[3]}{\bcSt} 
    & = \semBc{\txT[3]}{\bcSti}
    && \text{where $U_{\bcSti} = (U_{\bcSt} \setminus \txT[1].\txIn{}) 
       \cup \setenum{(\txT[1],1),\, (\txT[1],2)}$}
    \\
    & = \bcSti
    && \text{since $\bcSti \nvalid \txT[3]$, as $(\txT[2],1) \not\in U_{\bcSti}$}
    \\
    \semBc{\txT[3]\txT[1]}{\bcSt} 
    & = \semBc{\txT[1]}{\bcSt}
    && \text{since $\bcSt \nvalid \txT[3]$, as $(\txT[1],1) \not\in U_{\bcSt}$}
    \\
    & = \bcSti
    && \text{where $U_{\bcSti} = (U_{\bcSt} \setminus \txT[1].\txIn{}) 
       \cup \setenum{(\txT[1],1),\, (\txT[1],2)}$}
  \end{align*}
  Therefore, $\txT[1]$ and $\txT[3]$ are swappable.
  We now show that they are not \emph{strongly} swappable.
  Assume that $\txT[1]$ is consistent. 
  Then, by Lemma~\ref{lem:safeapprox:btc} it follows that
  $W_1 = \txT[1].\txIn{} \cup \txT[1].\txOut{}$ is the least safe approximation
  of the observables written by $\txT[1]$.
  Let $\bcSt$ be such that $\txT[1]$ is valid in~$\bcSt$.
  Then, $\txT[3]$ is valid in $\semBc{\txT[1]\txT[2]}{\bcSt}$,
  and so by Lemma~\ref{lem:safeapprox:btc} is also follows that
  $W_3 = \txT[3].\txIn{} \cup \txT[3].\txOut{}$ 
  is the least safe approximation of the observables written by $\txT[3]$.
  Since $(\txT[1],1) \in W_1 \cap W_3$,
  then $\txT[1]$ and $\txT[3]$ are not strongly swappable.
\end{exa}

\newcommand{\txFigBtcA}{%
  \begin{tabular}[t]{|l|l|}
    \hline
    \multicolumn{2}{|c|}{$\txT[1]$} \\
    \hline
    \multicolumn{2}{|l|}{$\txIn[1]{\cdots}$} \\
    \hline
    $\txOut[1]{\cdots}$ & $\txOut[2]{\cdots}$ \\
    \hline
  \end{tabular}
}

\newcommand{\txFigBtcB}{%
  \begin{tabular}[t]{|l|l|}
    \hline
    \multicolumn{2}{|c|}{$\txT[2]$} \\
    \hline
    \multicolumn{2}{|l|}{\txIn[1]{$(\txT[1],2)$}} \\
    \hline
    \multicolumn{2}{|l|}{\txOut[1]{$\cdots$}} \\
    \hline
  \end{tabular}
}

\newcommand{\txFigBtcC}{%
  \begin{tabular}[t]{|l|l|}
    \hline
    \multicolumn{2}{|c|}{$\txT[3]$} \\
    \hline
    \txIn[1]{$(\txT[1],1)$} & \txIn[2]{$(\txT[2],1)$} \\
    \hline
    \multicolumn{2}{|l|}{\txOut[1]{$\cdots$}}
    \\
    \hline
  \end{tabular}
}

\begin{figure}[t]
  \centering\footnotesize
  \begin{tikzpicture}
    \node at (-1,2) {\txFigBtcA};
    \node at (8,2) {\txFigBtcB};
    \node at (4,0.2) {\txFigBtcC};
    \draw [->] (-2,1.4) -- (-2,0.2) -- (1.8,0.2);
    \draw [->] (0.9,1.6) -- (4,1.6) -- (4,2) -- (6.9,2);
    \draw [->] (7.8,1.4) -- (7.8,0.2) -- (6.2,0.2);
  \end{tikzpicture}
  \caption{Transactions $\txT[1]$ and $\txT[3]$ are swappable but not strongly swappable.}
  \label{fig:pswap-implies-swap:btc}
\end{figure}

Finally, we prove that 
strong and parameterized strong swappability coincide in Bitcoin.

\begin{thm}
  \label{th:pswap-implies-pswapWR:btc}
  Let $\txT$, $\txTi$ be consistent Bitcoin transactions.
  If \mbox{$\txT \pswap{}{} \txTi$}, then $\txT \pswapWR \txTi$,
  using the static analyses $\wset{\txT} = \txT.\txIn{} \cup \txT.\txOut{}$ and
  $\rset{\txT} = \txT.\txIn{}$.
\end{thm}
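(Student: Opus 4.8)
The plan is to reduce the statement to the minimality clause of Lemma~\ref{lem:safeapprox:btc}. First I would check that $\wset{}$ and $\rset{}$, as defined in the statement, are genuine static analyses of written/read observables, i.e.\ that $\wapprox{\wset{\txT}}{\txT}$ and $\rapprox{\rset{\txT}}{\txT}$ hold for \emph{every} Bitcoin transaction $\txT$ --- not only for the consistent ones mentioned in the hypothesis --- since this totality is what Definition~\ref{def:pswapWR} presupposes. For consistent $\txT$ this is exactly the content of Lemma~\ref{lem:safeapprox:btc}. For inconsistent $\txT$, Lemma~\ref{lem:inconsistent-emptyset} gives $\wapprox{\emptyset}{\txT}$ and $\rapprox{\emptyset}{\txT}$, and then part~\ref{lem:safeapprox:subseteq} of Lemma~\ref{lem:safeapprox} (widening a safe approximation preserves safety) yields $\wapprox{\txT.\txIn{} \cup \txT.\txOut{}}{\txT}$ and $\rapprox{\txT.\txIn{}}{\txT}$. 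Hence $\txT \pswapWR \txTi$ is a well-formed statement.

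Next I would unfold the hypothesis $\txT \pswap{}{} \txTi$ via Definition~\ref{def:pswap}: there exist $W,W',R,R' \subseteq \Obs$ with $\wapprox{W}{\txT}$, $\wapprox{W'}{\txTi}$, $\rapprox{R}{\txT}$, $\rapprox{R'}{\txTi}$, and
\[
\big( R \cup W \big) \cap W' \;=\; \emptyset \;=\; \big( R' \cup W' \big) \cap W .
\]
The key step is then the minimality half of Lemma~\ref{lem:safeapprox:btc}: since $\txT$ and $\txTi$ are consistent, $\txT.\txIn{} \cup \txT.\txOut{}$ and $\txT.\txIn{}$ are the \emph{least} safe approximations of the observables written and read by $\txT$ (and symmetrically for $\txTi$), so every safe approximation includes them. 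In particular $\wset{\txT} \subseteq W$, $\wset{\txTi} \subseteq W'$, $\rset{\txT} \subseteq R$, and $\rset{\txTi} \subseteq R'$.

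From these inclusions the conclusion is immediate: $(\rset{\txT} \cup \wset{\txT}) \cap \wset{\txTi} \subseteq (R \cup W) \cap W' = \emptyset$ and, symmetrically, $(\rset{\txTi} \cup \wset{\txTi}) \cap \wset{\txT} \subseteq (R' \cup W') \cap W = \emptyset$, which is precisely the condition of Definition~\ref{def:pswapWR}; therefore $\txT \pswapWR \txTi$. The only real subtlety --- and the point that makes this a genuine refinement rather than a triviality --- is that the argument hinges on Lemma~\ref{lem:safeapprox:btc} delivering the \emph{least} approximations: the disjointness in Definition~\ref{def:pswap} is quantified existentially over approximations, so without minimality one could not transfer it to the particular sets computed by $\wset{}$ and $\rset{}$. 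Consistency of $\txT$ and $\txTi$ is used exactly here, since for inconsistent transactions the least approximation is $\emptyset$ and the sets $\txT.\txIn{} \cup \txT.\txOut{}$, $\txT.\txIn{}$ need not be minimal.
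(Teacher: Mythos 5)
Your proof is correct and follows essentially the same route as the paper's: unfold Definition~\ref{def:pswap} to obtain the existentially quantified approximations, invoke the minimality clause of Lemma~\ref{lem:safeapprox:btc} to get $\wset{\txT} \subseteq W$, $\rset{\txT} \subseteq R$ (and symmetrically for $\txTi$), and conclude by monotonicity of the disjointness condition. Your preliminary check that $\wset{}$ and $\rset{}$ are total static analyses (covering inconsistent transactions via Lemma~\ref{lem:inconsistent-emptyset} and Lemma~\ref{lem:safeapprox}\ref{lem:safeapprox:subseteq}) is a small point of care the paper leaves implicit, but it does not change the argument.
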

\begin{proof}
  Since $\txT \pswap{}{} \txTi$, then there exist safe approximations
  $\wapprox{W}{\txT}$,
  $\wapprox{W'}{\txTi}$,
  $\rapprox{R}{\txT}$, and
  $\rapprox{R'}{\txTi}$
  such that
  \(
    \big( R \cup W \big) \cap W'
    =
    \emptyset
    =
    \big( R' \cup W' \big) \cap W
  \).
  Since $\txT$ and $\txTi$ are consistent,
  then by Lemma~\ref{lem:safeapprox:btc}
  the static analyses $\wset{\txT} = \txT.\txIn{} \cup \txT.\txOut{}$ and
  $\rset{\txT} = \txT.\txIn{}$
  give their \emph{least} safe approximation of written/read observables.
  Then, $\wset{\txT} \subseteq W$, $\wset{\txTi} \subseteq W'$,
  $\rset{\txT} \subseteq R$, and $\rset{\txTi} \subseteq R'$.
  Then, the disjointess condition required by Definition~\ref{def:pswapWR}
  holds for the static analyses, and so $\txT \pswapWR \txTi$.
\end{proof}

  \subsection{Swapping transactions in Ethereum}
\label{sec:txswap:eth}

We now illustrate our notions of swappability in Ethereum
through a series of examples.
We postpone to~\Cref{sec:validation} a discussion on how to 
approximate the observables read/written by Ethereum contracts,
so to compute the parameterized strong swappability relation.

\begin{exa}[{Swappability}]
  Recall the contract $\cmvA$ and the blockchain 
  $\bcB = \txT[0] \txT[1] \txT[2]$ 
  from Example~\ref{ex:th-sem}.
  By Definition~\ref{def:swap}, we have that: 
  \begin{itemize}

  \item $\txT[0] \swap \txT[2]$ (see \Cref{fig:swap:nocongruence}, top left).
    Indeed, regardless of whether $\txT[0]$ is appended to the blockchain 
    before or after $\txT[2]$, after their execution we obtain the same state:  $\bcSti\setenum{\bind{\cmvA.x}{1}}$, when $\bcSt \pmvA.\varBalance \geq 1$, and $\bcSt\setenum{\bind{\cmvA.x}{1}}$ otherwise. 

  \item $\txT[1] \nswap \txT[2]$ (see \Cref{fig:swap:nocongruence}, top right).
    Let $\bcSt$ be such that $\bcSt \cmvA. x \neq 0$ and $\bcSt \pmvA.\varBalance = 1$. 
    If we append $\txT[1]$ before $\txT[2]$ we obtain the state $\bcSti = \bcSt - \pmvA:1 + \cmvA :1$ and in this state $\txT[2]$ is idempotent.  
    Instead, the result of executing $\txT[2]$ before $\txT[1]$ is the state $\bcStii = \bcSt - \pmvA:1 + \pmvB : 1$ and in this state $\txT[1]$ is idempotent.
    Therefore, $\txT[1] \nswap \txT[2]$.

  \item $\txT[0] \nswap \txT[1]$ (see \Cref{fig:swap:nocongruence}, bottom).
    Depending on how we append $\txT[0]$ and $\txT[1]$ we obtain two different states. 
    Let $\bcSt$ be such that $\bcSt \cmvA.\valX = 0$ and $\bcSt\pmvA.\varBalance \geq 1$.
    If we append $\txT[1]$ before $\txT[0]$ we obtain the state $\bcSti\setenum{\bind{\cmvA.x}{1}}$.
    Instead, if we append $\txT[0]$ and then $\txT[1]$ we obtain the state $\bcSt\setenum{\bind{\cmvA.x}{1}} - \pmvA:1 + \cmvA : 1$. 

  \end{itemize}
\end{exa}

\begin{figure*}[t]
  \centering
  \hspace{-10pt}
  \begin{subfigure}[b]{0.3\textwidth}
    \adjustbox{scale=0.8,center}{
      \begin{tikzcd}[column sep=normal, row sep=huge]
        & \bcSti = \bcSt - \pmvA:1 + \pmvB:1  \ar[rd, "{\txT[0]}" sloped, end anchor=north west, start anchor=south east] & \\
        \bcSt
        \ar[r, "{\txT[0]}"]
        \ar[ur, start anchor=north east, end anchor=south west,"{\bcSt\pmvA \varBalance \geq 1}"' pos=0, sloped, "{\txT[2]}" near end]
        \ar[dr, start anchor=south east, end anchor=north west,"{\txT[2]}", sloped, "\bcSt\pmvA \varBalance < 1"' near end]
        & \bcSt\setenum{\bind{\cmvA.x}{1}} \ar[r, "\bcSt\pmvA \varBalance \geq 1"', sloped, "{\txT[2]}"] \ar[rd, "\bcSt\pmvA \varBalance < 1"' near end, sloped, "{\txT[2]}"]
        & \bcSti\setenum{\bind{\cmvA.x}{1}}  \\
        & \bcSt \ar[r, "{\txT[0]}"]
        & \bcSt\setenum{\bind{\cmvA.x}{1}}
      \end{tikzcd}
    }
    \subcaption{Proof of $\txT[0] \swap \txT[2]$.}
    \label{fig:proof:t0:t2}
  \end{subfigure}
  \hspace{70pt}
  \begin{subfigure}[b]{0.3\textwidth}
    \adjustbox{scale=0.8,center}{
      \begin{tikzcd}[column sep=small, row sep=small]
        & \bcSti = \bcSt - \pmvA:1 + \cmvA : 1 \ar[loop right,"{\txT[2]}", end anchor=east]  & \\
        \bcSt \ar[ur, "{\txT[1]}", end anchor=south west] \ar[dr, "{\txT[2]}"', end anchor=north west]& \begin{scriptsize}
          \begin{array}{l}
            \bcSt\cmvA \valX \neq 0 \\
            \bcSt\pmvA\varBalance = 1
          \end{array}
        \end{scriptsize}   \\
        & \bcStii = \bcSt - \pmvA:1 + \pmvB : 1 \ar[loop right,"{\txT[1]}", end anchor=east]
      \end{tikzcd}
    }
    \subcaption{Proof of $\txT[1] \nswap \txT[2]$.}
    \label{fig:proof:t1:t2}
  \end{subfigure}
  
  \begin{subfigure}[b]{0.33\textwidth}
    \adjustbox{scale=0.8,center}{   
      \begin{tikzcd}[column sep=small, row sep=normal, negated/.style={
          decoration={markings,
            mark= at position 0.5 with {
              \node[transform shape] (tempnode) {$\slash$};
            }
          },
          postaction={decorate}
        }
        ]
        & \bcSti = \bcSt - \pmvA:1 + \pmvB:1 \ar[r, "{\txT[0]}"] & \bcSti\setenum{\bind{\cmvA.x}{1}} \ar[dd, equal, negated] \\
        \bcSt
        \ar[ru, end anchor=south west, "{\txT[1]}", sloped]  \ar[rd, end anchor=north west, "{\txT[0]}"', sloped]
        &
        \begin{scriptsize}
          \begin{array}{l}
            \bcSt\cmvA \valX = 0 \\
            \bcSt\pmvA\varBalance \geq 1
          \end{array}
        \end{scriptsize}
        &
        \\
        &  \bcSt\setenum{\bind{\cmvA.x}{1}} \ar[r, "{\txT[1]}"] & \bcSt\setenum{\bind{\cmvA.x}{1}} - \pmvA:1 + \cmvA:1
      \end{tikzcd}
    }
    \subcaption{Proof of $\txT[0] \nswap \txT[1]$.}
  \end{subfigure}
  \caption{Proofs for $\txT[0] \swap \txT[2]$ and $\txT[0] \nswap \txT[1]$. A transition $\txT$ from $\bcSt$ can be taken only if the guard below the arrow is satisfied in $\bcSt$.}
  \label{fig:swap:nocongruence}
\end{figure*}

\begin{exa}[{Strong swappability}]
  Let $\cmvA$ be the contract of Example~\ref{ex:th-sem}, and let
  $\contrFun{\funF[3]}{}{\cmdSkip}$ be a function of a contract $\cmv{D}$.  
  Then, consider the following transactions:
  \begin{align*}
    & \txT[3] = \ethtx{1}{\pmvA}{\cmv{D}}{\funF[3]}{}
    && \txT[4] = \ethtx{1}{\pmvB}{\cmvA}{\funF[2]}{\cmv{F}}
  \end{align*}
  where $\pmvA$, $\pmvB$, and $\cmv{F}$ are account addresses.
  Intuitively, $\txT[3] \pswap{}{} \txT[4]$ because they are operating on observables of different addresses.
  Formally, consider the following safe approximations of the written/read observables of $\txT[3]$ and $\txT[4]$:
  \begin{align*}
    & \wapprox{W_3 = \setenum{\pmvA.\varBalance, \cmv{D}.\varBalance}}{\txT[3]}
    && \rapprox{R_3 = \setenum{\pmvA.\varBalance}}{\txT[3]}
    \\
    & \wapprox{W_4 = \setenum{\pmvB.\varBalance, \cmv{F}.\varBalance}}{\txT[4]}
    && \rapprox{R_4 = \setenum{\pmvB.\varBalance}}{\txT[4]}
  \end{align*}
  Since $(W_3 \cup R_3) \cap W_4 = \emptyset = (W_4 \cup R_4) \cap W_3$, the two transactions are strongly swappable.

  Now, consider the following transaction that calls the function $\funF[2]$ with the address $\pmvA$:
  \[
  \txT[5] = \ethtx{1}{\pmvB}{\cmvA}{\funF[2]}{\pmvA}
  \]
  This transaction transfers $1$ currency unit from $\pmvB$ to $\pmvA$.
  Intuitively, since $\txT[5]$ touches $\pmvA.\varBalance$, 
  then it should not be swappable with $\txT[3]$ and $\txT[4]$.
  Formally, consider the following safe approximations $W_5$ and $R_5$:
  \begin{align*}
    & \wapprox{W_5 = \setenum{\pmvB.\varBalance, \pmvA.\varBalance}}{\txT[5]}
      \qquad
      \rapprox{R_5 = \setenum{\pmvB.\varBalance}}{\txT[5]}
  \end{align*}
  Since $W_3 \cap W_5 \ne \emptyset \ne W_4 \cap W_5$,
  then $\neg (\txT[3] \pswap{}{} \txT[5])$ and
  $\neg(\txT[4] \pswap{}{} \txT[5])$.
\end{exa}

The following example shows that the converse of 
Theorem~\ref{th:pswap-implies-swap} does not hold, 
\ie there may exist transactions that are swappable but not strongly swappable. 
This is because of static analyses could produce false negatives.

\begin{exa}[{Swappable transactions, but not strongly}]
  \label{cex:pswap-implies-swap:eth}
  Consider the following functions of a contract $\cmvA[1]$, 
  and the following transactions sent by users $\pmvA$ and $\pmvB$:
  \begin{align*}
    & \contrFun{\funH[1]}{}{\cmdIfTE{\varSender = \pmvA \;\codeand\; \expLookup{\valK[1]} = 0}{\cmdAss{\valK[1]}{1}}{\cmdThrow}} 
    && \txT[1] = \ethtx{1}{\pmvA}{\cmvA[1]}{\funH[1]}{}
    \\
    & \contrFun{\funH[2]}{}{\cmdIfTE{\varSender = \pmvB \;\codeand\; \expLookup{\valK[2]} = 0}{\cmdAss{\valK[2]}{1}}{\cmdThrow}}
    && \txT[2] = \ethtx{1}{\pmvB}{\cmvA[1]}{\funH[2]}{}
  \end{align*}
  We have that $\txT[1]$ and $\txT[2]$ are swappable. 
  To see why, consider the following two cases:
  \begin{enumerate}
  \item a state $\bcSt$ where $\bcSt \pmvA.\varBalance > 1$,
    $\bcSt \pmvB.\varBalance > 1$,  $\bcSt \cmvA[1].\varBalance = n$, 
    $\bcSt \cmvA[1].\valK[1] = 0$ and $\bcSt \cmvA[1].\valK[2] = 0$.
    In $\bcSt$ it holds that:
    \[
    \semTx{\txT[1] \txT[2]}{\bcSt} = \bcSt\setenum{\bind{\cmvA[1].\valK[1]}{1},
      \bind{\cmvA[1].\valK[2]}{1},\bind{\cmvA[1].\varBalance}{n + 2}} = \semTx{\txT[2] \txT[1]}{\bcSt}
    \]
  \item a state $\bcSt$ such that $\bcSt \pmvA.\varBalance < 1$, 
    or $\bcSt \pmvB.\varBalance < 1$, or $\bcSt \cmvA[1].\valK[1] \neq 0$, 
    or $\bcSt \cmvA[1].\valK[2] \neq 0$.
    Since it is not possible that the guards of $\funH[1]$ and $\funH[2]$ 
    are both true, one of $\txT[1]$ 
    or $\txT[2]$ raises an exception, leaving the state unaffected.
    Then, also in this case we have that
    \[
    \semTx{\txT[1] \txT[2]}{\bcSt} = \semTx{\txT[2] \txT[1]}{\bcSt}
    \] 
  \end{enumerate}
  
  However, $\txT[1]$ and $\txT[2]$ are \emph{not} strongly swappable.
  Intuitively, this is because there exist reachable states $\bcSt,\bcSti$ such that $\bcSt \cmvA[1]\valK[1] = 0 = \bcSti \cmvA[1]\valK[2]$.
  Formally, consider the following sets 
  \begin{align*}
    W_1 = \setenum{\expGet{\pmvA}{\varBalance}, \expGet{\cmvA[1]}{\varBalance},
    \expGet{\cmvA[1]}{\valK[1]}}
    &&
       W_2 = \setenum{\expGet{\pmvB}{\varBalance},
       \expGet{\cmvA[1]}{\varBalance}, \expGet{\cmvA[1]}{\valK[2]}}
  \end{align*}
  which are the least safe over-approximations of the written observables 
  by $\txT[1]$ 
  and by $\txT[2]$, 
  respectively.
  This means that every safe approximation of $\txT[1]$ must include the observables of $W_1$, and  similarly for the set $W_2$. 
  Since $W_1 \cap W_2 \neq \emptyset$, 
  then $\txT[1] \pswap{}{} \txT[2]$ does not hold.
\end{exa}

The following example shows that Lemma~\ref{lem:contextual-txswap:btc},
which is specific to Bitcoin and UTXO-based blockchains, 
does not hold on Ethereum.

\begin{exa}
  \label{ex:contextual-txswap:eth}
  Recall the functions $\contrFun{\funF[0]}{}{\cmdAss{\valX}{1}}$ 
  and $\contrFun{\funF[1]}{}{\cmdIfT{\valX=0}{\cmdSend{1}{\pmvB}}}$ 
  from Example~\ref{ex:th-sem},
  and consider the following transactions:
  \[
  \txT[1] = \ethtx{1}{\pmvA}{\cmvA}{\funF[1]}{}
  \qquad
  \txT[5] = \ethtx{1}{\pmvA}{\cmvA}{\funF[0]}{}
  \]
  Let $\bcSt$ be a state such that $\bcSt \, \cmvA.x = 0$
  and $\bcSt \, \pmvA.\varBalance \geq 2$. 
  We have that $\bcSt \valid \txT[1]$, and so 
  $\semBc{\txT[1]}{\bcSt} = \bcSt - \pmvA : 1 + \pmvB: 1$.
  Further, $\bcSt \valid \txT[5]\txT[1]$ and 
  $\bcSt \valid \txT[1]\txT[5]$.
  Then:
  \[
  \bcSt[5,1] 
  = 
  \semBc{\txT[5]\txT[1]}{\bcSt} = \bcSt\setenum{\bind{\cmvA.x}{1}} - \pmvA : 2 + \cmvA: 1
  \qquad
  \bcSt[1,5]
  = 
  \semBc{\txT[1]\txT[5]}{\bcSt} 
  = 
  \bcSt\setenum{\bind{\cmvA.x}{1}} - \pmvA : 2 + \pmvB: 1
  \]
  Hence, $\txT[1]$ and $\txT[5]$ are not swappable, because
  $\bcSt[1,5]$ and $\bcSt[5,1]$ differ in the balances of $\cmvA$ and $\pmvB$.
\end{exa}

\section{True concurrency for blockchains}
\label{sec:txpar}

Given a swappability relation $\relR$, 
we transform a sequence of transactions $\bcB$ into an 
\emph{occurrence net} $\PNet{\relR}{\bcB}$,
which describes the partial order induced by $\relR$. 
Our main result is that any concurrent execution of the transactions in $\bcB$
which respects this partial order is equivalent 
to the serial execution of $\bcB$ (Theorem~\ref{th:bc-to-pnet}).

\paragraph{Occurrence nets}

We start by recapping the notion of Petri net~\cite{Reisig85book}.
A \emph{Petri net} is a tuple
$\netN = (\Places, \Transitions, \Arcs, \markM[0])$,
where $\Places$ is a set of \emph{places},
$\Transitions$ is a set of \emph{transitions}
(with $\Places \cap \Transitions = \emptyset$),
and
$\Arcs : (\Places \times \Transitions) \cup (\Transitions \times \Places) \rightarrow \Nat$
is a \emph{weight function}.
The state of a net is a \emph{marking}, \ie a multiset
$\markM: \Places \rightarrow \Nat$ defining how many \emph{tokens}
are contained in each place;
we denote with $\markM[0]$ the initial marking.
The behaviour of a Petri net is specified as 
a transition relation between markings:
intuitively, a transition $\trT$ is enabled at $\markM$ 
when each place $\placeP$ 
has at least $\Arcs(\placeP,\trT)$ tokens in $\markM$.
When an enabled transition $\trT$ is fired, it consumes 
$\Arcs(\placeP,\trT)$ tokens from each $\placeP$,
and produces $\Arcs(\trT,\placePi)$ tokens in each $\placePi$.
Formally, given $x \in \Places \cup \Transitions$, 
we define the \emph{preset} $\pre{x}$ and the \emph{postset} $\post{x}$ 
as multisets:
$\pre{x}(y) = \Arcs(y,x)$, and 
$\post{x}(y) = \Arcs(x,y)$.
A transition $\trT$ is \emph{enabled} at $\markM$
when $\pre{\trT} \subseteq \markM$.
The transition relation between markings is defined as
$\markM \trans{\trT} \markM - \pre{\trT} + \post{\trT}$, 
where $\trT$ is enabled.
We say that $\trT[1] \cdots \trT[n]$ is a 
\emph{firing sequence from $\markM$ to $\markMi$} when
$\markM \trans{\trT[1]} \cdots \trans{\trT[n]} \markMi$,
and in this case we say that
$\markMi$ is \emph{reachable from} $\markM$.
We say that $\markMi$ is \emph{reachable} when it is reachable from $\markM[0]$.

An \emph{occurrence net}~\cite{Best87tcs} is a Petri net such that:
\begin{inlinelist}
\item \label{item:petri:onet:p}
$\card{\post{\plP}} \leq 1$ for all $\plP$;
\item \label{item:petri:onet:mzero}
$\card{\pre{\plP}} = 1$ if $\plP \not\in \markM[0]$, and
$\card{\pre{\plP}} = 0$ if $\plP\in \markM[0]$;
\item \label{item:petri:onet:relation}
$\Arcs$ is a relation, \ie $\Arcs(x,y) \leq 1$ for all $x,y$;
\item \label{item:petri:onet:po}
$\Arcs^*$ is a acyclic, \ie
$\forall x,y \in \Places \cup \Transitions : (x,y) \in \Arcs^* \land (y,x) \in \Arcs^* \implies x=y$
(where $\Arcs^*$ is the reflexive and transitive closure of $\Arcs$).
\end{inlinelist}

\paragraph{From blockchains to occurrence nets}

We describe in Figure~\ref{def:bc-to-pnet} how to transform 
a blockchain $\bcB = \txT[1] \cdots \txT[n]$ 
into a Petri net $\PNet{\relR}{\bcB}$, 
where $\relR$ is an arbitrary relation between transactions.
Although any relation $\relR$ ensures that $\PNet{\relR}{\bcB}$ 
is an occurrence net (Lemma~\ref{lem:bc-to-pnet:onet}),
our main results hold when $\relR$ is a strong swappability relation.
The transformation works as follows:
the $i$-th transaction in $\bcB$ is rendered as a transition $(\txT[i],i)$ in $\PNet{\relR}{\bcB}$, and
transactions related by $\relR$ are transformed into concurrent transitions.
Technically, this concurrency is specified as 
a relation $<$ between transitions, such that
$(\txT[i],i) < (\txT[j],j)$ whenever $i < j$, 
but $\txT[i]$ and $\txT[j]$ are not related by $\relR$.
The places, the weight function, and the initial marking of $\PNet{\relR}{\bcB}$ 
are chosen to ensure that the firing ot transitions respects the relation $<$.

\begin{figure}[t]
  \begin{align*}
    & \Transitions 
    = \setcomp{(\txT[i],i)}{1 \leq i \leq n}
    \qquad\qquad \Places 
    \; = \setcomp{(*,\trT)}{\trT \in \Transitions} \cup 
      \setcomp{(\trT,*)}{\trT \in \Transitions} \cup \setcomp{(\trT,\trTi)}{\trT < \trTi}
    \\
    & \hspace{188pt} \text{where }
      (\txT,i) < (\txTi,j) \eqdef (i < j) \,\land\, \neg(\txT \,\relR\, \txTi)
    \\[5pt]
    & \Arcs(x,y) 
    = \begin{cases}
      1 & \text{if $y=\trT$ and \big($x=(*,\trT)$ or $x=(\trTi,\trT)$\big)} \\
      1 & \text{if $x=\trT$ and \big($y=(\trT,*)$ or $y =(\trT,\trTi)$\big)} \\
      0 & \text{otherwise}
    \end{cases}
    \hspace{40pt}
    \markM[0](\plP)
    = \begin{cases}
      1 & \text{if $\plP = (*,\trT)$} \\
      0 & \text{otherwise}
    \end{cases}
  \end{align*}
  \caption{Construction of a Petri net from a blockchain $\bcB = \txT[1] \cdots \txT[n]$.}
  \label{def:bc-to-pnet}
\end{figure}

\begin{restatable}{lem}{lembctopnetonet}
\label{lem:bc-to-pnet:onet}
  $\PNet{\relR}{\bcB}$ is an occurrence net, for all $\relR$ and $\bcB$.
\end{restatable}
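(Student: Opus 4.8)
The plan is to verify, one by one, the four properties~\ref{item:petri:onet:p}--\ref{item:petri:onet:po} defining an occurrence net for the net $\PNet{\relR}{\bcB}$ of Figure~\ref{def:bc-to-pnet}, where $\bcB = \txT[1] \cdots \txT[n]$. The observation that makes the bookkeeping go through is that the three families of places, namely $(*,\trT)$, $(\trT,*)$ and $(\trT,\trTi)$ for $\trT < \trTi$, are pairwise disjoint, since the marker $*$ is not a transition. Hence, given a place, one can tell which family it belongs to, and thus read off the value of $\Arcs$ on it directly from the definition. Property~\ref{item:petri:onet:relation} is then immediate, because $\Arcs$ takes values in $\setenum{0,1}$ by construction.

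For~\ref{item:petri:onet:p} and~\ref{item:petri:onet:mzero} I would compute the preset and postset of a place of each family, using the two non-zero clauses in the definition of $\Arcs$. This gives $\pre{(*,\trT)} = \emptyset$ and $\post{(*,\trT)} = \setenum{\trT}$; $\pre{(\trT,*)} = \setenum{\trT}$ and $\post{(\trT,*)} = \emptyset$; and, for $\trT < \trTi$, $\pre{(\trT,\trTi)} = \setenum{\trT}$ and $\post{(\trT,\trTi)} = \setenum{\trTi}$. In particular $\card{\post{\plP}} \leq 1$ for every place, which is~\ref{item:petri:onet:p}. Moreover $\markM[0]$ is, by definition, exactly the set of places of the form $(*,\trT)$, and these are precisely the places with empty preset, whereas every other place has a singleton preset; this is~\ref{item:petri:onet:mzero}.

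The substantive step is~\ref{item:petri:onet:po}, the acyclicity of $\Arcs^*$. Here I would use that $<$ refines the index order, \ie $(\txT[i],i) < (\txT[j],j)$ implies $i < j$, as spelled out in Figure~\ref{def:bc-to-pnet}. Concretely, I would exhibit a rank function $\phi$ on $\Places \cup \Transitions$ with $\phi((\txT[i],i)) = 2i$ for a transition, $\phi((*,\trT)) = 2i-1$ for the source place of $\trT = (\txT[i],i)$, and $\phi(\plP) = 2i+1$ for any place $\plP$ of the form $(\trT,*)$ or $(\trT,\trTi)$ with $\trT = (\txT[i],i)$ (this is well defined, since the first component of such a place is a transition with a definite index). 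One then checks that $\phi$ strictly increases along every arc of $\Arcs$: the arcs $(*,\trT) \to \trT$, $\trT \to (\trT,*)$, $\trT \to (\trT,\trTi)$ and $(\trTi,\trT) \to \trT$ with $\trTi < \trT$ are all clear, and for the bridge arc $(\trT,\trTi) \to \trTi$ with $\trT = (\txT[i],i) < \trTi = (\txT[j],j)$ one has $i < j$, hence $\phi((\trT,\trTi)) = 2i+1 < 2j = \phi(\trTi)$. Since $\phi$ strictly increases along arcs, it strictly increases along every non-empty directed path, so no element lies on a directed cycle, and $(x,y),(y,x) \in \Arcs^*$ forces $x = y$; this is~\ref{item:petri:onet:po}. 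Assembling the four checks yields that $\PNet{\relR}{\bcB}$ is an occurrence net for every $\relR$ and $\bcB$.

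I expect~\ref{item:petri:onet:po} to be the only real obstacle --- not because it is deep, but because it is where one must invoke a genuine structural feature of the construction (a bridge place $(\trT,\trTi)$ always links an earlier transition to a strictly later one), rather than pure bookkeeping. The rank function is the cleanest way to package this, and it makes transparent why the statement holds for an arbitrary $\relR$: only the index component of the ordering $<$ is ever used.
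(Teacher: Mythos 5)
Your proposal is correct. The treatment of the first three conditions (bounded postsets of places, presets determined by membership in $\markM[0]$, and $\Arcs$ being a relation) matches the paper, which simply declares them ``easy to verify'' from the construction in Figure~\ref{def:bc-to-pnet}; your explicit computation of $\pre{\plP}$ and $\post{\plP}$ for each of the three disjoint families of places is exactly the bookkeeping that claim compresses. Where you diverge is on acyclicity of $\Arcs^*$. The paper argues by contradiction: it assumes a directed cycle, observes that such a cycle alternates places and transitions, that a place sitting between two distinct transitions $\trT \neq \trTi$ exists only if $\trT < \trTi$, and then invokes a separate lemma stating that $(\Transitions, <^*)$ is a partial order (itself proved from the fact that $<$ refines the index order); the degenerate case of a cycle through a single transition is handled separately, since it would require a place $(\trT,\trT)$ and hence $i < i$. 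You instead exhibit an integer-valued rank $\phi$ that strictly increases along every arc, which kills all cycles at once, including the degenerate one, and does not need the auxiliary partial-order lemma. Both arguments rest on the same structural fact --- a bridge place $(\trT,\trTi)$ always points from a transition of strictly smaller index to one of strictly larger index --- so the mathematical content is the same; your packaging is a bit more self-contained and makes transparent why $\relR$ is arbitrary, while the paper's factors the index argument into a reusable lemma (\ref{lem:le-star-po}) that it also needs elsewhere. One cosmetic remark: the arc $(\trTi,\trT) \to \trT$ with $\trTi < \trT$ that you list among the ``clear'' cases is, up to renaming, the same bridge arc you then verify explicitly, so that check appears twice; this is harmless.
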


\paragraph{Step firing sequences}

Theorem~\ref{th:bc-to-pnet} establishes a correspondence between 
concurrent and serial execution of transactions.
Since the semantics of serial executions is given in terms of blockchain states $\bcSt$, 
to formalise this correspondence we use the same semantics domain
also for concurrent executions.
This is obtained in two steps.
First, we define concurrent executions of $\bcB$  
as the \emph{step firing sequences} (\ie finite sequences of \emph{sets} of transitions) 
of $\PNet{\pswap{}{}}{\bcB}$.
Then, we give a semantics to step firing sequences, in terms of blockchain states.

We denote finite sets of transitions, called \emph{steps},
as $\trSU,\trSUi,\hdots$.
Their preset and postset are defined as
\(
\textstyle
\pre{\trSU} = \sum_{\plP \in \trSU} \pre{\plP} 
\)
and
\(
\textstyle
\post{\trSU} = \sum_{\plP \in \trSU} \post{\plP}
\),
respectively.
We say that $\trSU$ is \emph{enabled at $\markM$} 
when $\pre{\trSU} \leq \markM$, and in this case
firing $\trSU$ results in the move 
$\markM \trans{\trSU} \markM - \pre{\trSU} + \post{\trSU}$.
Let $\vec{\trSU} = \trSU[1] \cdots \trSU[n]$ be a finite sequence of steps.
We say that $\vec{\trSU}$ is a 
\emph{step firing sequence from $\markM$ to $\markMi$}
if $\markM \trans{\trSU[1]} \cdots \trans{\trSU[n]} \markMi$, 
and in this case we write
$\markM \xrightarrow{\vec{\trSU}} \markMi$.

\paragraph{Concurrent execution of transactions}

To execute transactions in parallel,
the idea is to execute them in \emph{isolation},
and then merge their changes, whenever they are disjoint.
The state updates $\mSubst$ resulting from the execution of a transaction
are formalised as in Section~\ref{sec:transactions}.

An \emph{update collector} is a function $\WR{}$ that,
given a state $\bcSt$ and a transaction $\txT$,
gives a state update $\mSubst = \WR{\bcSt,\txT}$ which maps (at least)
the updated observables to their new values.
In practice, update collectors can be obtained by instrumenting
the run-time environment of blockchains, 
to record the state updates resulting from the execution of transactions.
We formalise update collectors in Definition~\ref{def:wr}
by abstracting from the implementation details of such an instrumentation:

\begin{defi}[{Update collector}]
  \label{def:wr}
  We say that a function $\WR{}$ is an \emph{update collector} when
  $\semTx{\txT}{\bcSt} = \bcSt (\WR{\bcSt,\txT})$,
  for all $\bcSt$ and $\txT$.
\end{defi}

There exists a natural ordering of update collectors,
which extends the ordering between state updates
(\ie, set inclusion, when interpreting them as sets of substitutions):
namely, $\WR{} \sqsubseteq \WRi{}$ holds when
$\forall \bcSt, \txT : \WR{\bcSt,\txT} \subseteq \WRi{\bcSt,\txT}$.
The following lemma characterizes the least 
update collector \wrt this ordering.

\begin{lem}[{Least update collector}]
  \label{lem:wr:minimal}
  Let \mbox{$\WRmin{\bcSt,\txT} = \semTx{\txT}{\bcSt} - \bcSt$},
  where we define $\bcSti - \bcSt$ as
  $\bigcup_{{\bcSti \qmvA \neq \bcSt \qmvA}} \setenum{\bind{\qmvA}{\bcSti \qmvA}}$.
  Then, $\WRmin{}$ is the least update collector.
\end{lem}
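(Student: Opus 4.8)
The statement to prove is Lemma~\ref{lem:wr:minimal}: that $\WRmin{\bcSt,\txT} = \semTx{\txT}{\bcSt} - \bcSt$, with $\bcSti - \bcSt = \bigcup_{\bcSti \qmvA \neq \bcSt \qmvA} \setenum{\bind{\qmvA}{\bcSti \qmvA}}$, is the least update collector. The proof naturally splits into two parts: (i) $\WRmin{}$ is an update collector, i.e.\ it satisfies Definition~\ref{def:wr}; and (ii) it is below every other update collector in the $\sqsubseteq$ ordering.

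For part (i), I would fix an arbitrary state $\bcSt$ and transaction $\txT$, write $\bcSti = \semTx{\txT}{\bcSt}$ and $\mSubst = \WRmin{\bcSt,\txT} = \bcSti - \bcSt$, and show $\bcSt\mSubst = \bcSti$ by checking the two states agree on every observable $\qmvA$. Unfolding the definition of state update application from Section~\ref{sec:transactions}: if $\qmvA \in \dom{\mSubst}$ then $(\bcSt\mSubst)\qmvA = \mSubst\qmvA = \bcSti\qmvA$ by construction of $\mSubst$; if $\qmvA \notin \dom{\mSubst}$ then $(\bcSt\mSubst)\qmvA = \bcSt\qmvA$, and since $\qmvA$ is not in the domain of $\mSubst$ we have (by definition of $\bcSti - \bcSt$ as the union over observables where they differ) that $\bcSt\qmvA = \bcSti\qmvA$. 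Hence $\bcSt\mSubst = \bcSti = \semTx{\txT}{\bcSt}$, so $\WRmin{}$ is an update collector.

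For part (ii), let $\WR{}$ be any update collector and fix $\bcSt,\txT$; I must show $\WRmin{\bcSt,\txT} \subseteq \WR{\bcSt,\txT}$, viewing both as sets of bindings. Take any binding $\setenum{\bind{\qmvA}{\valV}} \in \WRmin{\bcSt,\txT}$; by definition of $\bcSti - \bcSt$ this means $\valV = \bcSti\qmvA$ and $\bcSti\qmvA \neq \bcSt\qmvA$, where $\bcSti = \semTx{\txT}{\bcSt}$. Since $\WR{}$ is an update collector, $\bcSt(\WR{\bcSt,\txT}) = \semTx{\txT}{\bcSt} = \bcSti$. Now suppose toward a contradiction that $\qmvA \notin \dom{\WR{\bcSt,\txT}}$: then $(\bcSt(\WR{\bcSt,\txT}))\qmvA = \bcSt\qmvA \neq \bcSti\qmvA$, contradicting $\bcSt(\WR{\bcSt,\txT}) = \bcSti$. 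So $\qmvA \in \dom{\WR{\bcSt,\txT}}$, and then $(\bcSt(\WR{\bcSt,\txT}))\qmvA = \WR{\bcSt,\txT}\,\qmvA$; but this equals $\bcSti\qmvA = \valV$, so the binding $\setenum{\bind{\qmvA}{\valV}}$ is indeed in $\WR{\bcSt,\txT}$. This gives $\WRmin{} \sqsubseteq \WR{}$.

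Neither part is genuinely hard; the only real subtlety — and the place where care is needed — is keeping the bookkeeping straight between the two readings of a state update $\mSubst$: as a partial function $\Obs \mapstopart \Val$ (used when applying it via the case split in $\bcSt\mSubst$) and as a set of single-binding substitutions (used when speaking of $\subseteq$ and the ordering $\sqsubseteq$). The key fact glueing these together is that $\qmvA \in \dom{\mSubst}$ iff some binding $\setenum{\bind{\qmvA}{-}}$ lies in $\mSubst$, and I would make that correspondence explicit at the outset so that both inclusions read off cleanly. One should also note that $\WRmin{}$ is well-defined as a function, i.e.\ that $\bcSti - \bcSt$ is genuinely a state update (a partial function, not a relation): this holds because each observable $\qmvA$ contributes at most one binding $\setenum{\bind{\qmvA}{\bcSti\qmvA}}$, so no clashes arise in the union.
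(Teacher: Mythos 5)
Your proof is correct. The paper in fact states Lemma~\ref{lem:wr:minimal} without any proof (it is not among the results re-proved in the appendix), and your two-part argument --- checking $\bcSt\,\WRmin{\bcSt,\txT} = \semTx{\txT}{\bcSt}$ by the case split on $\dom{\WRmin{\bcSt,\txT}}$, and deriving $\WRmin{} \sqsubseteq \WR{}$ from the observation that any collector omitting a changed observable from its domain would leave that observable at its old value --- is exactly the routine verification the authors implicitly rely on; your remarks on well-definedness and on the two readings of a state update (partial function vs.\ set of bindings) are sensible and consistent with how the paper uses $\sqsubseteq$.
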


The merge of two state updates is the union of the corresponding substitutions;
to avoid collisions, we make the merge undefined 
when the domains of the two updates overlap. 

\begin{defi}[{Merge of state updates}]
  \label{def:merge}
  Let $\mSubst[0]$, $\mSubst[1]$ be state updates.
  When $\dom{\mSubst[0]} \cap \dom{\mSubst[1]} = \emptyset$,
  we define $\mSubst[0] \mrg \mSubst[1]$ as follows:
  \[
  (\mSubst[0] \mrg \mSubst[1]) \qmvA =
  \begin{cases}
    \mSubst[0] \qmvA & \text{if $\qmvA \in \dom{\mSubst[0]}$} \\
    \mSubst[1] \qmvA & \text{if $\qmvA \in \dom{\mSubst[1]}$} \\
    \bot & \text{otherwise}
  \end{cases}
  \]
\end{defi}

The merge operator enjoys the commutative monoidal laws, and can therefore be
extended to (finite) sets of state updates.

We now associate step firing sequences with state updates.
The semantics of a step $\trSU = \setenum{(\txT[1],1),\ldots,(\txT[n],n)}$
in $\bcSt$ is obtained by applying to $\bcSt$ 
the merge of the updates $\WR{\bcSt,\txT[i]}$, 
for all $i \in 1..n$ --- whenever the merge is defined.
The semantics of a step firing sequence is then obtained by
folding the semantics of its steps.

\begin{defi}[{Semantics of step firing sequences}]
  \label{def:sem:step:seq}  
  We define the semantics of step firing sequences,
  given $\WR{}$ and $\bcSt$, as:
  \[
  \msem{\WR{}}{\bcSt}{\bcEmpty} 
  \; = \;
  \bcSt
  \hspace{30pt}
  \msem{\WR{}}{\bcSt}{\trSU \vec{\trSU}} 
  \; = \;
  \msem{\WR{}}{\bcSti}{\vec{\trSU}} 
  \quad \text{where } \bcSti = 
  \msem{\WR{}}{\bcSt}{\trSU}
  =
  \bcSt \bigoplus_{(\txT,i) \in \trSU} \WR{\bcSt,\txT}
  \]
\end{defi}

\paragraph{Concurrent execution of blockchains}

Theorem~\ref{th:bc-to-pnet} below relates serial executions of transactions 
to concurrent ones (which are rendered as step firing sequences).
Item~\ref{th:bc-to-pnet:confluence} establishes a confluence property:
if two step firing sequences lead to the same marking,
then they also lead to the same blockchain state.
Item~\ref{th:bc-to-pnet:bc} ensures that the blockchain,
interpreted as a sequence of transitions, is a step firing sequence,
and it is \emph{maximal} 
(\ie, there is a bijection between the transactions in the blockchain 
and the transitions of the corresponding net).
Finally, item~\ref{th:bc-to-pnet:maximal}
ensures that executing maximal step firing sequences 
is equivalent to executing serially the entire blockchain.

\begin{restatable}{thm}{thbctopnet}
  \label{th:bc-to-pnet}
  Let $\bcB = \txT[1] \cdots \txT[n]$. 
  Then, in $\PNet{\pswapWR}{\bcB}$:
  \begin{enumerate}[(a)]
    \vspace{-5pt}
  \item \label{th:bc-to-pnet:confluence}
    if $\markM[0] \xrightarrow{\vec{\trSU}} \markM$ and
    $\markM[0] \xrightarrow{\vec{\trSUi}} \markM$,
    then
    $\msem{\WRmin{}}{\bcSt}{\vec{\trSU}} = \msem{\WRmin{}}{\bcSt}{\vec{\trSUi}}$,
    for all reachable $\bcSt$;
    
  \item \label{th:bc-to-pnet:bc}
    $\setenum{(\txT[1],1)} \cdots \setenum{(\txT[n],n)}$
    is a maximal step firing sequence;

  \item \label{th:bc-to-pnet:maximal}
    for all maximal step firing sequences $\vec{\trSU}$,
    for all reachable $\bcSt$,
    $\msem{\WRmin{}}{\bcSt}{\vec{\trSU}} = \semBc{\bcB}{\bcSt}$.

  \end{enumerate}
\end{restatable}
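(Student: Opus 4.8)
The plan is to prove the three items in an order that lets later items reuse earlier ones: first establish the structural facts about $\PNet{\pswapWR}{\bcB}$ as an occurrence net (this is Lemma~\ref{lem:bc-to-pnet:onet}, which I may assume), then item~\ref{th:bc-to-pnet:bc}, then the confluence item~\ref{th:bc-to-pnet:confluence}, and finally deduce item~\ref{th:bc-to-pnet:maximal} as a corollary of the first two. For item~\ref{th:bc-to-pnet:bc}, I would check directly from the construction in Figure~\ref{def:bc-to-pnet} that the singleton-step sequence $\setenum{(\txT[1],1)} \cdots \setenum{(\txT[n],n)}$ is a firing sequence: when $(\txT[i],i)$ is about to fire, its preset consists of $(*,(\txT[i],i))$ together with places $((\txT[j],j),(\txT[i],i))$ for $j<i$ with $\neg(\txT[j] \pswapWR \txT[i])$, and each such place was filled exactly when $(\txT[j],j)$ fired and has not been consumed since (by the occurrence-net property each place has at most one transition in its postset). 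Maximality follows because every transition appears exactly once, so there is the required bijection.

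For item~\ref{th:bc-to-pnet:confluence}, the key observation is that in an occurrence net the marking $\markM$ reached from $\markM[0]$ by a step firing sequence determines exactly the \emph{set of transitions fired} (a transition fires at most once, and $\markM$ records which "output" places are still full). Hence $\vec{\trSU}$ and $\vec{\trSUi}$ fire the same set of transitions $T$, just grouped and ordered differently, and any such sequence is obtained from any other by (i) reordering transitions that are $<$-incomparable and (ii) regrouping. So it suffices to show that the semantics $\msem{\WRmin{}}{\bcSt}{\cdot}$ is invariant under these two operations. Regrouping invariance follows from the commutative monoidal laws of $\mrg$ (stated after Definition~\ref{def:merge}), applied to show that splitting or merging adjacent steps does not change the composite state update, \emph{provided} the relevant domains are disjoint. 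The heart of the matter is therefore the disjointness and the reordering: if $(\txT[i],i)$ and $(\txT[j],j)$ are both enabled at some reachable marking (so they are $<$-incomparable, meaning $\txT[i] \pswapWR \txT[j]$), then I must show $\dom{\WRmin{\bcSt,\txT[i]}} \cap \dom{\WRmin{\bcSt,\txT[j]}} = \emptyset$ and that applying them in either order — or simultaneously via $\mrg$ — yields the same state. Disjointness of domains follows because $\dom{\WRmin{\bcSt,\txT}} \subseteq \wset{\txT}$ (the minimal update touches only written observables, using Lemma~\ref{lem:wr:minimal} together with the definition of $\wapprox{\wset{\txT}}{\txT}$) and $\wset{\txT[i]} \cap \wset{\txT[j]} = \emptyset$ by $\pswapWR$. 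For order-independence of the \emph{values}, I would invoke Lemma~\ref{lem:commutativity-writes} / Lemma~\ref{lem:pswapWR-implies-swap}: $\txT[i] \pswapWR \txT[j]$ implies $\txT[i]\txT[j] \sim \txT[j]\txT[i]$, so the serial executions in the two orders agree on all observables; combined with the fact that each transaction's update depends on its read set, and the read sets are disjoint from the other's write set, the update computed for $\txT[j]$ in state $\bcSt$ equals the one computed in state $\semTx{\txT[i]}{\bcSt}$, which lets the $\mrg$ of the two single-transaction updates coincide with the serial composition. Finally, item~\ref{th:bc-to-pnet:maximal}: by~\ref{th:bc-to-pnet:bc} the canonical singleton sequence is a maximal step firing sequence, and its semantics by Definition~\ref{def:sem:step:seq} unfolds (since each step is a singleton and $\WRmin{}$ is an update collector) to exactly $\semBc{\bcB}{\bcSt}$; any other maximal step firing sequence reaches the same final marking (all transitions fired), so by~\ref{th:bc-to-pnet:confluence} it has the same semantics.

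\textbf{Main obstacle.} The delicate point is item~\ref{th:bc-to-pnet:confluence}: I expect the bookkeeping to show that "same marking $\Rightarrow$ same fired set $\Rightarrow$ connected by commutations and regroupings" is routine for occurrence nets but needs to be stated carefully, and the genuinely substantive step is proving that a single commutation/regrouping preserves $\msem{\WRmin{}}{\bcSt}{\cdot}$. There the subtlety is that $\WRmin{\bcSt,\txT}$ is computed relative to the \emph{current} state, so when I swap $\txT[i]$ and $\txT[j]$ I am comparing $\WRmin{\bcSt,\txT[j]}$ with $\WRmin{\semTx{\txT[i]}{\bcSt},\txT[j]}$; showing these agree (or at least produce the same merged result) is exactly where $\rapprox{\rset{\txT[j]}}{\txT[j]}$ and $\rset{\txT[j]} \cap \wset{\txT[i]} = \emptyset$ must be used, essentially re-deriving a Bernstein-style non-interference argument at the level of the minimal update collector rather than at the level of $\sim$. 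Getting this interface between the Petri-net combinatorics and the semantic lemmas of Section~\ref{sec:txswap} right is the crux of the proof.
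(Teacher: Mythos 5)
Your plan is correct and follows essentially the same route as the paper: item (b) by direct inspection of the construction, item (c) as a corollary of (a) and (b) via the canonical singleton sequence, and for item (a) the same two-layer argument — occurrence-net combinatorics showing that step firing sequences reaching the same marking fire the same transitions and are related by commutations of independent (hence $\pswapWR$-related) transitions, combined with a Bernstein-style non-interference argument for $\WRmin{}$ using $\dom{\WRmin{\bcSt,\txT}} \subseteq \wset{\txT}$ and the read/write disjointness. The crux you single out (that $\WRmin{\bcSt,\txT[j]}$ applied after $\txT[i]$ agrees with $\WRmin{\semTx{\txT[i]}{\bcSt},\txT[j]}$, via $\rapprox{\rset{\txT[j]}}{\txT[j]}$) is precisely what the paper isolates as Lemma~\ref{lem:pi-swap} inside the proof of Theorem~\ref{th:seq-union}, and the paper merely packages your ``commutations and regroupings'' via Mazurkiewicz trace equivalence (Lemmas~\ref{lem:PN-trace-equiv} and~\ref{lem:equivSeq-implies-equivStSeq}).
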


  \begin{figure}[t]
  \centering
  \begin{tikzpicture}[>=stealth',scale=0.7] 
    \tikzstyle{place}=[circle,thick,draw=black!75,minimum size=5mm]
    \tikzstyle{transition}=[rectangle,thick,draw=black!75,minimum size=5mm]
    \tikzstyle{edge}=[->,thick,draw=black!75]
    \tikzstyle{edgered}=[->,thick,draw=red!75]
    \tikzstyle{edgeblu}=[->,thick,draw=blue!75]
    \node[place] (p*1) at (0,3) [label = above:{$(*,\trT[\funF])$}] {};
    \node[place] (p*2) at (11,3) [label = above:{$(*,\trT[\funH])$}] {};
    \node[place] (p*3) at (6,4.5) [label = right:{$(*,\trT[\funG])$}] {};
    \node[place] (p1*) at (2,4.5) [label = right:{$(\trT[\funF],*)$}] {};
    \node[place] (p2*) at (15,3) [label = above:{$(\trT[\funH],*)$}] {};
    \node[place] (p3*) at (8,3) [label = above:{$(\trT[\funG],*)$}] {};
    \node[place] (p13) at (4,3) [label = above:{$(\trT[\funF],\trT[\funG])$}] {};
    \node[place,token = 1] (tok1) at (0,3) {};
    \node[place,token = 1] (tok2) at (11,3) {};
    \node[place,token = 1] (tok3) at (6,4.5) {};
    \node[transition] (t1) at (2,3) {$\trT[\funF]$}
    edge[pre] node {} (p*1)
    edge[post] node {} (p1*)
    edge[post] node {} (p13);
    \node[transition] (t2) at (13,3) {$\trT[\funH]$}
    edge[pre] node {} (p*2)
    edge[post] node {} (p2*);
    \node[transition] (t3) at (6,3) {$\trT[\funG]$}
    edge[pre] node {} (p*3)
    edge[pre] node {} (p13)
    edge[post] node {} (p3*);
  \end{tikzpicture}
  \caption{Occurrence net for Example~\ref{ex:petri:1}.}
  \label{fig:petri:1}
\end{figure}

\begin{exa}[{Occurrence Net construction in Ethereum}]
  \label{ex:petri:1}
  Consider the following Ethereum transactions and functions of a contract $\cmvA$:
  \begin{align*}
    & \txT[\funF] = \ethtx{0}{\pmvA}{\cmvA}{\funF}{}
    &&  \contrFunSig{\funF}{} \, \{
       \cmdIfTE{\expLookup{\valX} = 0}{\cmdAss{\valY}{1}}{\cmdThrow} \}
    \\
    & \txT[\funG] = \ethtx{0}{\pmvA}{\cmvA}{\funG}{}
    && \contrFunSig{\funG}{} \, \{
       \cmdIfTE{\expLookup{\valY} = 0}{\cmdAss{\valX}{1}}{\cmdThrow} \}
    \\
    & \txT[\funH] = \ethtx{0}{\pmvA}{\cmvA}{\funH}{}
    && \contrFunSig{\funH}{} \, \{
       \cmdAss{\valZ}{1} \}
  \end{align*}
  Let the following sets be safe approximations of the corresponding transactions 
  \[
  \QmvA[\funF]^{w} = \QmvA[\funG]^{r} = \setenum{\cmvA.\valY},
  \QmvA[\funF]^{r} = \QmvA[\funG]^{w} = \setenum{\cmvA.\valX},
  \QmvA[\funH]^{w} = \setenum{\cmvA.\valZ},
  \QmvA[\funH]^{r} = \emptyset
  \] 
  where the subscript denotes the transaction and the superscript denotes if the set approximates the read or written keys, \eg, $\QmvA[\funF]^{w}$ safely approximates the keys written by $\txT[\funF]$, whereas $\QmvA[\funG]^{r}$ safely approximates the keys read by $\txT[\funG]$.

  By Definition~\ref{def:pswap} we have that $\txT[\funF] \pswap{}{} \txT[\funH]$ and
  $\txT[\funG] \pswap{}{} \txT[\funH]$, but $\neg(\txT[\funF] \pswap{}{} \txT[\funG])$.
  By instantiating the construction of Figure~\ref{def:bc-to-pnet} using the relation $\pswap{}{}$, we obtain the occurrence net $\PNet{\pswap{}{}}{\txT[\funF]\txT[\funH]\txT[\funG]}$ of Figure~\ref{fig:petri:1}, where $\trT[\funF] = (\txT[\funF],1)$, 
  $\trT[\funH] = (\txT[\funH],2)$, and
  $\trT[\funG] = (\txT[\funG],3)$.
  From this occurrence net is easy to see that transition $\trT[\funG]$ can only be fired after $\trT[\funF]$,
  while $\trT[\funH]$ can be fired independently from 
  $\trT[\funF]$ and $\trT[\funG]$.
  This is coherent with the fact that $\txT[\funH]$ is swappable with
  both $\txT[\funF]$ and $\txT[\funG]$, while $\txT[\funF]$ and $\txT[\funG]$
  are not swappable.  

  Recall that $\WRmin{}$ is the least update collector, \ie a function that given a state $\bcSt$ returns the minimal update $\pi$ mapping qualified keys to their new values. 
  To run in parallel the transactions, we execute them in isolation and then we merge their effect, by merging their state updates. 
  For example, given a state $\bcSt$ such that 
  $\bcSt \cmvA.x = \bcSt \cmvA.y = 0$ the minimal updates for 
  $\trT[\funF], \trT[\funG]$, and $\trT[\funH]$ are:
  \[
  \WRmin{\bcSt,\txT[\funF]} = \setenum{\bind{\cmvA.y}{1}}\qquad
  \WRmin{\bcSt,\txT[\funG]} = \setenum{\bind{\cmvA.x}{1}}\qquad
  \WRmin{\bcSt,\txT[\funH]} = \setenum{\bind{\cmvA.z}{1}}
  \]
  By Definition~\ref{def:sem:step:seq} the parallel execution  of $\trT[\funF], \trT[\funG]$, and $\trT[\funH]$ in $\bcSt$ results in the following states
  \begin{align*}
    & \msem{\WRmin{}}{\bcSt}{\setenum{\trT[\funF],\trT[\funH]}} 
      = \bcSt(\setenum{\bind{\cmvA.y}{1}} \mrg \setenum{\bind{\cmvA.z}{1}}) 
      = \bcSt\setenum{\bind{\cmvA.y}{1},\bind{\cmvA.z}{1}}
    \\
    & \msem{\WRmin{}}{\bcSt}{\setenum{\trT[\funG],\trT[\funH]}} 
      = \bcSt(\setenum{\bind{\cmvA.x}{1}} \mrg \setenum{\bind{\cmvA.z}{1}}) 
      = \bcSt\setenum{\bind{\cmvA.x}{1},\bind{\cmvA.z}{1}}
    \\
    & \msem{\WRmin{}}{\bcSt}{\setenum{\trT[\funF],\trT[\funG]}} 
      = (\bcSt\setenum{\bind{\cmvA.y}{1}} \mrg \setenum{\bind{\cmvA.x}{1}})
      = \bcSt\setenum{\bind{\cmvA.y}{1},\bind{\cmvA.x}{1}}
  \end{align*} 
  Note that, for all $\bcSt$ the serial execution of $\txT[\funF]$ and $\txT[\funH]$ 
  (in both orders) is equal to their concurrent execution
  (similarly  for $\txT[\funG]$ and $\txT[\funH]$):
  \[
  \begin{array}{c}
    \semBc{\txT[\funF]\txT[\funH]}{\bcSt} 
    \; = \; \semBc{\txT[\funH]\txT[\funF]}{\bcSt}
    \; = \; \bcSt\setenum{\bind{\cmvA.y}{1},\bind{\cmvA.z}{1}} 
    \; = \; \msem{\WRmin{}}{\bcSt}{\setenum{\trT[\funF],\trT[\funH]}} 
    \\
    \semBc{\txT[\funG]\txT[\funH]}{\bcSt} 
    \; = \; \semBc{\txT[\funH]\txT[\funG]}{\bcSt}
    \; = \; \bcSt\setenum{\bind{\cmvA.x}{1},\bind{\cmvA.z}{1}} 
    \; = \;
    \msem{\WRmin{}}{\bcSt}{\setenum{\trT[\funG],\trT[\funH]}}
  \end{array}
  \]
  Instead, for all $\bcSt$ such that $\bcSt \cmvA \valX = \bcSt \cmvA \valY = 0$ the concurrent executions of $\txT[\funF]$ and $\txT[\funG]$ may 
  differ from serial ones:
  \[
  \semBc{\txT[\funF]\txT[\funG]}{\bcSt} = 
  \bcSt\setenum{\bind{\cmvA.y}{1}}
  \qquad
  \semBc{\txT[\funG]\txT[\funF]}{\bcSt} = 
  \bcSt\setenum{\bind{\cmvA.x}{1}}
  \qquad
  \msem{\WRmin{}}{\bcSt}{\setenum{\trT[\funF],\trT[\funG]}} =
  \bcSt\setenum{\bind{\cmvA.y}{1},\bind{\cmvA.x}{1}}
  \]
  This is due the fact that $\trT[\funF]$ and $\trT[\funG]$ are \emph{not} concurrent in the occurrence net of Figure~\ref{fig:petri:1}.
  
  Now let $\vec{\trSU} = \setenum{\trT[\funF],\trT[\funH]}\setenum{\trT[\funG]}$ be a maximal step firing sequence of $\PNet{\pswap{}{}}{\bcB}$.
  Since $\trT[\funF]$ and $\trT[\funH]$ are concurrent by item~\ref{th:bc-to-pnet:maximal} of Theorem~\ref{th:bc-to-pnet} we can conclude that the semantics of $\vec{\trSU}$ in the state $\bcSt$ is equivalent to the serial one of $\bcB = \txT[\funF]\txT[\funH]\txT[\funG]$:
  \[
  \semBc{\bcB}{\bcSt} 
  \; = \; 
  \bcSt \setenum{\bind{\cmvA.y}{1}} \setenum{\bind{\cmvA.z}{1}}
  \; = \; 
  \msem{\WRmin{}}{\bcSt}{\vec{\trSU}}
  \]
  It is worth noticing that any other maximal step firing sequence of  $\PNet{\pswap{}{}}{\bcB}$ results in the same state.
  For example, consider $\vec{\trSUi} = \setenum{\trT[\funF]}\setenum{\trT[\funG],\trT[\funH]}$, 
  where the places $(\trT[\funF],*)$, $(\trT[\funG],*)$ and $(\trT[\funH],*)$
  contain one token each, while the other places have no tokens.
  Since $\vec{\trSU}$ and $\vec{\trSUi}$ lead to the same marking,
  by item~\ref{th:bc-to-pnet:confluence} of \Cref{th:bc-to-pnet}
  we conclude that 
  \[
  \msem{\WRmin{}}{\bcSt}{\vec{\trSU}} = \msem{\WRmin{}}{\bcSt}{\vec{\trSUi}}
  \]
  Now, consider $\vec{\trSUii} = \setenum{\trT[\funH]}\setenum{\trT[\funF],\trT[\funG]}$.
  Although $\vec{\trSUii}$ is maximal, it is not a step firing sequence,
  since the second step is not enabled, therefore, no items of~\Cref{th:bc-to-pnet} apply to $\vec{\trSUii}$.
  This is coherent with the fact that $\vec{\trSUii}$ does not represent any sequential execution of $\bcB$.
\end{exa}

\section{Experimental validation}
\label{sec:validation}

In this~\namecref{sec:validation} we discuss 
how to exploit our theoretical results in practice
to improve the performance of blockchain nodes.
We start by sketching the algorithm used by miners and validators
to construct blocks.
Miners should perform the following steps:
\begin{enumerate} 
\item \label{item:txpar:miners:gather}
  gather from the network a set of transactions,
  and put them in an arbitrary linear order $\bcB$, which is the mined block;
\item \label{item:txpar:miners:pswap}
  compute the relation $\pswapWR$ on $\bcB$, using a 
  static analysis of read/written observables;
\item \label{item:txpar:miners:pnet}
  construct the occurrence net $\PNet{\pswapWR}{\bcB}$;
\item \label{item:txpar:miners:exec}
  execute the transactions in $\bcB$ concurrently 
  according to the occurrence net,
  exploiting the available parallelism.
\end{enumerate}

The protocol followed by validators is almost identical to that of miners:
the main difference is that step~\ref{item:txpar:miners:gather} is skipped, 
and at step~\ref{item:txpar:miners:pswap},
the relation $\pswapWR$ is computed starting from the block $\bcB$
to be validated.
Note that the static analysis used by a validator could be different 
from the analysis used by the node which mined $\bcB$, and therefore
the occurrence net could be different from that used by the miner.
However, this is not a problem:
from item~\ref{th:bc-to-pnet:maximal} of \Cref{th:bc-to-pnet} it follows
that executing $\bcB$ on 
any occurrence nets built on any static analysis of read/written variables
leads to the same state.
In this way, blocks do not need to carry the occurrence net as metadata: 
this makes our approach is compatible with any blockchain platform,
without requiring a soft-fork.

For the case of Bitcoin, we argue that implementing this algorithm is 
straightforward: 
indeed, Lemma~\ref{lem:safeapprox:btc}
allows to compute the strong swappability relation
directly from the transactions inputs and outputs.
For Ethereum the problem is more complex, since the algorithm relies
on a static analysis of the observables read/written by transactions.
Therefore, in the rest of this~\namecref{sec:validation} we  
evaluate the feasibility of our approach on Ethereum. 
To this purpose, 
we implement a prototype analyser of Ethereum bytecode, 
and we evaluate its precision on a relevant contract.
We then compare the time of sequential executions of blocks
against their parallel executions 
(which includes the time for the static analysis).
Despite the limitations of the static analysis tool
(that we discuss at the end of the section),
we find that our technique improves the execution time in our experiment.

\paragraph{Analysing Ethereum bytecode}

In general, precise static analyses at the level of the 
Ethereum bytecode are difficult to achieve,
since the language has features like dynamic dispatching and pointer aliasing
which are notoriously a source of imprecision for static analysis.
As far as we know, none of the analysis tools for Ethereum contracts
exports an over-approximation of read/written keys 
which is usable to the purpose of this paper.
The only tool we know of that outputs such an over-approximation
is ES-ETH~\cite{Marcia19eseth}, but it has several limitations
which make its output too coarse to be usable in practice.
So, to perform an empirical validation of our approach
we develop a new prototypical tool~\cite{CLDB}. 
Our tool takes as input the EVM bytecode of a contract
and a sequence of transactions, 
and gives as output the occurrence net,
using the construction in~\Cref{sec:txpar}.
The tool implements as a standalone library~\cite{EthCA} 
a static analysis that over-approximates the read and written keys 
for each function of a given smart contract.
 
Before presenting the design underlying our static analyzer, 
we briefly recall the EVM memory model and how the bytecode generated by Solidity compiler is organized (see~\cite{EVMdoc,ethereumyellowpaper} 
for further details).
The execution of a smart contract involves three kinds of memory: 
\begin{inlinelist}
    \item the \emph{world state}, 
      \ie a mapping from addresses to account information
      (\eg, balances, functions, etc.);
    \item the \emph{contract storage}, mapping keys to values;
    \item the \emph{working memory}, \ie a stack which stores
      function parameters, local variables and temporary values 
      created during the function execution.
\end{inlinelist}
The EVM machine features instructions to load and store values 
from these memories, \eg, \code{SSTORE} and \code{SLOAD} operate 
on the world state.

The Solidity compiler splits the generated bytecode in two sections:
the \emph{constructor code} and the \emph{runtime code}. 
The constructor code is executed upon contract creation, 
and typically returns the runtime code to be deployed on the blockchain.
The runtime code is executed upon a function call. 
This code first initializes the contract storage and the stack, 
and then transfers the control to the body of the function called 
in the transaction.

Our static analysis symbolically executes both the constructor and runtime
code.
Since EVM bytecode has no explicit notion of function declaration, 
we analyze the constructor code and the first part of the runtime code 
to detect which functions are declared in the contract and 
where their code is located.
Once we identify the functions, we analyze their code separately. 
For each function we compute three sets: 
the sets of keys that are read/written by the function, 
and the set of calls made to external contracts.
To construct these sets we exploit a symbolic semantics 
of EVM instructions, that operates on abstract versions of the stack 
and memory storage. 
Intuitively, the analysis of each instruction 
results in an abstract value, 
specifying the operation performed and the affected keys.

\begin{figure}[t]
    \centering\footnotesize
    \begin{tikzpicture}[>=stealth',scale=0.6] 
        \tikzstyle{place}=[circle,thick,draw=black!75,minimum size=5mm]
        \tikzstyle{transition}=[rectangle,thick,draw=black!75,minimum size=5mm,node distance=70mm]
        \tikzstyle{edge}=[->,thick,draw=black!75]
        \tikzstyle{edgered}=[->,thick,draw=red!75]
        \tikzstyle{edgeblu}=[->,thick,draw=blue!75]

        \node[place] (new*p)  at (0, 8) [label = above:{$(*, \trT[n])$}]{};
        \node[place, token = 1] (new*pt) at (0,8) {};
        \node[transition] (new) at (0,6) {$\trT[n]$};
        \node[place] (newp*)  at (0, 4) [label = below:{$(\trT[n],*)$}]{};
        
        \node[transition] (join0) at (-6,4) {$\trT[j_0]$};
        \node[place] (constrjoin0) at (-6,6) [label = left:{$(\trT[n], \trT[j_0])$}] {};
        \node[place] (join0*p) at (-8,4) [label = left:{$(*, \trT[j_0])$}] {};
        \node[place] (join0p*) at (-4,4) [label = right:{$(\trT[j_0], *)$}] {};
        \node[place, token = 1] (join0*pt) at (-8,4) {};

        \node[transition] (join1) at (6,4) {$\trT[j_1]$};
        \node[place] (constrjoin1) at (6,6) [label = right:{$(\trT[n], \trT[j_1])$}] {};
        \node[place] (join1*p) at (8,4) [label = right:{$(*, \trT[j_1])$}] {};
        \node[place, token = 1] (join1*pt) at (8,4) {};
        \node[place] (join1p*) at (4,4) [label = left:{$(\trT[j_1], *)$}] {};

        \node[transition] (commit0) at (-6,0) {$\trT[c_0]$};
        \node[place] (joincommit0) at (-6,2) [label = left:{$(\trT[j_0], \trT[c_0])$}] {};
        \node[place] (commit0*p) at (-8,0) [label = left:{$(*, \trT[c_0])$}] {};
        \node[place] (commit0p*) at (-4,0) [label = right:{$(\trT[c_0], *)$}] {};
        \node[place, token = 1] (commit0*pt) at (-8,0) {}; 
        
        \node[transition] (commit1) at (6,0) {$\trT[c_1]$};
        \node[place] (joincommit1) at (6,2) [label = right:{$(\trT[j_1], \trT[c_1])$}] {};
        \node[place] (commit1*p) at (8,0) [label = right:{$(*, \trT[c_1])$}] {};
        \node[place] (commit1p*) at (4,0) [label = left:{$(\trT[c_1], *)$}] {};
        \node[place, token = 1] (commit1*pt) at (8,0) {};

        \node[transition] (reveal0) at (-6,-4) {$\trT[r_0]$};
        \node[place] (commitreveal0) at (-6,-2) [label = left:{$(\trT[c_0], \trT[r_0])$}] {};
        \node[place] (reveal0*p) at (-8,-4) [label = left:{$(*, \trT[r_0])$}] {};
        \node[place] (reveal0p*) at (-4,-4) [label = right:{$(\trT[r_0], *)$}] {};
        \node[place, token = 1] (reveal0*pt) at (-8,-4) {};

        \node[transition] (reveal1) at (6,-4) {$\trT[r_1]$};
        \node[place] (commitreveal1) at (6,-2) [label = right:{$(\trT[c_1], \trT[r_1])$}] {};
        \node[place] (reveal1*p) at (8,-4) [label = right:{$(*, \trT[r_1])$}] {};
        \node[place] (reveal1p*) at (4,-4) [label = left:{$(\trT[r_1], *)$}] {};
        \node[place, token = 1] (reveal1*pt) at (8,-4) {};       
        
        \node[place] (reveal1win) at (6,-6) [label = right:{$(\trT[r_1], \trT[w])$}] {};
        \node[place] (reveal0win) at (-6,-6) [label = left:{$(\trT[r_0], \trT[w])$}] {};
        \node[place] (win*p)  at (0, -4) [label = above:{$(*, \trT[w])$}]{};
        \node[place, token = 1] (win*pt) at (0,-4) {};
        \node[transition] (win) at (0,-6) {$\trT[w]$};
        \node[place] (winp*)  at (0, -8) [label = below:{$(\trT[w],*)$}]{};
        \node[place]  (commit0reveal1) at (2.5,-2.5) [label = right:{$(\trT[c_0], \trT[r_1])$}] {};
        \node[place]  (commit1reveal0) at (-2.5,-2.5) [label = left:{$(\trT[c_1], \trT[r_0])$}] {};
%
        
%
%
         \draw[->]   (new*p) -- (new);
         \draw[->]   (new) -- (newp*);
         \draw[->]   (new) -- (constrjoin0);
         \draw[->]   (new) -- (constrjoin1);
         
         \draw[->]   (join0*p) -- (join0);
         \draw[->]   (constrjoin0) -- (join0);
         \draw[->]   (join0) -- (join0p*);
         \draw[->]   (join0) -- (joincommit0);
         \draw[->]   (commit0*p) -- (commit0);
         \draw[->]   (joincommit0) -- (commit0);
         \draw[->]   (commit0) -- (commit0p*);
         \draw[->]   (commit0) -- (commitreveal0);
         \draw[->]   (commit0) -- (commit0reveal1);
         \draw[->]   (reveal0*p) -- (reveal0);
         \draw[->]   (commitreveal0) -- (reveal0);
         \draw[->]   (commit1reveal0) -- (reveal0);
         \draw[->]   (reveal0) -- (reveal0p*);
         \draw[->]   (reveal0) -- (reveal0win);
         \draw[->]   (reveal0win) -- (win);
         
         \draw[->]   (join1*p) -- (join1);
         \draw[->]   (constrjoin1) -- (join1);
         \draw[->]   (join1) -- (join1p*);
         \draw[->]   (commit1*p) -- (commit1);
         \draw[->]   (join1) -- (joincommit1);
         \draw[->]   (joincommit1) -- (commit1);
         \draw[->]   (commit1) -- (commit1p*);
         \draw[->]   (commit1) -- (commitreveal1);
         \draw[->]   (commit1) -- (commit1reveal0);
         \draw[->]   (commit0reveal1) -- (reveal1);
         \draw[->]   (reveal1*p) -- (reveal1);
         \draw[->]   (commitreveal1) -- (reveal1);
         \draw[->]   (reveal1) -- (reveal1p*);
         \draw[->]   (reveal1) -- (reveal1win);
         \draw[->]   (reveal1win) -- (win);
         
         \draw[->]    (win*p) -- (win);
         \draw[->]    (win) -- (winp*);
         
        
        
        
    \end{tikzpicture}
    \caption{Occurrence net for the Lottery contract.}
    \label{fig:petri:lottery}
\end{figure}

\begin{table}[t]
  \centering    
  \begin{tabular}{@{}c||ccc@{}}
    \hline
    \textbf{Sequential execution} & \textbf{Net construction} & \textbf{Parallel execution} & \textbf{Total time} \\ [0.5ex] 
    \hline
    $41.38$ ms & $4.4$ ms & $25.02$ ms & $29.42$ ms\\ 
    \hline
  \end{tabular}
  \vspace{10pt}
  \caption{Average times for executing \code{Lottery} sequentially and in parallel. 
    The total time is the sum of times for analyzing the contract bytecode, computing the occurrence net, and of running the transactions in parallel. }
  \label{tbl:experiments}
\end{table}

\paragraph{Experiments}

We experimentally validate our approach
by estimating the potential speed up achieved by running transactions in parallel.
To this purpose we consider a contract
which implements a two-players lottery
(see Listing~\ref{ex:eth:lottery} in the Appendix for its Solidity code).
Intuitively, a user who wants 
to participate in the lottery performs the following steps:
\begin{enumerate}

\item \code{join} the game by sending a certain amount of cryptocurrency, representing the bid; 

\item \code{commit} to a secret string by sending its hash,
which is stored on the contract state; 

\item once both players have completed the commit phase, 
they can \code{reveal} their secrets, independently from each other;

\item once both players have revealed, anyone can call the \code{win} function
to transfer the bets to the winner, who is determined according to the parity of the length of players' secrets.

\end{enumerate}

Once the contract has been initialized (with transaction $\trT[n]$),
a complete execution of the lottery then requires 7 transactions:
$\trT[j_0]$, $\trT[c_0]$, $\trT[r_0]$,
representing the \code{join}, \code{commit} and \code{reveal} 
of the first player, 
$\trT[j_1]$, $\trT[c_1]$, $\trT[r_1]$ for the second player,
and $\trT[w]$ for invoking the \code{win} function.
\Cref{fig:petri:lottery} displays the occurrence net computed by our tool
from a single complete execution of the lottery.
The occurrence net shows that
players can \code{join}, \code{commit} and \code{reveal}
independently from each other.
However, the \code{commit} transactions can be fired 
only after both \code{join} have been fired,
while the \code{reveal} transactions can be fired only after both \code{commit}.
Further, the \code{win} transaction can be fired only after both players 
have revealed their secrets.

To estimate the possible speed up obtained by running the transactions in parallel, we play the whole lottery 10 times, 
generating a total amount of 70 transactions (besides the contract creation).
We first run these transactions sequentially, and measure the execution time of each transaction.
Then, we use our tool to find a parallel schedule, 
and compute the time spent if the transactions were run in parallel.
   
We carry out our experiments on a laptop machine with Intel Core i5-3320M CPU @ 2.60GHz and 4Gb of RAM.%
\footnote{Our scripts and data are available online at \url{https://github.com/lillo/lmcs-analysis-validation}}
We use \code{geth}\footnote{\url{https://geth.ethereum.org/}} 
to setup a development chain, 
and Truffle\footnote{\url{https://www.trufflesuite.com/}}
to deploy a local instance of the \code{Lottery} contract on this chain.

We first compute the sequential execution time by summing up the time spent for running each transaction, 
as reported by the logs of \code{geth}. 
The first column of \Cref{tbl:experiments} displays the time of sequential execution,  averaged over 10 measurements.

Then, we analyze the sequence of transactions using our tool, 
obtaining the occurrence net.
The second column of \Cref{tbl:experiments} displays the average time spent 
by the tool to analyze the transactions and to build the occurrence net 
(again, the measurements are repeated for 10 times).
From the occurrence net, we estimate the average time required 
by the most expensive parallel schedule.
This schedule is computed as the longest and most expensive path (in terms of time) of the occurrence net.
The time required to execute this schedule is in the third column of \Cref{tbl:experiments}.
Note that estimating the cost of the parallel execution in this way
implies that we are assuming to have a sufficient number of threads 
to execute the transactions 
(in the \code{Lottery} experiment, two threads are enough),
and that once a transaction is assigned to a thread it is executed with 
no latency or queuing time.
Finally, the fourth column of \Cref{tbl:experiments} displays 
the total time required to analyze the transactions
and to run them in parallel.

Although the experiment is carried with simplifying assumptions
and on a single contract,
the results of \Cref{tbl:experiments} are a first empirical evidence 
of the practical applicability of our approach, 
and that parallelizing the execution of transaction 
may lead to performance improvements in Ethereum nodes.
We discuss below some current limitations and possible improvements 
of our experimental validation.

\paragraph{Limitations and possible improvements}

The current version of our static analysis tool of Ethereum bytecode
has been developed under some simplifying assumptions.
First, the tool can only analyse contracts whose bytecode respects 
the following conditions,
which are always satisfied for bytecode obtained by the Solidity compiler:
\begin{inlinelist}
\item the constructor code always returns the runtime code;
\item the runtime code does not access the world state in response to a call with an
  invalid function signature;
\item when a transaction calls a valid function, 
  the runtime code always transfers the control to the body of the function. 
\end{inlinelist}
While the tool could be adapted to updates of the Solidy compiler, 
pieces of bytecode not generated by the compiler may easily violate these
conditions, and it seems implausible to obtain a precise analysis
without making any assumption on the structure of bytecode.
However, this should not be an issue in practice, 
since the vast majority of transaction currently occurring in Ethereum blocks
looks like to call contracts with a verified Solidity source%
\footnote{\label{most-contracts-from-solidity}
Although we are not aware of any research explicitly quantifying 
the fraction of Ethereum transactions directed to Solidity contracts,
some empirical evidence of this conjecture can be obtained
by inspecting blocks and their transactions in \url{https://etherscan.io/},
which displays the Solidity code of target contracts.
According to~\cite{Oliva20ese}, $\sim$72\% of all transactions 
sent to contracts target contracts with verified source code.}.

A second simplification used in our tool is that the 
over-approximation of the keys read/written by a transaction
does not exploit the transaction fields 
(besides the called contract and function).
Thus, different calls to the same function but with different 
actual parameters result in the same over-approximation.
Although this simplifies the implementation, 
it may decrease the precision of the analysis, 
because the values of the function parameters are left abstract.
Consequently, the occurrence net constructed by the tool 
contains more dependencies than strictly needed.
For instance, the tool would not detect the swappable transactions 
in the ERC-721 example described in Section~\ref{sec:erc721},
since there the transaction fields are essential to obtain a precise 
over-approximation.
A possible improvement could be to refine the analysis tool
using all the transaction fields. 

Finally, the measurements we performed in our experiment are too coarse-grained
to allow a precise estimation of the speed up achieved by running 
the transactions in parallel. 
For example, we did not consider the overhead required to maintain the threads 
and to dispatch the transactions when executing the schedule given 
by the occurrence net.
To precisely measure this overhead, one would need to  
integrate our approach with an Ethereum node, 
and use it to compute the achieved speed up. 
Although preliminary, the results of our experiment 
shown in \Cref{tbl:experiments} are positive enough
to make us believe that a speed up will be confirmed 
also when taking into account these overheads.

\section{Conclusions}
\label{sec:conclusions}

We have proposed a theory of transaction parallelism for blockchains,
aimed at improving the performance of blockchain nodes.
We have started by introducing a general model of blockchain platforms, 
and we have shown how to instantiate it
to Bitcoin and Ethereum, the two most widespread blockchains.
We have defined two transactions to be \emph{swappable}
when inverting their order does not affect the blockchain state.
Since swappability is undecidable in general,
we have introduced a static approximation, called strong swappability,
based on a static analysis of the observables 
read/written by transactions.
We have rendered concurrent executions of a sequence of transactions as
\emph{step firing sequences} in the associated occurrence net.
Our main technical result, Theorem~\ref{th:bc-to-pnet}, shows that
these concurrent executions are semantically equivalent to the
sequential one.
An initial experimental assessment of our approach in Ethereum
shows that there are margins to make it applicable in practice.

We remark that our work does not address the problem of 
selecting and ordering transactions to maximize the gain of the miner,
\ie it does not proposes strategies to construct blocks of transactions
(step~\ref{item:txpar:miners:gather} in the miner algorithm
described in Section~\ref{sec:validation}).
Rather, our theory studies how to exploit the available parallelism 
to execute a block of transactions, assuming that the block is given
(which is always the case for validators).
Miners can follow different strategies to construct blocks,
driven by the economic incentives provided by the blockchain platform.
In Bitcoin, miner incentives are given by block rewards and
by the fees paid by users for each transaction included in a block.
In Ethereum, besides these incentives, 
miners can extract value directly from smart contracts 
by suitably ordering users' transactions and inserting their own.
This form of \emph{miner extractable value} has become prominent
with the emergence of DeFi contracts like 
decentralized exchanges~\cite{Daian19flash,Qin21quantifying,Zhou21discovery}.
Once a miner has formed a block of transaction according to its strategy,
our theory tells how to speed up its execution by parallelizing transactions.

In Ethereum, malevolent users could attempt 
a denial-of-service attack by bloating the blockchain with 
transactions directed to contracts which are hard to statically analyse.
This would make a na\"ive miner spend a lot of time executing 
the static analysis on these adversarial transactions.
This kind of attacks can be mitigated by miner strategies which
put a strict upper bound to the execution time of the analysis.
Note that, since most transactions in Ethereum are directed to 
a small number of well-known contracts,
like \eg ERC tokens, DeFi contracts, \etc~\cite{Oliva20ese},
to achieve an effective speed up it would be enough to 
parallelize the transactions sent to these contracts,
and execute the transactions sent to unknown contracts
without any concurrency.

Aiming at minimality, our model does not include the \emph{gas mechanism},
which is used in Ethereum to pay miners for executing contracts.
The sender of a transaction deposits into it some crypto-currency,
to be paid to the miner which appends the transaction to the blockchain.
Each instruction executed by the miner consumes part of this deposit;
when the deposit reaches zero, the miner stops executing the transaction.
At this point, all the effects of the transaction
(except the payment to the miner) are rolled back.
Our transaction model could be easily extended with a gas mechanism,
by associating a cost to statements and recording the gas consumption in the environment.
Remarkably, adding gas does not invalidate approximations of read/written keys 
which are correct while neglecting gas.
However, a gas-aware analysis may be more precise of a gas-oblivious one:
for instance, in the statement
$\cmdIfTE{\valK}{\cmdCall{\funF[long]()}{}{}; \cmdAss{\valX}{1}}{\cmdAss{\valY}{1}}$
(where $\funF[long]$ is a function which exceeds the available gas)
a gas-aware analysis would be able to detect that $\valX$ is not written.

\paragraph{Acknowledgements} 
Massimo Bartoletti is partially supported 
by Aut.\ Reg.\ Sardinia project 
\textit{``Sardcoin''}.
Letterio Galletta is partially supported by MIUR project PRIN 2017FTXR7S
\textit{``Methods and Tools for Trustworthy Smart Systems''}.
Maurizio Murgia is partially supported by MIUR PON \textit{``Distributed Ledgers for Secure Open Communities''}.

\bibliographystyle{alpha}
\bibliography{main}

\newcommand{\etalchar}[1]{$^{#1}$}
\begin{thebibliography}{DGK{\etalchar{+}}20}

\bibitem[ABB{\etalchar{+}}18]{Androulaki18eurosys}
Elli Androulaki, Artem Barger, Vita Bortnikov, Christian Cachin, Konstantinos
  Christidis, Angelo~De Caro, David Enyeart, Christopher Ferris, Gennady
  Laventman, Yacov Manevich, Srinivasan Muralidharan, Chet Murthy, Binh Nguyen,
  Manish Sethi, Gari Singh, Keith Smith, Alessandro Sorniotti, Chrysoula
  Stathakopoulou, Marko Vukolic, Sharon~Weed Cocco, and Jason Yellick.
\newblock {Hyperledger} {Fabric}: a distributed operating system for
  permissioned blockchains.
\newblock In {\em EuroSys}, pages 30:1--30:15, 2018.

\bibitem[ABC{\etalchar{+}}18]{bitcoinsok}
Nicola Atzei, Massimo Bartoletti, Tiziana Cimoli, Stefano Lande, and Roberto
  Zunino.
\newblock {SoK}: unraveling {Bitcoin} smart contracts.
\newblock In {\em Principles of Security and Trust ({POST})}, volume 10804 of
  {\em LNCS}, pages 217--242. Springer, 2018.

\bibitem[ABLZ18]{bitcointxm}
Nicola Atzei, Massimo Bartoletti, Stefano Lande, and Roberto Zunino.
\newblock A formal model of {Bitcoin} transactions.
\newblock In {\em {Financial} Cryptography and Data Security}, volume 10957 of
  {\em LNCS}, pages 541--560. Springer, 2018.

\bibitem[AKP{\etalchar{+}}19]{Anjana19pdp}
Parwat~Singh Anjana, Sweta Kumari, Sathya Peri, Sachin Rathor, and Archit
  Somani.
\newblock An efficient framework for optimistic concurrent execution of smart
  contracts.
\newblock In {\em Euromicro International Conference on Parallel, Distributed,
  and Network-Based Processing ({PDP})}, pages 83--92, 2019.

\bibitem[BD87]{Best87tcs}
Eike Best and Raymond~R. Devillers.
\newblock Sequential and concurrent behaviour in petri net theory.
\newblock {\em Theoretical Computer Science}, 55(1):87--136, 1987.

\bibitem[Ber66]{bernstein}
A.~J. Bernstein.
\newblock Analysis of programs for parallel processing.
\newblock {\em {IEEE} Trans. on Electronic Computers}, EC-15(5):757--763, 1966.

\bibitem[BGM19]{BGM19cbt}
Massimo Bartoletti, Letterio Galletta, and Maurizio Murgia.
\newblock A minimal core calculus for {Solidity} contracts.
\newblock In {\em Cryptocurrencies and Blockchain Technology}, volume 11737 of
  {\em LNCS}, pages 233--243. Springer, 2019.

\bibitem[BGM20]{BGM20coordination}
Massimo Bartoletti, Letterio Galletta, and Maurizio Murgia.
\newblock A true concurrent model of smart contracts executions.
\newblock In Simon Bliudze and Laura Bocchi, editors, {\em {COORDINATION}},
  volume 12134 of {\em LNCS}, pages 243--260. Springer, 2020.

\bibitem[But13]{ethereum}
Vitalik Buterin.
\newblock {Ethereum}: a next generation smart contract and decentralized
  application platform.
\newblock \url{https://github.com/ethereum/wiki/wiki/White-Paper}, 2013.

\bibitem[BZ18]{BZ18bitml}
Massimo Bartoletti and Roberto Zunino.
\newblock {BitML}: a calculus for {Bitcoin} smart contracts.
\newblock In {\em {ACM} {SIGSAC} Conference on Computer and Communications
  Security ({CCS})}, pages 83--100. {ACM}, 2018.

\bibitem[CCG08]{CheremCG08}
Sigmund Cherem, Trishul~M. Chilimbi, and Sumit Gulwani.
\newblock Inferring locks for atomic sections.
\newblock In {\em {ACM} {SIGPLAN} {Programming} {Language} {Design} and
  {Implementation} ({PLDI})}, pages 304--315, 2008.

\bibitem[CPZ19]{Crafa19wtsc}
Silvia Crafa, Matteo~Di Pirro, and Elena Zucca.
\newblock Is {Solidity} solid enough?
\newblock In {\em Financial Cryptography Workshops}, volume 11599 of {\em
  LNCS}, pages 138--153. Springer, 2019.

\bibitem[DGHK17]{Dickerson17podc}
Thomas~D. Dickerson, Paul Gazzillo, Maurice Herlihy, and Eric Koskinen.
\newblock Adding concurrency to smart contracts.
\newblock In {\em {ACM} {Symposium} on {Principles} of {Distributed}
  {Computing} ({PODC})}, pages 303--312. ACM, 2017.

\bibitem[DGHK18]{Dickerson18eatcs}
Thomas~D. Dickerson, Paul Gazzillo, Maurice Herlihy, and Eric Koskinen.
\newblock Adding concurrency to smart contracts.
\newblock {\em Bulletin of the {EATCS}}, 124, 2018.

\bibitem[DGK{\etalchar{+}}20]{Daian19flash}
P.~{Daian}, S.~{Goldfeder}, T.~{Kell}, Y.~{Li}, X.~{Zhao}, I.~{Bentov},
  L.~{Breidenbach}, and A.~{Juels}.
\newblock Flash boys 2.0: Frontrunning in decentralized exchanges, miner
  extractable value, and consensus instability.
\newblock In {\em {IEEE} Symposium on Security and Privacy}, pages 910--927.
  {IEEE}, 2020.

\bibitem[DLP11]{Dias11}
Ricardo~J Dias, Jo{\~a}o~M Louren{\c{c}}o, and Nuno~M Pregui{\c{c}}a.
\newblock Efficient and correct transactional memory programs combining
  snapshot isolation and static analysis.
\newblock In {\em USENIX Conf. on Hot topics in Parallelism ({HotPar})}, 2011.

\bibitem[ESES]{ERC721}
William Entriken, Dieter Shirley, Jacob Evans, and Nastassia Sachs.
\newblock {EIP 721}: {ERC-721} non-fungible token standard.
\newblock \url{https://eips.ethereum.org/EIPS/eip-721}.

\bibitem[Eth21]{EVMdoc}
Ethereum.
\newblock {Ethereum} {Virtual} {Machine} ({EVM}).
\newblock \url{https://ethereum.org/it/developers/docs/evm/}, 2021.

\bibitem[Fab]{Fabric19rw}
{Hyperledger} {Fabric}: Read-write set semantics.
\newblock
  \url{https://hyperledger-fabric.readthedocs.io/en/release-1.4/readwrite.html}.

\bibitem[FFB19]{Frowis19fc}
Michael Fr{\"{o}}wis, Andreas Fuchs, and Rainer B{\"{o}}hme.
\newblock Detecting token systems on {Ethereum}.
\newblock In {\em Financial Cryptography and Data Security}, volume 11598 of
  {\em LNCS}, pages 93--112. Springer, 2019.

\bibitem[JKL{\etalchar{+}}20]{Jiao20sp}
Jiao Jiao, Shuanglong Kan, Shang{-}Wei Lin, David San{\'{a}}n, Yang Liu, and
  Jun Sun.
\newblock Semantic understanding of smart contracts: Executable operational
  semantics of {Solidity}.
\newblock In {\em {IEEE} Symposium on Security and Privacy}, pages 1695--1712.
  {IEEE}, 2020.

\bibitem[LNZ{\etalchar{+}}16]{Luu16sharding}
Loi Luu, Viswesh Narayanan, Chaodong Zheng, Kunal Baweja, Seth Gilbert, and
  Prateek Saxena.
\newblock A secure sharding protocol for open blockchains.
\newblock In {\em {ACM} {SIGSAC} Conference on Computer and Communications
  Security}, pages 17--30. {ACM}, 2016.

\bibitem[Mar19]{Marcia19eseth}
Diego Marcia.
\newblock {ES-ETH}: Ethereum state change examiner.
\newblock \url{https://github.com/DiegoMarcia/ES-ETH}, 2019.

\bibitem[Maz88]{Mazurkiewicz88rex}
Antoni~W. Mazurkiewicz.
\newblock Basic notions of trace theory.
\newblock In {\em Linear Time, Branching Time and Partial Order in Logics and
  Models for Concurrency}, volume 354 of {\em LNCS}, pages 285--363. Springer,
  1988.

\bibitem[MCJ18]{Miller18isola}
Andrew Miller, Zhicheng Cai, and Somesh Jha.
\newblock Smart contracts and opportunities for formal methods.
\newblock In {\em International Symposium on Leveraging Applications of Formal
  Methods, Verification and Validation.}, volume 11247 of {\em LNCS}, pages
  280--299. Springer, 2018.

\bibitem[Nak08]{bitcoin}
Satoshi Nakamoto.
\newblock {Bitcoin}: a peer-to-peer electronic cash system.
\newblock \url{https://bitcoin.org/bitcoin.pdf}, 2008.

\bibitem[OHJ20]{Oliva20ese}
Gustavo~Ansaldi Oliva, Ahmed~E. Hassan, and Zhen Ming~(Jack) Jiang.
\newblock An exploratory study of smart contracts in the ethereum blockchain
  platform.
\newblock {\em Empir. Softw. Eng.}, 25(3):1864--1904, 2020.

\bibitem[PKS21]{Pirlea21pldi}
George P\^irlea, Amrit Kumar, and Ilya Sergey.
\newblock Practical smart contract sharding with ownership and commutativity
  analysis.
\newblock In {\em {ACM} {SIGPLAN} {Programming} {Language} {Design} and
  {Implementation} ({PLDI})}, 2021.

\bibitem[QZG21]{Qin21quantifying}
Kaihua Qin, Liyi Zhou, and Arthur Gervais.
\newblock Quantifying blockchain extractable value: How dark is the forest?
\newblock 2021.

\bibitem[Rei85]{Reisig85book}
Wolfgang Reisig.
\newblock {\em {Petri} Nets: An Introduction}, volume~4 of {\em Monographs in
  Theoretical Computer Science. An EATCS Series}.
\newblock Springer, 1985.

\bibitem[SH20]{Saraph19tokenomics}
Vikram Saraph and Maurice Herlihy.
\newblock An empirical study of speculative concurrency in {Ethereum} smart
  contracts.
\newblock In {\em International Conference on Blockchain Economics, Security
  and Protocols ({Tokenomics})}, volume~71 of {\em OpenAccess Series in
  Informatics (OASIcs)}, pages 4:1--4:15, Dagstuhl, Germany, 2020. Schloss
  Dagstuhl--Leibniz-Zentrum fuer Informatik.

\bibitem[SNJ{\etalchar{+}}19]{Sergey19pacmpl}
Ilya Sergey, Vaivaswatha Nagaraj, Jacob Johannsen, Amrit Kumar, Anton Trunov,
  and Ken Chan~Guan Hao.
\newblock Safer smart contract programming with {Scilla}.
\newblock {\em Proc. {ACM} Program. Lang.}, 3({OOPSLA}):185:1--185:30, 2019.

\bibitem[tok]{tokens}
Ethereum token dynamics.
\newblock \url{https://stat.bloxy.info/superset/dashboard/tokens}.

\bibitem[Tos20a]{CLDB}
Samuele Tosi.
\newblock Commandline dependency builder.
\newblock \url{https://github.com/lillo/cmd_net_tool}, 2020.

\bibitem[Tos20b]{EthCA}
Samuele Tosi.
\newblock Ethca: {Ethereum} {Concurrency} {Analyzer}.
\newblock \url{https://github.com/lillo/contract_analyzer}, 2020.

\bibitem[Woo14]{ethereumyellowpaper}
Gavin Wood.
\newblock {Ethereum}: a secure decentralised generalised transaction ledger.
\newblock \url{gavwood.com/paper.pdf}, 2014.

\bibitem[You17]{Young17cointelegraph}
Joseph Young.
\newblock {CryptoKitties} sales hit \$12 million, could be {Ethereum}'s killer
  app after all.
\newblock
  \url{https://cointelegraph.com/news/cryptokitties-sales-hit-12-million-could-be-ethereums-killer-app-after-all},
  2017.

\bibitem[ZQC{\etalchar{+}}21]{Zhou21discovery}
Liyi Zhou, Kaihua Qin, Antoine Cully, Benjamin Livshits, and Arthur Gervais.
\newblock On the just-in-time discovery of profit-generating transactions in
  defi protocols.
\newblock 2021.

\end{thebibliography}

\newpage
\appendix
\section{Proofs for~\Cref{sec:txswap}} \label{sec:proofs:txswap}

\begin{proofof}{Lemma}{lem:sim:equiv}
  Items~\ref{lem:sim:equiv:equiv} and~\ref{lem:sim:equiv:subseteq} are trivial.
  The inclusion $\sim_{\QmvU} \,\subseteq\, \sim$ is trivial,
  and $\sim\, \subseteq\, \sim_{\QmvU}$
  follows from item~\ref{lem:sim:equiv:subseteq}.
\end{proofof}

\begin{proofof}{Lemma}{lem:sim:append}
  Direct from the fact that semantics of transactions is a function,
  and it only depends on the blockchain states after the execution of
  $\bcB$ and $\bcBi$, which are equal starting from any 
  blockchain state $\bcSt$, since $\bcB \sim \bcBi$.
\end{proofof}

\thswapmazurkiewicz*
\begin{proof}
  By definition, $\equivStSeq[\swap]$ is the least equivalence 
  relation closed under the rules:
  \[
  \begin{array}{c}
    \irule{}{\bcEmpty \equivStSeq[\swap] \bcEmpty} 
    \nrule{[$\equivStSeq$0]}
    \quad
    \irule{}{\txT \equivStSeq[\swap] \txT} 
    \nrule{[$\equivStSeq$1]}
    \quad
    \irule{\txT \swap \txTi}{\txT \txTi \equivStSeq[\swap] \txTi \txT} 
    \nrule{[$\equivStSeq$2]}
    \quad
    \irule{\bcB[0] \equivStSeq[\swap] \bcBi[0] \quad 
    \bcB[1] \equivStSeq[\swap] \bcBi[1]}
    {\bcB[0] \bcB[1] \equivStSeq[\swap] \bcBi[0] \bcBi[1]} 
    \nrule{[$\equivStSeq$3]}
  \end{array}
  \] 
  Let $\bcB \equivStSeq[\swap] \bcBi$. 
  We have to show $\bcB \sim \bcBi$.
  We proceed by induction on the rules above. 
  For rules \nrule{[$\equivStSeq$0]} and \nrule{[$\equivStSeq$1]},
  the thesis follows by reflexivity, 
  since $\sim$ is an equivalence relation (Lemma~\ref{lem:sim:equiv}).
  For rule \nrule{[$\equivStSeq$2]}, the thesis follows immediately 
  by Definition~\ref{def:swap}.
  For rule \nrule{[$\equivStSeq$3]}, first note that $\bcB = \bcB[0] \bcB[1]$ and
  $\bcBi = \bcBi[0] \bcBi[1]$. By the induction hypothesis it follows that:
  \[\bcB[0] \sim \bcBi[0] \quad \text{and} \quad \bcB[1] \sim \bcBi[1]\]
  Therefore, by two applications of Lemma~\ref{lem:sim:append}:
  \[
  \bcB = \bcB[0] \bcB[1] \sim \bcB[0] \bcBi[1] \sim \bcBi[0] \bcBi[1] = \bcBi
  \tag*{\qedhere}
  \]
\end{proof}

\begin{proofof}{Lemma}{lem:safeapprox}
  \Cref{lem:safeapprox:subseteq}.
  For the case $\bullet = w$, let $\wapprox{\QmvA}{\txT}$ and 
  $\QmvA \subseteq \QmvAi$. 
  Let $\QmvB$ be such that $\QmvB \cap \QmvAi = \emptyset$.
  We have to show that $\txT \sim_{\QmvB} \bcEmpty$.
  Since $\QmvA \subseteq \QmvAi$,
  it must be $\QmvB \cap \QmvA = \emptyset$. 
  Then, since $\wapprox{\QmvA}{\txT}$,
  it must be $\txT \sim_{\QmvB} \bcEmpty$, as required.
  For the case $\bullet = r$, let $\rapprox{\QmvA}{\txT}$ and 
  $\QmvA \subseteq \QmvAi$. 
  We have to show that, for all $\bcB[1],\bcB[2]$,
  if $\bcB[1] \sim_{\QmvAi} \bcB[2]$ and $\bcB[1] \sim_{\QmvB} \bcB[2]$,
  then $\bcB[1]\txT \sim_{\QmvB} \bcB[2]\txT$. But this follows immediately
  by the fact that $\QmvA \subseteq \QmvAi$ and $\rapprox{\QmvA}{\txT}$.
  
  \medskip\noindent
  \Cref{lem:safeapprox:cap}. 
  Let $\QmvC$ be such that $\QmvC \cap (\QmvA \cap \QmvB) = \emptyset$.
  Since 
  $\safeapprox[w]{\QmvA}{\txT}$ and $(\QmvC \setminus \QmvA) \cap \QmvA =
  \emptyset$, then:
  \[
  \txT \sim_{\QmvC \setminus \QmvA} \bcEmpty
  \]
  Similarly, since $\safeapprox[w]{\QmvB}{\txT}$ and 
  $(\QmvC \setminus \QmvB) \cap \QmvB = \emptyset$, we have that:
  \[
  \txT \sim_{\QmvC \setminus \QmvB} \bcEmpty
  \]
  By assumption $\QmvC \cap (\QmvA \cap \QmvB) = \emptyset$,
  then $(\QmvC \setminus \QmvA) \cup (\QmvC \setminus \QmvB) = \QmvC$.
  By Definition~\ref{def:sim}, we conclude:
  \[
  \txT 
  \; \sim_{\QmvC} \;
  \txT 
  \; \sim_{(\QmvC \setminus \QmvA) \cup (\QmvC \setminus \QmvB)} \;
  \bcEmpty
  \tag*{\qedhere}
  \]
\end{proofof}

\begin{lem}
  \label{lem:equivRel-implies-equiv}
  \(
  \bcB \equivStSeq[\pswap{}{}] \bcBi
  \;\;\implies\;\;
  \bcB \sim \bcBi
  \)
\end{lem}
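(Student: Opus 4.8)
The plan is to observe that $\equivStSeq[\pswap{}{}]$ is, by \Cref{def:mazurkiewicz}, the least congruence on the free monoid $\Tx^*$ relating $\txT\txTi$ with $\txTi\txT$ whenever $\txT \pswap{}{} \txTi$, and that $\sim$ is already a congruence enjoying this closure property; the inclusion then follows from minimality. Concretely, I would mirror the structure of the proof of \Cref{th:swap:mazurkiewicz}: present $\equivStSeq[\pswap{}{}]$ as the least relation closed under the four generating rules (reflexivity on $\bcEmpty$, reflexivity on single transactions, the swap rule for $\pswap{}{}$-related transactions, and closure under concatenation), and proceed by induction on the derivation of $\bcB \equivStSeq[\pswap{}{}] \bcBi$.

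For the two reflexivity base cases the thesis follows from \Cref{lem:sim:equiv}, since $\sim$ is an equivalence relation, hence reflexive. For the swap rule, suppose $\txT \pswap{}{} \txTi$; then \Cref{th:pswap-implies-swap} gives $\txT \swap \txTi$, which by \Cref{def:swap} is exactly $\txT\txTi \sim \txTi\txT$. For the concatenation rule, from the inductive hypotheses $\bcB[0] \sim \bcBi[0]$ and $\bcB[1] \sim \bcBi[1]$ I would conclude $\bcB[0]\bcB[1] \sim \bcBi[0]\bcBi[1]$ by two applications of \Cref{lem:sim:append}, i.e.\ the congruence of $\sim$, again exactly as in the proof of \Cref{th:swap:mazurkiewicz}.

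There is no real obstacle here: the statement is a corollary of results already established. In fact an even shorter route is to chain \Cref{th:pswap:mazurkiewicz} ($\equivStSeq[\pswap{}{}] \subseteq \equivStSeq[\swap]$) with \Cref{th:swap:mazurkiewicz} ($\equivStSeq[\swap] \subseteq \sim$); the inductive argument above is essentially an unfolding of these two inclusions. The only point worth double-checking is that the base case genuinely relies on \Cref{th:pswap-implies-swap}, whose proof in the main text was postponed and discharged via \Cref{lem:pswapWR-implies-swap}; since that dependency is already settled, the induction goes through without circularity.
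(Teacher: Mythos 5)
Your proposal is correct and, in its ``shorter route,'' coincides exactly with the paper's proof, which simply chains \Cref{th:pswap:mazurkiewicz} with \Cref{th:swap:mazurkiewicz}; the inductive argument you sketch first is just an unfolding of that chain and also goes through. Your non-circularity check is apt, since \Cref{th:pswap:mazurkiewicz} is indeed discharged via \Cref{th:pswap-implies-swap} and \Cref{lem:pswapWR-implies-swap} before this lemma is needed.
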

\begin{proof}
  Direct by Theorems~\ref{th:swap:mazurkiewicz} and~\ref{th:pswap:mazurkiewicz}.
\end{proof}

\section{Proofs for~\Cref{sec:txpar}} \label{sec:proofs:txpar}

\begin{lem}
  \label{lem:merge-comm-monoid}
  $\mrg$ is commutative and associative, 
  with $\lambda \qmvA. \bot$ as neutral element.
\end{lem}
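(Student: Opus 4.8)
The plan is to unfold Definition~\ref{def:merge} and verify the three algebraic laws by pointwise reasoning on observables, the only care needed being that $\mrg$ is a \emph{partial} operation (undefined when the domains overlap), so ``commutative / associative / unital'' is to be read as: whenever one side is defined, so is the other, with the same value. As a preliminary step I would record the auxiliary fact that, whenever $\mSubst[0] \mrg \mSubst[1]$ is defined (i.e.\ $\dom{\mSubst[0]} \cap \dom{\mSubst[1]} = \emptyset$), its domain is exactly $\dom{\mSubst[0]} \cup \dom{\mSubst[1]}$, since by the definition $(\mSubst[0] \mrg \mSubst[1])\qmvA \neq \bot$ precisely when $\qmvA \in \dom{\mSubst[0]}$ or $\qmvA \in \dom{\mSubst[1]}$. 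This small observation is what makes the definedness bookkeeping in the associativity case go through cleanly.

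For commutativity, I would first note that the definedness condition $\dom{\mSubst[0]} \cap \dom{\mSubst[1]} = \emptyset$ is symmetric, so the two sides are defined simultaneously; then fix an observable $\qmvA$ and do a trivial case split: if $\qmvA \in \dom{\mSubst[0]}$ then disjointness forces $\qmvA \notin \dom{\mSubst[1]}$, and both $(\mSubst[0] \mrg \mSubst[1])\qmvA$ and $(\mSubst[1] \mrg \mSubst[0])\qmvA$ evaluate to $\mSubst[0]\qmvA$ (via the first clause on the one side, the second clause on the other); symmetrically for $\qmvA \in \dom{\mSubst[1]}$; and otherwise both are $\bot$. For the neutral element $\lambda\qmvA.\bot$, I would use that its domain is empty, so $\mSubst \mrg (\lambda\qmvA.\bot)$ is always defined, and a one-line case split on whether $\qmvA \in \dom{\mSubst}$ shows it equals $\mSubst$; the other unit law then follows from commutativity.

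Associativity is the step that needs a bit of attention. Using the domain-union fact, $(\mSubst[0] \mrg \mSubst[1]) \mrg \mSubst[2]$ is defined iff $\dom{\mSubst[0]} \cap \dom{\mSubst[1]} = \emptyset$ and $(\dom{\mSubst[0]} \cup \dom{\mSubst[1]}) \cap \dom{\mSubst[2]} = \emptyset$, which is equivalent to $\mSubst[0], \mSubst[1], \mSubst[2]$ having pairwise disjoint domains; the same computation applied to $\mSubst[0] \mrg (\mSubst[1] \mrg \mSubst[2])$ gives the identical condition, so both sides are defined under the same hypothesis. Assuming pairwise disjointness, each $\qmvA$ lies in at most one of the three domains, and a three-way case split (together with the ``$\qmvA$ in none'' case, which yields $\bot$ on both sides) shows that both groupings return $\mSubst[i]\qmvA$ exactly when $\qmvA \in \dom{\mSubst[i]}$. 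Overall the argument is routine; the only genuine obstacle is keeping track of definedness in the associativity law, and that is exactly what the preliminary domain-union observation dispatches.
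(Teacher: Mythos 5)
Your proposal is correct; the paper itself dismisses this lemma as ``Trivial,'' and your pointwise case analysis, together with the careful bookkeeping of definedness for the partial operation $\mrg$ (in particular the domain-union observation used for associativity), is exactly the routine verification the paper leaves implicit.
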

\begin{proof}
  Trivial.
\end{proof}

\begin{lem}
  \label{lem:mrg-seq}
  If $\mSubst[1] \mrg \mSubst[2] = \mSubst$, then 
  $\mSubst = \mSubst[1]\mSubst[2]$.
\end{lem}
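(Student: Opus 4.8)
The plan is to prove the identity pointwise on observables. First I would note that the hypothesis $\mSubst[1] \mrg \mSubst[2] = \mSubst$ presupposes that $\mSubst[1] \mrg \mSubst[2]$ is defined, so by Definition~\ref{def:merge} we have $\dom{\mSubst[1]} \cap \dom{\mSubst[2]} = \emptyset$. Here $\mSubst[1]\mSubst[2]$ is to be read by reusing the ``apply a state update'' operation of \Cref{sec:transactions} with the partial map $\mSubst[1]$ playing the role of the state, so that $(\mSubst[1]\mSubst[2])\,\qmvA = \mSubst[2]\,\qmvA$ when $\qmvA \in \dom{\mSubst[2]}$ and $(\mSubst[1]\mSubst[2])\,\qmvA = \mSubst[1]\,\qmvA$ otherwise. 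The disjointness of the two domains is the only ingredient needed: it makes the three-way case analysis of Definition~\ref{def:merge} line up with this two-way one.

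Then, fixing an arbitrary observable $\qmvA$, I would split into three cases. If $\qmvA \in \dom{\mSubst[1]}$, then by disjointness $\qmvA \notin \dom{\mSubst[2]}$, hence $(\mSubst[1]\mSubst[2])\,\qmvA = \mSubst[1]\,\qmvA = (\mSubst[1] \mrg \mSubst[2])\,\qmvA$. If $\qmvA \in \dom{\mSubst[2]}$, then $(\mSubst[1]\mSubst[2])\,\qmvA = \mSubst[2]\,\qmvA = (\mSubst[1] \mrg \mSubst[2])\,\qmvA$. If $\qmvA$ lies in neither domain, both sides are undefined ($\bot$). Thus the two maps agree on every observable, whence $\mSubst[1] \mrg \mSubst[2] = \mSubst[1]\mSubst[2]$, and combining with the hypothesis, $\mSubst = \mSubst[1]\mSubst[2]$.

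I do not expect a genuine obstacle: this is essentially a bookkeeping identity stating that merging two updates with disjoint domains coincides with applying one after the other. The only point requiring a modicum of care is invoking the disjointness hypothesis to exclude the case in which the ``sequential'' application $\mSubst[1]\mSubst[2]$ would overwrite an entry already set by $\mSubst[1]$ — which cannot occur precisely because $\mSubst[1] \mrg \mSubst[2]$ is assumed to be defined.
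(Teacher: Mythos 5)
Your proof is correct and follows essentially the same route as the paper's: derive disjointness of the domains from the definedness of the merge, then check the identity pointwise via the same three-way case analysis. The only cosmetic difference is that you make explicit the reading of $\mSubst[1]\mSubst[2]$ as sequential application of updates, which the paper leaves implicit.
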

\begin{proof}
  Since $\mSubst[1] \mrg \mSubst[2]$ is defined, it must be
  $\keys{\mSubst[1]} \cap \keys{\mSubst[2]} = \emptyset$.
  Let $\qmvA$ be an observable. We have two cases:
  \begin{itemize}
  \item $\qmvA \in \keys{\mSubst}$. Since $\keys{\mSubst} = \keys{\mSubst[1]} 
    \cup \keys{\mSubst[2]}$, we have two subcases:
    \begin{itemize}
    \item $\qmvA \in \keys{\mSubst[1]}$. Then, $\mSubst \qmvA = \mSubst[1]\qmvA$.
      By disjointness, $\qmvA \not\in \keys{\mSubst[2]}$, and hence
      $\mSubst[1]\mSubst[2]\qmvA = \mSubst[1]\qmvA$.
    \item $\qmvA \in \keys{\mSubst[2]}$. Then, $\mSubst \qmvA = \mSubst[2]\qmvA = 
      \mSubst[1]\mSubst[2]\qmvA$.
    \end{itemize}
  \item $\qmvA \not\in \keys{\mSubst}$. 
    Then, $\qmvA \not\in \keys{\mSubst[1]}$,
    $\qmvA \not\in \keys{\mSubst[2]}$, 
    and so $\mSubst \qmvA = \bot = \mSubst[1]\mSubst[2]\qmvA$.
    \qedhere
  \end{itemize}
\end{proof}

\begin{lem}
  If $\bcB[1] \seqn \txsA[1]$ and $\bcB[2] \seqn \txsA[2]$, then 
  $\bcB[1]\bcB[2] \seqn (\txsA[1] \cup \txsA[2])$.
\end{lem}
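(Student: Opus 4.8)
The plan is to first recall the definition of the sequentialization relation $\seqn$ and then proceed by a straightforward induction on the structure of the first argument $\bcB[1]$. Reading $\bcB \seqn \txsA$ as ``$\txsA$ is the (finite) set of transactions occurring in the sequence $\bcB$'', the relation is given inductively by the clauses $\bcEmpty \seqn \emptyset$ and $\txT\bcB \seqn (\setenum{\txT} \cup \txsA)$ whenever $\bcB \seqn \txsA$. The only structural facts I would need are that concatenation in the free monoid $\Tx^*$ is associative and satisfies $(\txT\bcB)\bcB' = \txT(\bcB\bcB')$, together with associativity, commutativity and idempotence of $\cup$.

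For the base case $\bcB[1] = \bcEmpty$, the hypothesis $\bcB[1] \seqn \txsA[1]$ forces $\txsA[1] = \emptyset$, and then $\bcB[1]\bcB[2] = \bcB[2] \seqn \txsA[2] = \emptyset \cup \txsA[2] = \txsA[1] \cup \txsA[2]$. For the inductive step $\bcB[1] = \txT\bcB$, the hypothesis $\bcB[1] \seqn \txsA[1]$ must have been derived by the second clause, so $\txsA[1] = \setenum{\txT} \cup \txsA$ for some $\txsA$ with $\bcB \seqn \txsA$. Applying the induction hypothesis to $\bcB$ and $\bcB[2]$ gives $\bcB\bcB[2] \seqn \txsA \cup \txsA[2]$. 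Since $\bcB[1]\bcB[2] = \txT(\bcB\bcB[2])$, one more application of the second clause yields $\bcB[1]\bcB[2] \seqn \setenum{\txT} \cup (\txsA \cup \txsA[2]) = (\setenum{\txT} \cup \txsA) \cup \txsA[2] = \txsA[1] \cup \txsA[2]$, which is the claim.

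I do not expect any real obstacle here: the statement is a routine congruence property of ``the set of transactions of a blockchain'' with respect to concatenation, and the argument above is little more than unfolding definitions. The only point deserving a second look is the precise convention behind $\seqn$: if the intended definition additionally requires $\bcB$ to be duplicate-free (so that $\txsA$ records each transaction with multiplicity one), then the lemma should be read with multiset union in place of $\cup$, or else carry the side hypothesis $\txsA[1] \cap \txsA[2] = \emptyset$ — which, if present, is easily threaded through the induction, since the transactions of $\bcB$ form a subset of those of $\bcB[1]$ and hence stay disjoint from $\txsA[2]$. I would confirm the paper's convention and adjust the statement, if needed, before finalizing the proof.
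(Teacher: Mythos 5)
Your argument is correct in substance and is the same routine structural induction that the paper uses, but it is oriented the wrong way around relative to the paper's actual definition of $\seqn$ (Definition~\ref{def:mset-serial}): there the second clause derives $\bcB\txT \seqn (\msetenum{\txT} + \txsA)$ from $\bcB \seqn \txsA$, so sequences grow by appending on the \emph{right}, and the targets are \emph{multisets} combined with multiset sum. Accordingly, the paper inducts on $\card{\bcB[2]}$ and peels off the \emph{last} transaction of $\bcB[2]$, for which the inversion you invoke (``the derivation must end with the second clause, exposing the exposed transaction'') is immediate by case analysis on the rules. Your induction on the structure of $\bcB[1]$, peeling its \emph{first} transaction, would instead need a small inversion lemma saying that $\txT\bcB \seqn \txsA$ implies $\txsA = \msetenum{\txT} + \txsAi$ for some $\txsAi$ with $\bcB \seqn \txsAi$ --- itself an easy induction, but not a direct reading of the clauses. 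Your closing caveat is well taken and in fact points at a genuine notational slip in the paper: the statement and its proof write $\cup$ and $\setenum{\txT}$ while the definition works with multisets and $+$; reading the lemma with multiset sum makes everything go through with no disjointness side condition. So there is no real gap --- just transpose the induction to the second argument (or supply the inversion lemma) to match the paper's convention.
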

\begin{proof}
  By induction on $\card{\bcB[2]}$. For the base case, it must be 
  $\bcB[2] = \bcEmpty$ and hence $\txsA[2] = \emptyset$. Then,
  $\bcB[1]\bcB[2] = \bcB[1]$ and $\txsA[1] \cup \txsA[2] = \txsA[1]$.
  Therefore, the thesis coincides with the first hypothesis.
  For the induction case, it must be $\bcB[2] = \bcBi[2]\txT$, with
  $\card{\bcBi[2]} = n$. Furthermore, it must be
  $\txsA[2] = \setenum{\txT} \cup \txsAi[2]$, for some $\txsAi[2]$ such that
  $\bcBi[2] \seqn \txsAi[2]$. By the induction hypothesis:
  \[
  \bcB[1]\bcBi[2] \seqn (\txsA[1] \cup \txsAi[2])
  \]
  Then:
  \[
  \bcB[1]\bcBi[2]\txT = \bcB[1]\bcB[2] \seqn 
  (\setenum{\txT} \cup \txsA[1] \cup \txsAi[2]) = \txsA[1] \cup \txsA[2]
  \tag*{\qedhere}
  \]
\end{proof}

\begin{lem}
  \label{lem:strong-swap-chain-W-R}
  Let $\bcB$ and $\txT$ be such that 
  $\bcB = \bcB[1]\txTi\bcB[2] \implies \txT \pswapWR \txTi$. Then, for all
  $\bcBi$, $\semBc{\bcBi}{} \sim_{\rset{\txT}}\semBc{\bcB}{\semBc{\bcBi}{}}$
  and $\semBc{\bcBi}{} \sim_{\wset{\txT}}\semBc{\bcB}{\semBc{\bcBi}{}}$.
\end{lem}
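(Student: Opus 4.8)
The plan is to prove both conjuncts at once, by induction on $\card{\bcB}$. The base case $\bcB = \bcEmpty$ is immediate: $\semBc{\bcB}{\semBc{\bcBi}{}} = \semBc{\bcBi}{}$, so the two states are literally equal, hence $\sim_{\QmvA}$-equivalent for every $\QmvA$, in particular for $\rset{\txT}$ and $\wset{\txT}$. For the inductive step I would write $\bcB = \txTi\,\bcB[2]$ and note two things about the quantified hypothesis ``$\bcB = \bcB[1]\txTi\bcB[2] \implies \txT \pswapWR \txTi$'': specialising it to $\bcB[1] = \bcEmpty$ gives $\txT \pswapWR \txTi$; and since every factorisation of the tail $\bcB[2]$ extends to a factorisation of $\bcB$, the pair $(\bcB[2],\txT)$ again satisfies the hypothesis of the lemma, which is what licenses the induction hypothesis on $\bcB[2]$.

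First I would extract from $\txT \pswapWR \txTi$ the two disjointness facts used here, namely $\rset{\txT} \cap \wset{\txTi} = \emptyset$ and $\wset{\txT} \cap \wset{\txTi} = \emptyset$ (Definition~\ref{def:pswapWR}). Since $\wset{}$ is a static analysis of written observables we have $\wapprox{\wset{\txTi}}{\txTi}$, so by Definition~\ref{def:safeapprox} both $\txTi \sim_{\rset{\txT}} \bcEmpty$ and $\txTi \sim_{\wset{\txT}} \bcEmpty$ hold --- this is exactly the elementary step already used in Lemma~\ref{lem:commutativity-writes}. Unfolding blockchain equivalence at the reachable state $\semBc{\bcBi}{}$, and using the straightforward ``append'' identity $\semBc{\bcBi\txTi}{} = \semBc{\txTi}{\semBc{\bcBi}{}}$ (a short induction on $\bcBi$ from the defining clauses of $\semBc{\cdot}{\cdot}$), this gives
\[
\semBc{\bcBi}{} \;\sim_{\rset{\txT}}\; \semBc{\bcBi\txTi}{}
\qquad\text{and}\qquad
\semBc{\bcBi}{} \;\sim_{\wset{\txT}}\; \semBc{\bcBi\txTi}{} .
\]

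Next I would apply the induction hypothesis to $\bcB[2]$ with the blockchain $\bcBi\txTi$ in place of $\bcBi$, obtaining $\semBc{\bcBi\txTi}{} \sim_{\rset{\txT}} \semBc{\bcB[2]}{\semBc{\bcBi\txTi}{}}$ and the analogous statement for $\wset{\txT}$. Since $\semBc{\bcB[2]}{\semBc{\bcBi\txTi}{}} = \semBc{\bcB[2]}{\semBc{\txTi}{\semBc{\bcBi}{}}} = \semBc{\txTi\bcB[2]}{\semBc{\bcBi}{}} = \semBc{\bcB}{\semBc{\bcBi}{}}$, composing the two equivalences by transitivity and symmetry of $\sim_{\rset{\txT}}$ (Lemma~\ref{lem:sim:equiv}) yields
\[
\semBc{\bcBi}{} \;\sim_{\rset{\txT}}\; \semBc{\bcBi\txTi}{} \;\sim_{\rset{\txT}}\; \semBc{\bcB}{\semBc{\bcBi}{}},
\]
which is the read case; replacing $\rset{\txT}$ by $\wset{\txT}$ throughout gives the write case.

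I do not expect a genuine obstacle: this is essentially the ``chain'' version of Lemma~\ref{lem:commutativity-writes}. The only points demanding care are bookkeeping ones --- applying the induction hypothesis to the extended prefix $\bcBi\txTi$ rather than to $\bcBi$, and verifying that the tail of $\bcB$ still meets the quantified hypothesis of the lemma --- together with the routine unfolding of blockchain equivalence, which goes through because $\semBc{\bcBi}{}$ is reachable by construction.
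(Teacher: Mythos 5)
Your proof is correct and follows essentially the same route as the paper, whose own argument for this lemma is exactly the "simple induction on $\card{\bcB}$, using Definition~\ref{def:safeapprox} for the induction case" that you have written out in full (peeling off the head transaction, using $\wapprox{\wset{\txTi}}{\txTi}$ plus the disjointness from $\txT \pswapWR \txTi$ to get $\txTi \sim_{\rset{\txT}} \bcEmpty$ and $\txTi \sim_{\wset{\txT}} \bcEmpty$, then composing with the induction hypothesis on the tail). The bookkeeping points you flag --- instantiating the induction hypothesis at the extended prefix $\bcBi\txTi$ and checking that the tail still meets the quantified hypothesis --- are handled correctly.
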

\begin{proof}
  A simple induction on $\card{\bcB}$, using Definition \ref{def:safeapprox} 
  for the induction case.
\end{proof}

We now formalize when a blockchain $\bcB$ is a
serialization of a multiset of transactions $\txsA$.

\begin{defi}[{Serialization of multisets of transactions}]
  \label{def:mset-serial}
  We define the relation $\seqn$ between blockchains and multisets 
  of transactions as follows:
  \[
  \irule{}
  {\bcEmpty \seqn \emptymset}
  \qquad\qquad
  \irule{\bcB \seqn \txsA}
  {\bcB\txT \seqn (\msetenum{\txT} + \txsA)}
  \]
\end{defi}

\begin{lem}
  \label{lem:strong-swap-seq}
  If $\txT \pswapWR \txTi$ for all $\txTi \in \txsA$ and $\bcB \seqn \txsA$
  then, $\bcB = \bcB[1]\txTi\bcB[2] \implies \txT \pswapWR \txTi$.
\end{lem}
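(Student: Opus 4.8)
The plan is to reduce the statement of Lemma~\ref{lem:strong-swap-seq} to the elementary observation that a serialization $\bcB$ of a multiset $\txsA$ contains exactly the transactions of $\txsA$ (counted with multiplicity); once this is established, any transaction occurring as a factor of $\bcB$ belongs to $\txsA$, and the hypothesis ``$\txT \pswapWR \txTi$ for all $\txTi \in \txsA$'' applies verbatim.

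First I would prove the auxiliary fact that, if $\bcB \seqn \txsA$ and $\bcB = \bcB[1]\txTi\bcB[2]$ for some $\bcB[1],\bcB[2]$, then $\txTi \in \txsA$. This goes by structural induction on the derivation of $\bcB \seqn \txsA$ according to Definition~\ref{def:mset-serial}. In the base case the rule forces $\bcB = \bcEmpty$, so $\bcB$ admits no decomposition of the form $\bcB[1]\txTi\bcB[2]$ and the claim is vacuous. In the inductive case the last rule applied derives $\bcBi\txT[k] \seqn (\msetenum{\txT[k]} + \txsAi)$ from $\bcBi \seqn \txsAi$, so $\bcB = \bcBi\txT[k]$ and $\txsA = \msetenum{\txT[k]} + \txsAi$. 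If the distinguished occurrence of $\txTi$ is the final symbol, then $\txTi = \txT[k] \in \txsA$; otherwise that occurrence lies within $\bcBi$, so the induction hypothesis gives $\txTi \in \txsAi$, hence $\txTi \in \msetenum{\txT[k]} + \txsAi = \txsA$.

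The lemma itself is then immediate: assuming $\txT \pswapWR \txTi$ for every $\txTi \in \txsA$, assuming $\bcB \seqn \txsA$, and supposing $\bcB = \bcB[1]\txTi\bcB[2]$, the auxiliary fact yields $\txTi \in \txsA$, and therefore $\txT \pswapWR \txTi$ by hypothesis.

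The only obstacle here is bookkeeping: one has to be careful that the relation $\seqn$ is built additively on multisets, so that the ``for all $\txTi \in \txsA$'' quantifier in the premise really does range over every transaction symbol appearing in $\bcB$ (with the correct multiplicities), and that the decomposition $\bcB = \bcB[1]\txTi\bcB[2]$ in the goal is understood as universally quantified over $\bcB[1]$, $\bcB[2]$ and the choice of occurrence. No genuine difficulty is expected.
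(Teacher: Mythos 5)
Your proposal is correct and matches the paper's own argument: both reduce the lemma to the auxiliary fact that every transaction occurring as a factor of a serialization $\bcB$ of $\txsA$ belongs to $\txsA$, proved by induction (the paper inducts on $\card{\txsA}$, you on the derivation of $\bcB \seqn \txsA$, which is the same induction in this setting). No gaps.
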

\begin{proof}
  By a simple induction on $\card{\txsA}$ we can conclude that, whenever $\bcB$ is of the form $\bcB[1]\txTi\bcB[2]$ for some $\bcB[1],\bcB[2]$ and 
  $\txTi$, we have that $\txTi \in \txsA$. 
  The thesis then follows immediately.
\end{proof}

\begin{lem}
  \label{lem:pi-swap}
  If $\rapprox{\qmvA}{\txT}$, $\bcB[1] \sim_{\qmvA} \bcB[2]$ and 
  $\bcB[1] \sim_{\qmvB} \bcB[2]$, then $\semBc{\bcB[1]\txT}{} \sim_{\qmvB} 
  \semBc{\bcB[1]}{}\WR{\semBc{\bcB[2]}{},\txT}$.
\end{lem}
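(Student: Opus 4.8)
The plan is to reduce the claim to an elementary statement about applying two state updates to the \emph{same} state $\semBc{\bcB[1]}{}$. Put $\bcSt[1] = \semBc{\bcB[1]}{}$, $\bcSt[2] = \semBc{\bcB[2]}{}$, and $\mSubst = \WR{\bcSt[2],\txT}$. By definition the right-hand side of the lemma is exactly $\bcSt[1]\mSubst$, while unfolding the semantics of blockchains gives $\semBc{\bcB[1]\txT}{} = \semTx{\txT}{\bcSt[1]}$ for the left-hand side. So the goal becomes $\semTx{\txT}{\bcSt[1]} \sim_{\qmvB} \bcSt[1]\mSubst$: the state reached by genuinely running $\txT$ from $\bcSt[1]$ must agree, on the observables in $\qmvB$, with the state obtained by ``replaying'' on $\bcSt[1]$ the update $\mSubst$ that $\txT$ produced when run from $\bcSt[2]$.

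First I would record the two consequences of the hypotheses that the argument needs. Instantiating the read-approximation property $\rapprox{\qmvA}{\txT}$ (Definition~\ref{def:safeapprox}) with $\bcB[1]$, $\bcB[2]$ and $\qmvB$, and using $\bcB[1] \sim_{\qmvA} \bcB[2]$ together with $\bcB[1] \sim_{\qmvB} \bcB[2]$, yields $\bcB[1]\txT \sim_{\qmvB} \bcB[2]\txT$; taking the initial-state instance, $\semTx{\txT}{\bcSt[1]} \sim_{\qmvB} \semTx{\txT}{\bcSt[2]}$. Second, since $\WR{}$ is an update collector (Definition~\ref{def:wr}), $\semTx{\txT}{\bcSt[2]} = \bcSt[2]\mSubst$. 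Composing these, $\semTx{\txT}{\bcSt[1]} \sim_{\qmvB} \bcSt[2]\mSubst$, so it remains only to compare $\bcSt[2]\mSubst$ with $\bcSt[1]\mSubst$ on $\qmvB$, where we may further use that $\bcSt[1]$ and $\bcSt[2]$ already agree on $\qmvB$ (the initial-state instance of $\bcB[1] \sim_{\qmvB} \bcB[2]$).

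The heart of the argument is a case analysis, for an observable $o \in \qmvB$, on whether $o \in \dom{\mSubst}$. If $o \in \dom{\mSubst}$, then both $\bcSt[1]\mSubst$ and $\bcSt[2]\mSubst$ return $\mSubst\, o$ there, hence $(\bcSt[1]\mSubst)\, o = (\bcSt[2]\mSubst)\, o = (\semTx{\txT}{\bcSt[2]})\, o = (\semTx{\txT}{\bcSt[1]})\, o$, the last step by $\semTx{\txT}{\bcSt[1]} \sim_{\qmvB} \semTx{\txT}{\bcSt[2]}$. If $o \notin \dom{\mSubst}$, then applying $\mSubst$ does not touch $o$, so $(\bcSt[1]\mSubst)\, o = \bcSt[1]\, o$ and $(\bcSt[2]\mSubst)\, o = \bcSt[2]\, o$, and I would chain $(\semTx{\txT}{\bcSt[1]})\, o = (\semTx{\txT}{\bcSt[2]})\, o = (\bcSt[2]\mSubst)\, o = \bcSt[2]\, o = \bcSt[1]\, o = (\bcSt[1]\mSubst)\, o$, using in turn $\bcSt[2]\mSubst = \semTx{\txT}{\bcSt[2]}$, then $o \notin \dom{\mSubst}$, then $\bcSt[1] \sim_{\qmvB} \bcSt[2]$. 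In both cases the two states agree at $o$, and since $o \in \qmvB$ was arbitrary, $\semTx{\txT}{\bcSt[1]} \sim_{\qmvB} \bcSt[1]\mSubst$, which is the thesis.

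The step I expect to need the most care is the case $o \notin \dom{\mSubst}$. The tempting shortcut --- ``$\txT$ does not change $o$, hence its value in $\semTx{\txT}{\bcSt[1]}$ is $\bcSt[1]\, o$'' --- is wrong in general: $\txT$ run from $\bcSt[1]$ may perfectly well overwrite $o$, even though the \emph{particular} update $\mSubst$ collected at $\bcSt[2]$ leaves it alone. The value of $o$ in $\semTx{\txT}{\bcSt[1]}$ is pinned down only indirectly --- transport it along $\semTx{\txT}{\bcSt[1]} \sim_{\qmvB} \semTx{\txT}{\bcSt[2]}$, observe it is unchanged in $\semTx{\txT}{\bcSt[2]} = \bcSt[2]\mSubst$ because $o \notin \dom{\mSubst}$, then transport it back to $\bcSt[1]$ via $\bcSt[1] \sim_{\qmvB} \bcSt[2]$ --- and arranging this chain in the right order is the one genuinely delicate point. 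Everything else is routine unfolding of the definition of update application from Section~\ref{sec:transactions} and of Definition~\ref{def:wr}.
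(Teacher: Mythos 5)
Your proof is correct and follows essentially the same route as the paper's: both unfold $\semBc{\bcB[1]\txT}{}$ via the update collector, invoke $\rapprox{\qmvA}{\txT}$ to get $\bcB[1]\txT \sim_{\qmvB} \bcB[2]\txT$, and then do the same case split on whether the observable lies in $\dom{\WR{\semBc{\bcB[2]}{},\txT}}$, with the same chains of equalities in each case. The delicate point you flag (that $\txT$ run from $\bcSt[1]$ might still touch $o$ even when $o \notin \dom{\mSubst}$) is handled identically in the paper, by routing through $\semBc{\bcB[2]\txT}{}$ rather than arguing directly about $\semTx{\txT}{\bcSt[1]}$.
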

\begin{proof}
  Let $\mSubst[1] = \WR{\semBc{\bcB[1]}{},\txT}$ and 
  $\mSubst[2] = \WR{\semBc{\bcB[2]}{},\txT}$. By Definition \ref{def:wr}, 
  $\semBc{\bcB[1]\txT}{} = \semBc{\bcB[1]}{}\mSubst[1]$. Let $\qmvA \in \QmvB$.
  We have two cases:
  \begin{itemize}
  \item $\qmvA \in \keys{\mSubst[2]}$. 
    \[
    \semBc{\bcB[1]}{}\mSubst[2]\qmvA = \mSubst[2]\qmvA = 
    \semBc{\bcB[2]}{}\mSubst[2]\qmvA = \semBc{\bcB[2]\txT}{}\qmvA = 
    \semBc{\bcB[1]\txT}{}\qmvA
    \]
  \item $\qmvA \not\in \keys{\mSubst[2]}$.
    \[
    \semBc{\bcB[1]}{}\mSubst[2]\qmvA = \semBc{\bcB[1]}{}\qmvA = 
    \semBc{\bcB[2]}{}\qmvA = \semBc{\bcB[2]}{}\mSubst[2]\qmvA = 
    \semBc{\bcB[1]\txT}{}\qmvA
    \tag*{\qedhere}
    \]
  \end{itemize}
\end{proof}

\begin{defi}
  Let $\WR{}$ be a state updater, and
  let $\wset{}$ be such that $\forall \txT: \wapprox{\wset{\txT}}{\txT}$.
  We say that $\WR{}$ and $\wset{}$ are \emph{compatible} when
  $\forall \bcSt,\txT : \keys{\WR{\bcSt,\txT}} \subseteq \wset{\txT}$.
\end{defi}

We extend the semantics of transactions
to finite \emph{multisets} of transactions.
Hereafter, we denote with $\emptymset$ the empty multiset,
with $\msetenum{\txT[1],\ldots,\txT[n]}$ the multiset containing
$\txT[1],\ldots,\txT[n]$, and with
$A + B$ the sum between multisets,
\ie $(A+B)(x) = A(x) + B(x)$ for all $x$.

\begin{defi}[{Semantics of multisets of transactions}]
  \label{def:multiset-sem}
  We denote the semantics of a multiset of transactions $\txsA$,
  in a state $\bcSt$ and an update collector $\WR{}$,
  as $\msem{\WR{}}{\bcSt}{\txsA}$,
  where the partial function
  $\msem{\WR{}}{\bcSt}{\cdot}$ is defined as:
  \( \;
  \msem{\WR{}}{\bcSt}{\txsA}
  \, = \,
  \bcSt \bigoplus_{\txT \in \txsA} \WR{\bcSt,\txT}
  \).
\end{defi}

\noindent
Hereafter, we say that a multiset $\txsA$ is strongly swappable 
\wrt a relation $\relR \subseteq \pswap{}{}$ when:
\[
\forall \txT \in \txsA, \forall \txTi \in \txsA - \msetenum{\txT} : \;
\txT \,\relR\, \txTi
\]

\begin{lem}
  \label{lem:step-seq-union-aux}
  Let $\txsA$ be strongly swappable \wrt $\pswapWR$, 
  let $\bcB \seqn \txsA$, 
  and let $\WR{}$ be compatible with $\wset{}$. 
  Then, for all $\bcB[0]$: $\msem{\WR{}}{\semBc{\bcB[0]}{}}{\txsA} = 
  \semBc{\bcB}{\semBc{\bcB[0]}{}}$.
\end{lem}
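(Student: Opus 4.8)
The plan is to induct on $\card{\bcB}$ (equivalently on $\card{\txsA}$, since $\bcB \seqn \txsA$ forces these to coincide); write $\bcSt = \semBc{\bcB[0]}{}$ throughout. In the base case $\bcB = \bcEmpty$ and $\txsA = \emptymset$, so both sides equal $\bcSt$ (the empty $\bigoplus$ being the neutral element $\lambda\qmvA.\bot$ of $\mrg$, by Lemma~\ref{lem:merge-comm-monoid}). For the inductive step, decompose $\bcB = \bcBi\txT$ via the last rule of Definition~\ref{def:mset-serial}, so that $\txsA = \msetenum{\txT} + \txsAi$ with $\bcBi \seqn \txsAi$; note that $\txsAi$ is again strongly swappable w.r.t.\ $\pswapWR$, and $\txT \pswapWR \txTi$ for each $\txTi \in \txsAi$. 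Since $\pswapWR$ is irreflexive, a strongly swappable multiset has no repeated element, so $\txT \notin \txsAi$ and all the merges occurring below are between updates with pairwise disjoint domains.

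First I would reduce the goal to a single-transaction identity. By compatibility $\keys{\WR{\bcSt,\txTk}} \subseteq \wset{\txTk}$ for each $\txTk \in \txsA$, and by strong swappability the sets $\wset{\txTk}$ are pairwise disjoint; hence all the $\mrg$'s below are defined, and using the commutative-monoid laws (Lemma~\ref{lem:merge-comm-monoid}) together with Lemma~\ref{lem:mrg-seq} (which turns a disjoint merge into a composition of substitutions) one computes
\[
\msem{\WR{}}{\bcSt}{\txsA} \;=\; \bcSt\Bigl(\WR{\bcSt,\txT} \mrg \bigoplus_{\txTi \in \txsAi}\WR{\bcSt,\txTi}\Bigr) \;=\; \Bigl(\bcSt\bigoplus_{\txTi \in \txsAi}\WR{\bcSt,\txTi}\Bigr)\WR{\bcSt,\txT} \;=\; \semBc{\bcBi}{\bcSt}\,\WR{\bcSt,\txT},
\]
the last step being the induction hypothesis applied to $\txsAi$ and $\bcBi$. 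So it suffices to prove the identity $\semBc{\bcBi}{\bcSt}\,\WR{\bcSt,\txT} = \semBc{\bcB}{\bcSt} = \semTx{\txT}{\semBc{\bcBi}{\bcSt}}$.

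For this, note that $\bcBi$ satisfies the hypothesis of Lemma~\ref{lem:strong-swap-chain-W-R} w.r.t.\ $\txT$: as $\bcBi \seqn \txsAi$ and $\txT \pswapWR \txTi$ for all $\txTi \in \txsAi$, Lemma~\ref{lem:strong-swap-seq} gives that every decomposition $\bcBi = \bcB_a\txTi\bcB_b$ has $\txT \pswapWR \txTi$. Lemma~\ref{lem:strong-swap-chain-W-R} — whose context quantifier ranges over all blockchains, so it actually yields the blockchain-level equivalences $\bcBi \sim_{\rset{\txT}} \bcEmpty$ and $\bcBi \sim_{\wset{\txT}} \bcEmpty$, whence $\bcB[0]\bcBi \sim_{\rset{\txT}} \bcB[0]$ and $\bcB[0]\bcBi \sim_{\wset{\txT}} \bcB[0]$ — then gives in particular $\semBc{\bcBi}{\bcSt} \sim_{\rset{\txT}} \bcSt$ and $\semBc{\bcBi}{\bcSt} \sim_{\wset{\txT}} \bcSt$. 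Now fix an observable $\qmvA$. If $\qmvA \notin \wset{\txT}$, then $\qmvA \notin \keys{\WR{\bcSt,\txT}}$ by compatibility, so $(\semBc{\bcBi}{\bcSt}\WR{\bcSt,\txT})\,\qmvA = \semBc{\bcBi}{\bcSt}\,\qmvA$; and $\wapprox{\wset{\txT}}{\txT}$ with $\setenum{\qmvA}\cap\wset{\txT}=\emptyset$ gives $\txT \sim_{\setenum{\qmvA}} \bcEmpty$ (Definition~\ref{def:safeapprox}), which evaluated at the reachable state $\semBc{\bcBi}{\bcSt}$ yields $\semTx{\txT}{\semBc{\bcBi}{\bcSt}}\,\qmvA = \semBc{\bcBi}{\bcSt}\,\qmvA$; so the two sides agree on $\qmvA$. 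If $\qmvA \in \wset{\txT}$, apply Lemma~\ref{lem:pi-swap} with transaction $\txT$, read set $\rset{\txT}$, blockchains $\bcB[1] = \bcB[0]\bcBi$ and $\bcB[2] = \bcB[0]$, and $\qmvB = \wset{\txT}$ (its hypotheses $\rapprox{\rset{\txT}}{\txT}$, $\bcB[0]\bcBi \sim_{\rset{\txT}} \bcB[0]$, $\bcB[0]\bcBi \sim_{\wset{\txT}} \bcB[0]$ being exactly what we have); this gives $\semBc{\bcB}{\bcSt} \sim_{\wset{\txT}} \semBc{\bcBi}{\bcSt}\WR{\bcSt,\txT}$, so again the two sides agree on $\qmvA$. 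As they agree on every observable, they are equal, which closes the induction.

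The main obstacle is the single-transaction identity $\semBc{\bcBi}{\bcSt}\,\WR{\bcSt,\txT} = \semTx{\txT}{\semBc{\bcBi}{\bcSt}}$: the update collector on the left is evaluated at $\bcSt$ rather than at the state $\semBc{\bcBi}{\bcSt}$ in which $\txT$ really runs, so one must exploit that $\txT$ depends only on $\rset{\txT}$ and writes only within $\wset{\txT}$, and that the transactions of $\txsAi$ touch neither set — which is precisely where strong swappability (through Lemmas~\ref{lem:strong-swap-seq} and~\ref{lem:strong-swap-chain-W-R}), the read/write approximation semantics of Definition~\ref{def:safeapprox}, and compatibility of $\WR{}$ with $\wset{}$ all come together. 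Everything else is routine merge/monoid bookkeeping.
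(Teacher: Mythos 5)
Your proof is correct and follows essentially the same route as the paper's: induction on $\card{\bcB}$, peeling off the last transaction, reducing via Lemma~\ref{lem:mrg-seq} and the induction hypothesis to the single-transaction identity, and then discharging that identity with Lemmas~\ref{lem:strong-swap-seq}, \ref{lem:strong-swap-chain-W-R} and~\ref{lem:pi-swap} together with compatibility of $\WR{}$ with $\wset{}$. The only differences are presentational (your explicit case split on $\qmvA \in \wset{\txT}$ versus the paper's terser phrasing, and your careful remark on the blockchain-level reading of Lemma~\ref{lem:strong-swap-chain-W-R}), which if anything make the argument slightly more self-contained.
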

\begin{proof}
  By induction on $\card{\bcB}$. For the base case, it must be 
  $\bcB = \bcEmpty$ and $\txsA = \emptyset$, and hence 
  $\msem{\WR{}}{\semBc{\bcB[0]}{}}{\emptyset} = \semBc{\bcB[0]}{} = 
  \semBc{\bcEmpty}{\semBc{\bcB[0]}{}}$.
  For the induction case, it must be $\bcB = \bcBi\txT$, with 
  $\card{\bcBi} = n$. 
  Clearly, 
  $\txsA = \msetenum{\txT} + \txsAi$ for some $\txsAi$ such that
  $\bcBi \seqn \txsAi$. 
  Let $\WR{\semBc{\bcB[0]}{},\txT} = \mSubst[\txT]$.
  By the induction hypothesis:
  \begin{equation}
    \label{eq:step-seq-union-3}
    \msem{\WR{}}{\semBc{\bcB[0]}{}}{\txsAi} = \semBc{\bcBi}{\semBc{\bcB[0]}{}}
  \end{equation}
  Notice that:
  \begin{equation}
    \label{eq:step-seq-union-2}
    \msem{\WR{}}{\semBc{\bcB[0]}{}}{\txsAi} = \semBc{\bcB[0]}{}\mSubsti
  \end{equation} 
  where $\mSubsti \bigoplus_{\txTi \in \txsAi} \WR{\semBc{\bcB[0]}{},\txTi})$.
  Let $\WR{\semBc{\bcB[0]}{},\txT} = \mSubst[\txT]$.
  Since $\txsA$ is strongly swappable \wrt $\pswapWR$ and $\WR{}$
  is compatible with $\wset{}$, it must be $\keys{\mSubsti}\cap 
  \keys{\mSubst[\txT]} = \emptyset$, and hence $(\mSubsti\mrg \mSubst[\txT])$ 
  is defined. 
  Then, it must be:
  \begin{align}
    \nonumber
    \msem{\WR{}}{\semBc{\bcB[0]}{}}{\txsA} 
    & = \semBc{\bcB[0]}{}(\mSubsti\mrg \mSubst[\txT]) 
    \\
    \nonumber
    & = \semBc{\bcB[0]}{}\mSubsti \mSubst[\txT] 
    && \text{By Lemma \ref{lem:mrg-seq}}
    \\
    \nonumber
    & = \msem{\WR{}}{\semBc{\bcB[0]}{}}{\txsAi}\mSubst[\txT] 
    && \text{By \Cref{eq:step-seq-union-2}}
    \\
    \label{eq:step-seq-union-1}
    & =  \semBc{\bcBi}{\semBc{\bcB[0]}{}}\mSubst[\txT] 
    && \text{By \Cref{eq:step-seq-union-3}}
  \end{align}
  We have that:
  \[
  \semBc{\bcB}{\semBc{\bcB[0]}{}} = \semBc{\bcBi\txT}{\semBc{\bcB[0]}{}} =
  \semBc{\bcBi}{\semBc{\bcB[0]}{}}\mSubsti[\txT]
  \]
  where $\mSubsti[\txT] = \WR{\semBc{\bcBi[0]}{\semBc{\bcB[0]}{}},\txT}$.
  Since $\keys{\mSubst[\txT]} \subseteq \wset{\txT}$ and 
  $\keys{\mSubst[\txT]} \subseteq \rset{\txT}$, it follows
  immediately that $\semBc{\bcBi}{\semBc{\bcB[0]}{}}\mSubst[\txT] \sim_{\qmvA}
  \semBc{\bcB}{\semBc{\bcB[0]}{}}$ for all $\qmvA \not\in \wset{\txT}$.
  It remains to show that $\semBc{\bcBi}{\semBc{\bcB[0]}{}}\mSubst[\txT] 
  \sim_{\wset{\txT}} \semBc{\bcB}{\semBc{\bcB[0]}{}}$.
  First notice that, by Lemmas \ref{lem:strong-swap-seq} and \ref{lem:strong-swap-chain-W-R}:
  \[
  \semBc{\bcB[0]}{} \sim_{\rset{\txT}} \semBc{\bcBi}{\semBc{\bcB[0]}{}} \qquad
  \semBc{\bcB[0]}{}\sim_{\wset{\txT}} \semBc{\bcBi}{\semBc{\bcB[0]}{}}
  \]
  Then, by Lemma \ref{lem:pi-swap}:
  \[
  \semBc{\bcBi}{\semBc{\bcB[0]}{}}\mSubst[\txT] 
  \sim_{\wset{\txT}} \semBc{\bcB}{\semBc{\bcB[0]}{}}
  \]
  And hence:
  \begin{equation}
    \label{eq:step-seq-union-4}
    \semBc{\bcBi}{\semBc{\bcB[0]}{}}\mSubst[\txT] = \semBc{\bcB}{\semBc{\bcB[0]}{}}
  \end{equation}
  The thesis $\msem{\WR{}}{\semBc{\bcB[0]}{}}{\txsA} = 
  \semBc{\bcB}{\semBc{\bcB[0]}{}}$ then follows by 
  \Cref{eq:step-seq-union-4,eq:step-seq-union-1}.
\end{proof}

The following~\namecref{th:step-seq-union} 
ensures that the parallel execution of strongly swappable transactions 
is equivalent to any sequential execution of them.

\begin{restatable}{thm}{thstepsequnion}
  \label{th:step-seq-union}
  Let $\txsA$ be strongly swappable \wrt $\pswapWR$, and
  let $\bcB \seqn \txsA$. 
  Then, for all $\bcSt$: 
  \[
  \msem{\WRmin{}}{\bcSt}{\txsA} 
  \; = \;
  \semBc{\bcB}{\bcSt}
  \]
\end{restatable}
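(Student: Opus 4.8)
The plan is to reduce the statement to Lemma~\ref{lem:step-seq-union-aux}, whose only gaps with respect to the desired conclusion are that (i) it is phrased for states of the form $\semBc{\bcB[0]}{}$ rather than for an arbitrary $\bcSt$, and (ii) it requires the update collector to be \emph{compatible} with the static analysis $\wset{}$ of written observables underlying $\pswapWR$, while the theorem fixes the collector to be the least one, $\WRmin{}$. For (i) I would invoke the standing convention that all statements are read over reachable states only, so it suffices to prove the claim for $\bcSt = \semBc{\bcB[0]}{}$ for some blockchain $\bcB[0]$.

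For (ii), the key step is to show that $\WRmin{}$ is compatible with every static analysis $\wset{}$ of written observables, i.e. $\keys{\WRmin{\bcSt,\txT}} \subseteq \wset{\txT}$ for all reachable $\bcSt$ and all $\txT$. By Lemma~\ref{lem:wr:minimal}, an observable $\qmvA$ lies in $\keys{\WRmin{\bcSt,\txT}}$ exactly when $\semTx{\txT}{\bcSt}\qmvA \neq \bcSt\qmvA$. Suppose towards a contradiction that $\qmvA \notin \wset{\txT}$. Then $\setenum{\qmvA} \cap \wset{\txT} = \emptyset$, so from $\wapprox{\wset{\txT}}{\txT}$ and Definition~\ref{def:safeapprox} we get $\txT \sim_{\setenum{\qmvA}} \bcEmpty$. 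Unfolding the extension of $\sim_{\setenum{\qmvA}}$ to blockchains at the reachable state $\bcSt$ yields $\semTx{\txT}{\bcSt}\qmvA = \semBc{\bcEmpty}{\bcSt}\qmvA = \bcSt\qmvA$, contradicting $\qmvA \in \keys{\WRmin{\bcSt,\txT}}$. Hence $\qmvA \in \wset{\txT}$, establishing compatibility.

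With this in hand, the proof concludes immediately: since $\txsA$ is strongly swappable \wrt $\pswapWR$, $\bcB \seqn \txsA$, and $\WRmin{}$ is compatible with $\wset{}$, Lemma~\ref{lem:step-seq-union-aux} applied with $\WR{} = \WRmin{}$ and the chosen $\bcB[0]$ gives $\msem{\WRmin{}}{\semBc{\bcB[0]}{}}{\txsA} = \semBc{\bcB}{\semBc{\bcB[0]}{}}$, that is $\msem{\WRmin{}}{\bcSt}{\txsA} = \semBc{\bcB}{\bcSt}$, which is the thesis.

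I expect the compatibility argument to be the only step requiring actual work; the rest is a direct instantiation of Lemma~\ref{lem:step-seq-union-aux}, together with the observation that reachability lets us replace the arbitrary $\bcSt$ by $\semBc{\bcB[0]}{}$. The mild subtlety to be careful about is that $\txT \sim_{\setenum{\qmvA}} \bcEmpty$ quantifies over reachable states, which is exactly why restricting attention to reachable $\bcSt$ is needed in the compatibility proof as well.
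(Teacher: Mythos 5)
Your proposal is correct and follows essentially the same route as the paper: the paper's proof is simply ``direct by Lemma~\ref{lem:step-seq-union-aux} and the fact that every $\wset{}$ is compatible with $\WRmin{}$'', which is exactly your reduction. The only difference is that you actually spell out the compatibility argument (via $\wapprox{\wset{\txT}}{\txT}$ and the characterization of $\WRmin{}$ from Lemma~\ref{lem:wr:minimal}) and the restriction to reachable states, both of which the paper leaves implicit.
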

\begin{proof}
  Direct by Lemma~\ref{lem:step-seq-union-aux} and 
  by the fact that every $\wset{}$ is compatible with $\WRmin{}{}$.
\end{proof}

A \emph{parellelized blockchain} $\bcsA$ 
is a finite sequence of multisets of transactions;
we denote with $\bcEmpty$ the empty sequence.
We extend the semantics of multisets (\Cref{def:multiset-sem}) to 
parallelized blockchains as follows.

\begin{defi}[{Semantics of parallelized blockchains}]
  \label{def:txpar:pbc-semantics}
  The semantics of parallelized blockchains is defined as follows:
  \[
  \msem{\WR{}}{\bcSt}{\bcEmpty} = \bcSt
  \qquad
  \msem{\WR{}}{\bcSt}{\txsA \bcsA} =
  \msem{\WR{}}{\scriptsize \msem{\WR{}}{\bcSt}{\txsA}}{\bcsA}
  \]
  We write $\msem{\WR{}}{}{\bcsA}$ for $\msem{\WR{}}{\bcStInit}{\bcsA}$,
  where $\bcStInit$ is the initial state.
\end{defi}

We also extend the serialization relation $\seqn$ (Definition \ref{def:mset-serial}) to
parallelized blockchains.

\begin{defi}[{Serialization of parallelized blockchains}]
  We define the relation $\seqn$ between blockchains and 
  parallelized blockchains as follows:
  \[
  \irule{}{\bcEmpty \seqn \bcEmpty} 
  \qquad\qquad
  \irule
  {\bcB[1] \seqn \txsA \quad \bcB[2] \seqn \bcsA}
  {\bcB[1]\bcB[2] \seqn \txsA\bcsA}
  \]
\end{defi}

The following~\namecref{th:seq-union}
states that our technique to parallelize the transactions in a blockchain
preserves its semantics.

\begin{restatable}{thm}{thsequnion}
  \label{th:seq-union}
  Let each multiset in $\bcsA$ be strongly swappable \wrt $\pswapWR$, and 
  let $\bcB \seqn \bcsA$. 
  Then, for all $\bcSt$: 
  \[
  \msem{\WRmin{}}{\bcSt}{\bcsA} 
  \; = \;
  \semBc{\bcB}{\bcSt}
  \]
\end{restatable}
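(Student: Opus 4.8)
The plan is to proceed by induction on the length of the parallelized blockchain $\bcsA$, reducing each step to the single-multiset case already established in Theorem~\ref{th:step-seq-union}. First I would set up the induction: if $\bcsA = \bcEmpty$, then by the definition of $\seqn$ for parallelized blockchains it must be $\bcB = \bcEmpty$, and both sides of the claimed equation equal $\bcSt$ by Definition~\ref{def:txpar:pbc-semantics} and the base clause of $\semBc{\cdot}{\cdot}$. For the inductive step, write $\bcsA = \txsA\,\bcsAi$ where $\txsA$ is a multiset strongly swappable \wrt $\pswapWR$ and each multiset in $\bcsAi$ is likewise strongly swappable. By inversion on the serialization rule for parallelized blockchains, $\bcB$ decomposes as $\bcB = \bcB[1]\bcB[2]$ with $\bcB[1] \seqn \txsA$ (a serialization of the multiset) and $\bcB[2] \seqn \bcsAi$ (a serialization of the shorter parallelized blockchain).

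Next I would chain the two semantic identities. By Definition~\ref{def:txpar:pbc-semantics},
\[
\msem{\WRmin{}}{\bcSt}{\txsA\,\bcsAi}
\; = \;
\msem{\WRmin{}}{\msem{\WRmin{}}{\bcSt}{\txsA}}{\bcsAi}.
\]
Applying Theorem~\ref{th:step-seq-union} to the multiset $\txsA$ and the serialization $\bcB[1] \seqn \txsA$ gives $\msem{\WRmin{}}{\bcSt}{\txsA} = \semBc{\bcB[1]}{\bcSt}$. Substituting this into the right-hand side, and then applying the induction hypothesis to $\bcsAi$ with $\bcB[2] \seqn \bcsAi$ and the state $\semBc{\bcB[1]}{\bcSt}$, we obtain
\[
\msem{\WRmin{}}{\semBc{\bcB[1]}{\bcSt}}{\bcsAi}
\; = \;
\semBc{\bcB[2]}{\semBc{\bcB[1]}{\bcSt}}.
\]
Finally, the fact that the blockchain semantics $\semBc{\cdot}{\cdot}$ is defined by iterating transaction semantics gives $\semBc{\bcB[2]}{\semBc{\bcB[1]}{\bcSt}} = \semBc{\bcB[1]\bcB[2]}{\bcSt} = \semBc{\bcB}{\bcSt}$, which closes the induction.

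The routine bookkeeping here is checking that the hypothesis "each multiset in $\bcsA$ is strongly swappable" is correctly inherited by the sub-blockchain $\bcsAi$ (immediate, since $\bcsAi$ is a suffix of $\bcsA$) and that the inversion of the serialization relation indeed yields $\bcB[1] \seqn \txsA$ in the sense of Definition~\ref{def:mset-serial}, so that Theorem~\ref{th:step-seq-union} is applicable. The only mild subtlety — and the step I would be most careful about — is making sure the composition law for $\semBc{\cdot}{\cdot}$ over concatenated blockchains, namely $\semBc{\bcB[1]\bcB[2]}{\bcSt} = \semBc{\bcB[2]}{\semBc{\bcB[1]}{\bcSt}}$, is invoked correctly; this follows directly from the inductive definition $\semBc{\txT\bcB}{\bcSt} = \semBc{\bcB}{\semTx{\txT}{\bcSt}}$ by a trivial sub-induction on $\card{\bcB[1]}$, but it is worth stating explicitly since the rest of the argument hinges on it. There is no real obstacle: the heavy lifting — commuting the parallel (merge-based) semantics of a strongly swappable multiset past any of its serializations — was already done in Theorem~\ref{th:step-seq-union}, and Theorem~\ref{th:seq-union} is essentially its iteration.
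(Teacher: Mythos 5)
Your proposal is correct and follows essentially the same route as the paper's own proof: induction on the derivation of $\bcB \seqn \bcsA$, using Theorem~\ref{th:step-seq-union} to handle the head multiset, the induction hypothesis for the tail, and the concatenation law for $\semBc{\cdot}{\cdot}$ to glue the two pieces together. The bookkeeping points you flag (inheritance of the swappability hypothesis, inversion of the serialization rule, the composition law) are exactly the ones the paper treats as immediate.
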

\begin{proof}
  By induction on the rule used for deriving $\bcB \seqn \bcsA$.
  \begin{itemize}

  \item Rule: $\irule{}{\bcEmpty \seqn \bcEmpty}$.
    \\[5pt]
    The thesis follows trivially, since
    \(
    \semBc{\bcEmpty}{\bcSt}= \bcSt = \msem{\WRmin{}}{\bcSt}{\bcEmpty} 
    \).

  \item Rule: $\irule{\bcB[1] \seqn \txsA \quad \bcB[2] \seqn \bcsA}
    {\bcB[1]\bcB[2] \seqn \txsA\bcsA}$.
    \\[5pt] 
    By Theorem \ref{th:step-seq-union}, for some reachable $\bcSti$ 
    it must be
    \(
    \semBc{\bcB[1]}{\bcSt}= \bcSti = \msem{\WRmin{}}{\bcSt}{\txsA} 
    \).
    By the induction hypothesis,
    \(
    \semBc{\bcB[2]}{\bcSti} = \msem{\WRmin{}}{\bcSti}{\bcsA} 
    \).
    The thesis then follows by:
    \[
    \semBc{\bcB[2]}{\bcSti} = \semBc{\bcB[1]\bcB[2]}{\bcSt} 
    \qquad
    \msem{\WRmin{}}{\bcSti}{\bcsA} = \msem{\WRmin{}}{\bcSt}{\txsA\bcsA}
    \qedhere
    \]
  \end{itemize}
\end{proof}

\begin{lem}
  \label{lem:le-star-po}
  Let 
  $\PNet{\relR}{\txT[1] \cdots \txT[n]} = (\Places, \Transitions, \Arcs, \markM[0])$.
  Then $(\Transitions,<^*)$ is a partial order.
\end{lem}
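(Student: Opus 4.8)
The plan is to verify directly the three defining properties of a partial order for the relation $<^*$ on $\Transitions$. Reflexivity and transitivity require no work at all: by construction in Figure~\ref{def:bc-to-pnet}, $<^*$ is the reflexive and transitive closure of $<$. So the only content is \emph{antisymmetry}, and the observation that makes it go through is that a single $<$-step strictly increases the transaction index — indeed, by definition $(\txT,i) < (\txTi,j)$ implies $i < j$, regardless of the relation $\relR$.

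First I would establish an auxiliary monotonicity fact: whenever $(\txT[i],i) <^* (\txT[j],j)$, then $i \le j$, and moreover $i = j$ forces $(\txT[i],i) = (\txT[j],j)$. This follows by induction on the length of a $<$-chain witnessing the pair being in $<^*$. The length-zero case is the reflexive case, where $i = j$ and the two transitions are literally the same element of $\Transitions = \setcomp{(\txT[k],k)}{1 \le k \le n}$ (the index in the second component pins down the transition). A chain of positive length is obtained from a shorter one by appending a $<$-step, which strictly increases the index; hence along the whole chain the index strictly increases, so in this case $i < j$.

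Then antisymmetry is immediate. Suppose $x <^* y$ and $y <^* x$, writing $x = (\txT[i],i)$ and $y = (\txT[j],j)$. Applying the monotonicity fact to $x <^* y$ gives $i \le j$, and to $y <^* x$ gives $j \le i$; hence $i = j$, and then (again by the monotonicity fact, or simply because the index uniquely identifies a transition in $\Transitions$) we get $x = y$. Together with the free reflexivity and transitivity of the closure, this shows that $(\Transitions, <^*)$ is a partial order.

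I do not anticipate any real obstacle: the statement holds for an arbitrary relation $\relR$, so none of the swappability machinery is needed, and the whole argument rests solely on the fact that $<$ respects the strict order of block positions. The one point worth stating with care is that two distinct positions $i \ne j$ in $\bcB = \txT[1] \cdots \txT[n]$ always yield distinct transitions $(\txT[i],i) \ne (\txT[j],j)$, even when the underlying transactions coincide; this is precisely why the transitions are tagged with their index, and it is what lets the index-monotonicity argument conclude equality of transitions rather than merely equality of indices.
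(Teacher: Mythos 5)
Your proposal is correct and follows essentially the same route as the paper's proof: reflexivity and transitivity come for free from the closure, and antisymmetry is reduced to the index monotonicity of $<$ (the paper leaves the step ``$i \leq j$ and $j \leq i$'' as easy to verify, which you spell out via induction on the chain length). No gaps.
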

\begin{proof}
  Transitivity and reflexivity hold by definition. 
  For antisymmetricity, assume that 
  $(\txT[i],i) <^* (\txT[j],j)$ and $(\txT[j],j) <^* (\txT[i],i)$.
  Then, it is easy to verify that $i \leq j$ and $j \leq i$, and so $i = j$.
  Since $\txT[i]$ and $\txT[j]$ are uniquely determined by $i$ and $j$, we have
  that $\txT[i] = \txT[j]$. 
  Therefore, $(\txT[i],i) = (\txT[j],j)$, as required.
\end{proof}

\lembctopnetonet*
\begin{proof}
  By Definition \ref{def:bc-to-pnet}, the first three conditions
  of the definition of occurrence net are easy to verify.
  To prove that $\Arcs^*$ is acyclic, we proceed by contradiction. 
  Assume that there is a sequence 
  $\vec x = x_0,x_1,\hdots x_m$ such that $(x_{i},x_{i+1}) \in \Arcs$ 
  for all $0 \leq i < m$, and $x_0 = x_m$ with $m > 0$. 
  Notice that the above sequence alternates between 
  transitions and places, and so, since $m > 0$, 
  at least one place and one transition occur in $\vec x$. 
  Further, a place between two transitions 
  $\trT \neq \trTi$ can exist only if $\trT < \trTi$. 
  Therefore, if $\trT,\trTi$ occur in $\vec x$, 
  it must be $\trT <^* \trTi$ and $\trTi <^* \trT$.
  So, if $\vec x$ contains at least two transitions, by Lemma \ref{lem:le-star-po}, 
  we have a contradiction. If only one transition $\trT = (\txT,i)$ occurs in $\vec x$, 
  then there is a place of the form $(\trT,\trT)$ occuring in $\vec x$. 
  Therefore, $\trT < \trT$, which implies $i < i$ --- contradiction.
\end{proof}

\begin{lem}
  \label{lem:ON-disjoint-pre-trans}
  Let $\netN = (\Places, \Transitions, \Arcs, \markM[0])$ be an occurrence net.
  For all $\trT,\trTi \in \Transitions$, if $\trT \neq \trTi$ then
  $\pre{\trT} \cap \pre{\trTi} = \emptyset$.
\end{lem}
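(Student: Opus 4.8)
The plan is to argue by contradiction, using condition~\ref{item:petri:onet:p} of the definition of occurrence net, which bounds the cardinality of the postset of every place by one. First I would unwind the notation: since $\Arcs$ is a relation in an occurrence net (condition~\ref{item:petri:onet:relation}), all the weights are $0$ or $1$, so the multisets $\pre{\cdot}$ and $\post{\cdot}$ are actually sets. Concretely, $\pre{\trT} = \setcomp{\placeP \in \Places}{\Arcs(\placeP,\trT) = 1}$ for a transition $\trT$, and $\post{\placeP} = \setcomp{y}{\Arcs(\placeP,y) = 1}$ for a place $\placeP$; moreover the multiset intersection $\pre{\trT} \cap \pre{\trTi}$ coincides with ordinary set intersection.

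Next I would assume, towards a contradiction, that $\trT \neq \trTi$ and that there exists a place $\placeP \in \pre{\trT} \cap \pre{\trTi}$. By the characterisation of presets just recalled, this means $\Arcs(\placeP,\trT) = 1 = \Arcs(\placeP,\trTi)$, and hence both $\trT$ and $\trTi$ lie in $\post{\placeP}$. Since $\trT \neq \trTi$, this yields $\card{\post{\placeP}} \geq 2$, which contradicts condition~\ref{item:petri:onet:p}. Therefore no such $\placeP$ exists, i.e.\ $\pre{\trT} \cap \pre{\trTi} = \emptyset$.

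I do not anticipate any real obstacle: the statement is an immediate consequence of the defining property that a place of an occurrence net has at most one transition in its postset. The only point requiring a little care is the (routine) passage between the multiset notation adopted for presets and postsets and its set-theoretic reading, which is justified precisely by $\Arcs$ being a relation.
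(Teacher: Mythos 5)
Your proof is correct and follows essentially the same argument as the paper: a shared input place $\placeP$ would put both $\trT$ and $\trTi$ in $\post{\placeP}$, forcing $\card{\post{\placeP}} \geq 2$ and contradicting condition~(i) of the occurrence-net definition. Your extra remark that presets and postsets are genuine sets because $\Arcs$ is a relation is a harmless elaboration the paper leaves implicit.
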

\begin{proof}
  By contradiction, assume that $\placeP \in \pre{\trT} \cap \pre{\trTi}$
  with $\trT \neq \trTi$.
  Then, $\setenum{\trT,\trTi} \subseteq \post{\placeP}$, and hence
  $\card{\post{\placeP}} \geq 2$ ---
  contradiction with constraint~\ref{item:petri:onet:p}
  of the definition of occurrence nets.
\end{proof}

\begin{lem}
  \label{lem:occurrenceNet-LTS}
  Let $\markM$ be a reachable marking of an occurrence net $\netN$.
  Then:
  \begin{enumerate}
  \item     \label{lem:occurrenceNet-LTS:item-determinism}
    If $\markM \trans{\trT} \markMi$ and $\markM \trans{\trT} \markMii$,
    then $\markMi = \markMii$ (determinism).
  \item \label{lem:occurrenceNet-LTS:item-diamond}
    If $\markM \trans{\trT} \markMi$, $\markM \trans{\trTi} \markMii$
    and $\trT \neq \trTi$, then there exists $\markMiii$ such that 
    $\markMi \trans{\trTi} \markMiii$ and $\markMii \trans{\trT} \markMiii$ 
    (diamond property).
  \item \label{lem:occurrenceNet-LTS:item-distinct}
    If $\markM \trans{\trT}\;\trans{}^*\,\trans{\trTi}$ 
    then $\trT \neq \trTi$
    (linearity).
  \item \label{lem:occurrenceNet-LTS:item-acyclic}
    If $\markM \trans{\vec{\trT}}{\!\!}^* \;\markM$ then $\card{\vec{\trT}} = 0$
    (acyclicity).
  \end{enumerate}
\end{lem}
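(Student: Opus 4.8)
The plan is to establish the four items in sequence, using only the occurrence-net conditions \ref{item:petri:onet:p}--\ref{item:petri:onet:po}, the fact that the initial marking $\markM[0]$ is a set (every place holds at most one token in it), and Lemma~\ref{lem:ON-disjoint-pre-trans}. Item~\ref{lem:occurrenceNet-LTS:item-determinism} is immediate, since the move relation is a function of the source marking and the fired transition: $\markM \trans{\trT} \markM - \pre{\trT} + \post{\trT}$, so $\markMi = \markMii$.

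For the diamond property (item~\ref{lem:occurrenceNet-LTS:item-diamond}) I would first use $\trT \neq \trTi$ together with Lemma~\ref{lem:ON-disjoint-pre-trans} to get $\pre{\trT} \cap \pre{\trTi} = \emptyset$. Then I would check that $\trTi$ is still enabled after $\trT$ fires: for each $\placeP \in \pre{\trTi}$ we have $\placeP \notin \pre{\trT}$, so firing $\trT$ does not remove $\placeP$'s token and can only add to it, hence $(\markM - \pre{\trT} + \post{\trT})(\placeP) \geq \markM(\placeP) \geq \pre{\trTi}(\placeP)$; by symmetry $\trT$ is enabled after $\trTi$ fires. Finally, take $\markMiii = \markM - \pre{\trT} + \post{\trT} - \pre{\trTi} + \post{\trTi}$ and observe that at every step we only subtract a sub-multiset of the current marking, so the whole expression is computed by ordinary componentwise integer arithmetic, which is commutative; thus $\markM - \pre{\trTi} + \post{\trTi} - \pre{\trT} + \post{\trT}$ also equals $\markMiii$, giving the two required moves.

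For linearity (item~\ref{lem:occurrenceNet-LTS:item-distinct}) and acyclicity (item~\ref{lem:occurrenceNet-LTS:item-acyclic}) the observation I would isolate is: by condition~\ref{item:petri:onet:mzero}, a place $\placeP$ is in $\markM[0]$ exactly when $\pre{\placeP} = \emptyset$, and such a place can never receive a token, so along any firing sequence its marking is non-increasing, starting from a value $\leq 1$ whenever the sequence starts at $\markM[0]$. For item~\ref{lem:occurrenceNet-LTS:item-acyclic}: from a nonempty firing sequence leading from $\markM$ back to $\markM$, pick any transition $\trT$ occurring in it and any $\placeP \in \pre{\trT}$ with $\pre{\placeP} = \emptyset$ (in the nets $\PNet{\relR}{\bcB}$ of Figure~\ref{def:bc-to-pnet} just take $\placeP = (*,\trT)$; for an arbitrary occurrence net one descends the well-founded relation $\Arcs^{*}$ and tracks the causal history below $\trT$); the firing of $\trT$ strictly decreases $\markM(\placeP)$ while no step can increase it, contradicting the return to $\markM$. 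For item~\ref{lem:occurrenceNet-LTS:item-distinct}: since $\markM$ is reachable there is a firing sequence from $\markM[0]$ to $\markM$, which I prepend to the given sequence; a short induction on prefix length then shows that all transitions of any firing sequence from $\markM[0]$ are pairwise distinct --- a transition firing twice would force the place $\placeP \in \pre{\trT}$ above to regain a token between the two firings, which is impossible (for $\placeP \notin \markM[0]$ the induction hypothesis additionally guarantees that the unique producer of $\placeP$, by condition~\ref{item:petri:onet:mzero}, and the unique consumer, by condition~\ref{item:petri:onet:p}, have already fired) --- whence $\trT \neq \trTi$ in the statement.

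I expect items~\ref{lem:occurrenceNet-LTS:item-determinism} and~\ref{lem:occurrenceNet-LTS:item-diamond} to be routine (the only care needed there is to keep the multiset subtractions exact, so that commutativity applies). The main obstacle is making the ``a consumed place is never replenished'' argument fully rigorous for an arbitrary occurrence net rather than only for the nets produced by our construction: one must track not merely the places in $\pre{\trT}$ but the whole $\Arcs^{*}$-history below $\trT$, invoking condition~\ref{item:petri:onet:p} to bound the consumers of each place and condition~\ref{item:petri:onet:mzero} to pin down its producer, so that a token genuinely cannot reappear once consumed; restricting to the occurrence nets $\PNet{\relR}{\bcB}$ collapses this to the one-line observation about the place $(*,\trT)$.
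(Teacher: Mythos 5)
Your proposal is correct and follows essentially the same route as the paper: items (1)–(2) are proved identically (determinism directly from the definition of firing; the diamond via Lemma~\ref{lem:ON-disjoint-pre-trans} plus the observation that all multiset subtractions are exact, so the arithmetic commutes), and your token-conservation argument for (3)–(4) is just a fleshed-out version of the paper's one-line appeal to induction on the reduction length and the fact that $\Arcs^*$ is a partial order. The one point that needs care — $\pre{\trT}$ need not contain a place with empty preset, so one must descend to an $\Arcs^*$-minimal offending transition, using that each place has a unique producer and at most one consumer — is exactly the point you flag and repair correctly.
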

\begin{proof}
  For item \ref{lem:occurrenceNet-LTS:item-determinism},
  by definition of the firing of transitions of Petri Nets it must be
  \(
  \markMi = \markM - \pre{\trT} + \post{\trT} = \markMii
  \).

  \smallskip\noindent
  For item \ref{lem:occurrenceNet-LTS:item-diamond},
  since $\markM \trans{\trT} \markMi$ and $\markM \trans{\trTi} \markMii$,
  it must be:
  \begin{align*}
    & \pre{\trT} \subseteq \markM
    && \markMi = \markM - \pre{\trT} + \post{\trT}
    \\
    & \pre{\trTi} \subseteq \markM
    && \markMii = \markM - \pre{\trTi} + \post{\trTi}
  \end{align*}
  By Lemma~\ref{lem:ON-disjoint-pre-trans},
  $\trTi$ is enabled at $\markMi$, and $\trT$ is enabled at $\markMii$.
  Then, by definition of firing:
  \[
  \markMi \trans{\trTi} \markMi - \pre{\trTi} + \post{\trTi}\qquad \text{and} 
  \qquad \markMii \trans{\trT}\markMii - \pre{\trT} + \post{\trT}
  \]
  Then:
  \begin{align*}
    \markMi - \pre{\trTi} + \post{\trTi} 
    & = (\markM - \pre{\trT} + \post{\trT}) - \pre{\trTi} + \post{\trTi} \\
    & = (\markM - \pre{\trTi} + \post{\trTi}) - \pre{\trT} + \post{\trT} 
    && \text{(as $\pre{\trTi} \subseteq \markM$)} \\
    & = \markMii  - \pre{\trT} + \post{\trT}
  \end{align*}
  Hence, the thesis follows by choosing 
  $\markMiii = \markMi - \pre{\trTi} + \post{\trTi}$.

  \smallskip\noindent
  Item \ref{lem:occurrenceNet-LTS:item-distinct} follows directly
  by induction on the length of the reduction $\trans{}^*$,
  exploiting the fact that $\Arcs^*$ is a partial order.

  \smallskip\noindent
  Item \ref{lem:occurrenceNet-LTS:item-acyclic} follows by the fact that
  $\Arcs^*$ is a partial order.
\end{proof}

\begin{lem}
  \label{lem:occurrenceNet-reach}
  Let $\netN = (\Places, \Transitions, \Arcs, \markM[0])$ be an occurrence net,
  and let $\markM$ be a reachable marking, such that, 
  for some $\trT$, $\markMi$, $\markMii$:
  \begin{center}
    \begin{tikzcd}[column sep=normal, row sep=normal, negated/.style={
        decoration={markings,
          mark= at position 0.5 with {
            \node[transform shape] (tempnode) {$\slash$};
          }
        },
        postaction={decorate}
      }]
      \markM
      \ar[r]
      \ar[dr, start anchor=south east, end anchor=north west,"{\trT}", sloped, ""' near end]
      & {\!\!}^{n + 1} \;\; \markMi \ar[r, sloped, negated, "\trT"]
      & {} \\
      & \;\; \markMii
      & {}
    \end{tikzcd}
  \end{center}
  Then, $\markMii \trans{}^n \markMi$.
\end{lem}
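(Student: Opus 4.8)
The plan is to work entirely with the structural properties of occurrence nets collected in Lemma~\ref{lem:occurrenceNet-LTS} --- determinism, the diamond property, and linearity --- together with Lemma~\ref{lem:ON-disjoint-pre-trans}. Write the hypothesised reduction as $\markM = \mathsf{q}_0 \trans{\trT[1]} \mathsf{q}_1 \trans{\trT[2]} \cdots \trans{\trT[n+1]} \mathsf{q}_{n+1} = \markMi$, and note that $\trT$ is enabled at $\markM$ (since $\markM \trans{\trT} \markMii$) but, by hypothesis, not at $\markMi$. The proof then splits into two parts: first show that $\trT$ must be one of the $\trT[i]$, and then commute that occurrence to the front of the prefix.

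\textbf{Step 1: $\trT$ is fired somewhere in the sequence.} Suppose not, so $\trT[i] \ne \trT$ for every $i$. I would then show by induction on $i$ that $\trT$ is enabled at $\mathsf{q}_i$: this holds at $\mathsf{q}_0 = \markM$, and if $\trT$ is enabled at $\mathsf{q}_i$ then, since $\mathsf{q}_i \trans{\trT[i+1]} \mathsf{q}_{i+1}$ with $\trT[i+1] \ne \trT$, the diamond property (item~\ref{lem:occurrenceNet-LTS:item-diamond} of Lemma~\ref{lem:occurrenceNet-LTS}) produces a marking reachable from $\mathsf{q}_{i+1}$ by firing $\trT$, \ie $\trT$ is enabled at $\mathsf{q}_{i+1}$. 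Taking $i = n+1$ contradicts the assumption that $\trT$ is not enabled at $\markMi$. Hence $\trT = \trT[k]$ for some $k \in \{1, \dots, n+1\}$. Linearity (item~\ref{lem:occurrenceNet-LTS:item-distinct}), applied at the reachable marking preceding any putative second occurrence of $\trT$, shows that $k$ is unique; in particular $\trT[i] \ne \trT$ for all $i < k$.

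\textbf{Step 2: commute $\trT$ to the front and recompose.} Applying the diamond property repeatedly along the prefix --- starting from $\markMii$ and $\mathsf{q}_0 \trans{\trT[1]} \mathsf{q}_1$, and at each later stage using $\trT[i] \ne \trT$ --- I obtain a reduction $\markMii \trans{\trT[1]} \cdots \trans{\trT[k-1]} \mathsf{r}$ for some marking $\mathsf{r}$, together with $\mathsf{q}_{k-1} \trans{\trT} \mathsf{r}$. Since the original sequence also gives $\mathsf{q}_{k-1} \trans{\trT[k]} \mathsf{q}_k$ with $\trT[k] = \trT$, determinism (item~\ref{lem:occurrenceNet-LTS:item-determinism}) forces $\mathsf{r} = \mathsf{q}_k$, so $\markMii \trans{}^{k-1} \mathsf{q}_k$; concatenating the untouched tail $\mathsf{q}_k \trans{\trT[k+1]} \cdots \trans{\trT[n+1]} \mathsf{q}_{n+1} = \markMi$ yields $\markMii \trans{}^{(k-1)+(n+1-k)} \markMi$, that is $\markMii \trans{}^{n} \markMi$, as required. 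The extreme cases $k=1$ (empty prefix, so $\mathsf{q}_1 = \markMii$ directly by determinism) and $k = n+1$ (empty tail) are instances of the same argument. The diamond/determinism bookkeeping here is routine; the one step that calls for genuine care is Step~1, where the hypothesis ``$\trT$ is not enabled at $\markMi$'' is consumed --- the key observation being that in an occurrence net the diamond property makes the enabledness of a fixed transition persist along any firing sequence that does not itself fire that transition.
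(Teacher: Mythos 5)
Your proof is correct and rests on exactly the same ingredients as the paper's --- the diamond property, determinism, and linearity from Lemma~\ref{lem:occurrenceNet-LTS} --- used in the same way: $\trT$ is commuted past the other transitions of the sequence until its occurrence (forced by the non-enabledness of $\trT$ at $\markMi$) is reached. The paper merely packages the two phases you separate (persistence of enabledness forcing $\trT$ to appear, then commutation to the front) into a single induction on $n$ that peels one transition off the front of the sequence at each step.
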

\begin{proof}
  By induction on $n$. 
  For the base case, it must be $\markM \trans{}^1 \markMi$,
  and hence $\markM \trans{\trTi} \markMi$ for some $\trTi$. Since $\markM 
  \trans{\trT} \markMii$, by the contrapositive of 
  item \ref{lem:occurrenceNet-LTS:item-diamond} of Lemma \ref{lem:occurrenceNet-LTS} 
  (diamond property) it follows $\trT = \trTi$. 
  So, by item \ref{lem:occurrenceNet-LTS:item-determinism} 
  of Lemma~\ref{lem:occurrenceNet-LTS}
  (determinism) we have that $\markMi = \markMii$. Clearly: 
  \[
  \markMii \trans{}^0 \markMi
  \]
  For the induction case, let $n = m + 1$, for some $m$.
  Then, for some $\trTi,\markMiii$: 
  \[
  \markM \trans{\trTi} \markMiii \trans{}^{m + 1} \markMi
  \] 
  If $\trT = \trTi$, then by 
  item~\ref{lem:occurrenceNet-LTS:item-determinism} 
  of Lemma \ref{lem:occurrenceNet-LTS} (determinism)
  it follows that $\markMiii = \markMii$,
  and so we have the thesis $\markMii \trans{}^{n} \markMi$.
  Otherwise, if $\trT \neq \trTi$,
  by item~\ref{lem:occurrenceNet-LTS:item-diamond}
  of Lemma~\ref{lem:occurrenceNet-LTS} (diamond property), 
  there must exists
  $\markM[1]$ such that:
  \[
  \markMii \trans{\trTi} \markM[1] \;\;\text{and}\;\; 
  \markMiii \trans{\trT} \markM[1]
  \] 
  We are in the following situation:
  \begin{center}
    \begin{tikzcd}[column sep=normal, row sep=normal, negated/.style={
        decoration={markings,
          mark= at position 0.5 with {
            \node[transform shape] (tempnode) {$\slash$};
          }
        },
        postaction={decorate}
      }]
      \markMiii
      \ar[r]
      \ar[dr, start anchor=south east, end anchor=north west,"{\trT}", sloped, ""' near end]
      & {\!\!}^{m + 1} \;\; \markMi \ar[r, sloped, negated, "\trT"]
      & {} \\
      & \;\; \markM[1]
      & {}
    \end{tikzcd}
  \end{center}
  Since $m + 1 = n$, by the induction hypothesis:
  \[
  \markM[1] \trans{}^m \markMi
  \]
  Therefore, we have the thesis:
  \[
  \markMii \trans{\trTi} \markM[1] \trans{}^m \markMi
  \qedhere
  \]
\end{proof}

\begin{lem}
  \label{lem:PNet-leq-implies-not-independent}
  Let $(\txT,i),(\txTi,j)$ be transitions of
  $\PNet{\relR}{\bcB}$.
  If $\markM$ is a reachable marking, then:
  \[
  (\txT,i) < (\txTi,j)
  \;\;\text{and}\;\;
  \markM \trans{(\txT,i)}
  \quad\implies\quad
  \markM \nottrans{(\txTi,j)}
  \]
\end{lem}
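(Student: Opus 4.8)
The plan is to read the needed structure directly off the construction in Figure~\ref{def:bc-to-pnet}. Assume $(\txT,i) < (\txTi,j)$ and $\markM \trans{(\txT,i)}$ for a reachable marking $\markM$. Since $(\txT,i) < (\txTi,j)$, the place $\placeP \eqdef ((\txT,i),(\txTi,j))$ belongs to $\Places$, and the weight function gives $\Arcs((\txT,i),\placeP) = 1 = \Arcs(\placeP,(\txTi,j))$, while $\Arcs(\trT,\placeP) = 0$ for every transition $\trT \neq (\txT,i)$. Thus $\placeP \in \post{(\txT,i)} \cap \pre{(\txTi,j)}$, the transition $(\txT,i)$ is the \emph{unique} transition that can deposit a token into $\placeP$, all arcs incident to $\placeP$ have weight $1$, and $\markM[0](\placeP) = 0$ because the initial marking only marks places of the form $(*,\trT)$. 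By Lemma~\ref{lem:bc-to-pnet:onet} the net $\PNet{\relR}{\bcB}$ is an occurrence net, so Lemma~\ref{lem:occurrenceNet-LTS} applies.

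Then I would argue by contradiction. Suppose $(\txTi,j)$ is also enabled at $\markM$, i.e.\ $\pre{(\txTi,j)} \subseteq \markM$; since $\placeP \in \pre{(\txTi,j)}$ this forces $\markM(\placeP) \geq 1$. Because $\markM$ is reachable, fix a firing sequence from $\markM[0]$ to $\markM$. As $\markM[0](\placeP) = 0$, the only transition able to increase the token count of $\placeP$ (by $1$) is $(\txT,i)$, and the only one able to decrease it is $(\txTi,j)$; hence $\markM(\placeP) \geq 1$ forces $(\txT,i)$ to occur at least once in this firing sequence. But then firing $(\txT,i)$ once more at $\markM$ — which is possible by the hypothesis $\markM \trans{(\txT,i)}$ — contradicts the linearity property, item~\ref{lem:occurrenceNet-LTS:item-distinct} of Lemma~\ref{lem:occurrenceNet-LTS}. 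Therefore $(\txTi,j)$ is not enabled at $\markM$, that is $\markM \nottrans{(\txTi,j)}$.

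The argument is short and is essentially bookkeeping with the weight function of Figure~\ref{def:bc-to-pnet}; the one point that needs care is the step deducing from $\markM(\placeP) \geq 1$ that $(\txT,i)$ has already fired and then invoking linearity for the correct (reachable) intermediate marking. A more pedestrian alternative, avoiding Lemma~\ref{lem:occurrenceNet-LTS}, would derive the same contradiction directly from acyclicity of $\Arcs^*$ (condition~\ref{item:petri:onet:po} of occurrence nets), but routing through linearity keeps the proof cleaner.
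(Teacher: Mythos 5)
Your proposal is correct and follows essentially the same route as the paper: the paper's proof consists only of your first observation (that $(\txT,i) < (\txTi,j)$ forces the place $\placeP = ((\txT,i),(\txTi,j))$ to exist with $\Arcs((\txT,i),\placeP) = 1 = \Arcs(\placeP,(\txTi,j))$), leaving the remaining token-counting implicit. Your completion of that bookkeeping — $\markM[0](\placeP)=0$, only $(\txT,i)$ feeds $\placeP$, and linearity rules out $(\txT,i)$ having already fired when it is still enabled at $\markM$ — is exactly the argument the paper intends.
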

\begin{proof}
  By the construction in Figure~\ref{def:bc-to-pnet},
  since $(\txT,i) < (\txTi,j)$, then 
  $\placeP = ((\txT,i),(\txTi,j))$ is a place of the occurrence net,
  and 
  $\Arcs((\txT,i),\placeP) = 1$ and $\Arcs(\placeP,(\txTi,j)) = 1$.
\end{proof}

\begin{defi}[{Independency}]
  \label{def:independent}
  Let $\netN$ be an occurrence net.
  We say that two transitions $\trT$ and $\trTi$ are \emph{independent}, 
  in symbols $\trT \independent \trTi$, 
  if $\trT \neq \trTi$ and there exists a reachable marking $\markM$ such that:
  \[
  \markM \trans{\trT} \;\;\text{and}\;\; \markM \trans{\trTi}
  \]
  We define $\equivSeq$ 
  as the least congruence in the free monoid $\Transitions^*$ such that,
  for all $\trT, \trTi \in \Transitions$:
  \(\;
  \trT \, I \, \trTi \implies \trT \trTi \equivSeq \trTi \trT
  \).
\end{defi}

\begin{lem}
  \label{lem:PN-mstep-indepenent}
  Let $\netN$ be an occurrence net, with a reachable marking $\markM$. 
  If $\markM \trans{\trSU}$ then $\trT \independent \trTi$,
  for all $\trT \neq \trTi \in \trSU$.
\end{lem}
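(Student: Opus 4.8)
The plan is to take the witness marking required by Definition~\ref{def:independent} to be $\markM$ itself. First I would unfold the hypothesis $\markM \trans{\trSU}$: by the definition of step firing this means that $\trSU$ is enabled at $\markM$, i.e.\ $\pre{\trSU} \leq \markM$, where $\pre{\trSU} = \sum_{\trTii \in \trSU} \pre{\trTii}$ is a sum of non-negative multisets indexed by the \emph{set} $\trSU$.

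Next, fix $\trT,\trTi \in \trSU$ with $\trT \neq \trTi$. Since $\trT$ and $\trTi$ are two distinct elements of $\trSU$, both $\pre{\trT}$ and $\pre{\trTi}$ occur as summands of $\pre{\trSU}$; as all the summands are non-negative multisets, this gives $\pre{\trT} \leq \pre{\trSU} \leq \markM$ and likewise $\pre{\trTi} \leq \pre{\trSU} \leq \markM$. These are exactly the enabledness conditions for the single transitions $\trT$ and $\trTi$ at $\markM$, hence $\markM \trans{\trT}$ and $\markM \trans{\trTi}$.

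Finally, since $\markM$ is reachable by hypothesis and $\trT \neq \trTi$, Definition~\ref{def:independent} applies with $\markM$ as the common marking from which both $\trT$ and $\trTi$ fire, yielding $\trT \independent \trTi$, which is the thesis. I do not expect a real obstacle here: the only point worth making explicit is that $\pre{\trSU}$ is a sum over a \emph{set} of transitions, so that two distinct members of $\trSU$ contribute two distinct summands, each dominated by the whole sum — everything else is a direct unfolding of definitions.
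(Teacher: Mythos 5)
Your proof is correct and follows essentially the same route as the paper, which simply observes that $\markM \trans{\trSU}$ implies $\markM \trans{\trT}$ for every $\trT \in \trSU$ and then invokes Definition~\ref{def:independent} with $\markM$ as the witness. Your version merely makes explicit the multiset inequality $\pre{\trT} \leq \pre{\trSU} \leq \markM$ that the paper leaves implicit.
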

\begin{proof}
  Since $\markM \trans{\trSU}$, then $\markM \trans{\trT}$ for all
  $\trT \in \trSU$.
\end{proof}

\begin{lem}
  \label{lem:PN-trace-equiv}
  Let $\netN$ be an occurrence net,
  and let $\markM$ be a reachable marking. 
  If $\markM \trans{\vec{\trT[1]}}{\!}^*\, \markMi$ and 
  $\markM \trans{\vec{\trT[2]}}{\!}^* \, \markMi$, 
  then $\vec{\trT[1]} \equivSeq \vec{\trT[2]}$.
\end{lem}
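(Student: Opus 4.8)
The plan is to proceed by induction on $\card{\vec{\trT[1]}}$, at each step commuting the first transition of $\vec{\trT[1]}$ to the front of $\vec{\trT[2]}$ and then appealing to the inductive hypothesis on the shorter tails. The base case $\card{\vec{\trT[1]}} = 0$ is immediate: then $\markM = \markMi$, so $\markM \trans{\vec{\trT[2]}}{\!}^* \markM$, and acyclicity (item~\ref{lem:occurrenceNet-LTS:item-acyclic} of Lemma~\ref{lem:occurrenceNet-LTS}) forces $\card{\vec{\trT[2]}} = 0$, whence $\bcEmpty \equivSeq \bcEmpty$ by reflexivity of the congruence $\equivSeq$.

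For the inductive step, write $\vec{\trT[1]} = \trT\,\vec{u}$, so $\markM \trans{\trT} \markMii \trans{\vec{u}}{\!}^* \markMi$ for some $\markMii$. First I would note that $\trT$ is not enabled at $\markMi$: otherwise the firing sequence $\markM \trans{\trT} \markMii \trans{\vec{u}}{\!}^* \markMi \trans{\trT}$ would fire $\trT$ twice, contradicting linearity (item~\ref{lem:occurrenceNet-LTS:item-distinct}). Then I would walk along $\vec{\trT[2]}$ starting from $\markM$, step by step, as long as the fired transition differs from $\trT$: each intermediate marking is reachable, so the diamond property (item~\ref{lem:occurrenceNet-LTS:item-diamond}) applies and shows both that $\trT$ is independent of the transition just fired and that $\trT$ remains enabled at the new marking. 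The walk cannot exhaust all of $\vec{\trT[2]}$, since that would leave $\trT$ enabled at $\markMi$; hence $\trT$ occurs in $\vec{\trT[2]}$, say $\vec{\trT[2]} = \vec{a}\,\trT\,\vec{b}$ with $\trT$ not occurring in $\vec{a}$, and every transition of $\vec{a}$ is independent of $\trT$. The repeated diamond steps along $\vec{a}$, together with determinism (item~\ref{lem:occurrenceNet-LTS:item-determinism}), also produce the rearranged firing sequence $\markM \trans{\trT} \markMii \trans{\vec{a}\,\vec{b}}{\!}^* \markMi$; and, since $\equivSeq$ is a congruence, the independences just established give $\vec{\trT[2]} \equivSeq \trT\,\vec{a}\,\vec{b}$. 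Setting $\vec{c} = \vec{a}\,\vec{b}$, we now have $\markMii$ reachable, $\card{\vec{u}} < \card{\vec{\trT[1]}}$, and $\markMii \trans{\vec{u}}{\!}^* \markMi$, $\markMii \trans{\vec{c}}{\!}^* \markMi$; the inductive hypothesis gives $\vec{u} \equivSeq \vec{c}$, and therefore $\vec{\trT[1]} = \trT\,\vec{u} \equivSeq \trT\,\vec{c} \equivSeq \vec{\trT[2]}$.

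I expect the main obstacle to be the commutation step in the middle of the inductive case: showing that the prefix $\vec{a}$ of $\vec{\trT[2]}$ preceding the unique occurrence of $\trT$ consists entirely of transitions independent of $\trT$, and that $\trT$ can be slid to the front both syntactically (up to $\equivSeq$) and semantically (preserving the source and target markings of the firing sequence). This is a secondary induction along $\vec{a}$ that must maintain two invariants simultaneously --- that $\trT$ stays enabled, and that it is independent of each transition fired --- and it depends crucially on the diamond property of occurrence nets and on the fact that all intermediate markings are reachable, which is what licenses the appeals to Definition~\ref{def:independent} and to Lemma~\ref{lem:occurrenceNet-LTS}. Once this is settled, determinism and the congruence property of $\equivSeq$ reduce everything to the inductive hypothesis with no further difficulty; a minor point to watch is keeping the reachability hypothesis available throughout, which is why the induction threads through the reachable marking $\markMii$ rather than an arbitrary one.
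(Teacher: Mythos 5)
Your proof is correct, but it is organized differently from the paper's. The paper inducts on the length of the \emph{longer} of the two firing sequences and, when the two heads $\trT[1] \neq \trT[2]$ differ, closes the diamond once to a common marking $\markMii$, invokes the helper Lemma~\ref{lem:occurrenceNet-reach} only to obtain \emph{some} connecting path $\vec{\trT}$ from $\markMii$ to $\markMi$, and then applies the induction hypothesis \emph{twice} (once on each side of the diamond) before chaining the resulting equivalences through $\trT[1]\trT[2]\vec{\trT} \equivSeq \trT[2]\trT[1]\vec{\trT}$ --- a Newman-lemma-style confluence argument. You instead induct on $\card{\vec{\trT[1]}}$ alone, and do the commutation work up front: you show the head $\trT$ of $\vec{\trT[1]}$ must occur in $\vec{\trT[2]}$ (via linearity and the preserved enabledness along the walk), slide it to the front both semantically and up to $\equivSeq$, and then make a \emph{single} appeal to the induction hypothesis. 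Your walk along $\vec{a}$ essentially re-proves Lemma~\ref{lem:occurrenceNet-reach} inline, but strengthened to also record the independences $a_j \independent \trT$ and hence the syntactic rearrangement $\vec{\trT[2]} \equivSeq \trT\,\vec{a}\,\vec{b}$, which the paper instead recovers indirectly through its second inductive call. The trade-off is that your secondary induction must carry the two invariants you identify (enabledness of $\trT$ and independence from each fired transition), whereas the paper keeps each inductive step local at the cost of a slightly more intricate case analysis and a less obvious termination measure; both arguments rest on exactly the same primitives (determinism, diamond, linearity, acyclicity, and reachability of intermediate markings). One cosmetic point: the independence $\trT \independent a_j$ follows directly from Definition~\ref{def:independent} because both transitions are enabled at the same reachable intermediate marking; the diamond property is only needed for the preservation of enabledness, not for the independence itself.
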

\begin{proof}
  We proceed by induction on the length of the longest reduction among
  $\markM \trans{\vec{\trT[1]}}{\!}^* \markMi$ and 
  $\markM \trans{\vec{\trT[2]}}{\!}^* \markMi$.
  For the base case, the thesis is trivial as both $\vec{\trT[1]}$ and 
  $\vec{\trT[2]}$ are empty.
  For the induction case, assume that $\vec{\trT[1]}$ 
  is longer or equal to $\vec{\trT[2]}$ (the other case is symmetric). 
  Let $\vec{\trT[1]} = \trT[1]\vec{\trTi[1]}$.
  We first show that $\vec{\trT[2]}$ is not empty. 
  By contradiction, if $\vec{\trT[2]}$ is empty, then $\markM = \markMi$. 
  But then, by item \ref{lem:occurrenceNet-LTS:item-acyclic} of
  Lemma \ref{lem:occurrenceNet-LTS} (acyclicity) it follows that $\vec{\trT[1]}$ is
  empty as well: contradiction. 
  Therefore, $\vec{\trT[2]} = \trT[2]\vec{\trTi[2]}$ for some $\trT[2]$ and 
  $\vec{\trTi[2]}$. 
  Clearly, $\vec{\trT[1]}$ is longer
  than $\vec{\trTi[1]}$ and $\vec{\trTi[2]}$. 
  Let $\markM \trans{\trT[1]} \markM[1]$ and $\markM \trans{\trT[2]} \markM[2]$.
  We have two subcases.
  \begin{itemize}
  \item If $\trT[1] = \trT[2]$, 
    by determinism (Lemma \ref{lem:occurrenceNet-LTS}) 
    it follows that $\markM[1] = \markM[2]$.
    Let $\markMii = \markM[1]$.
    By the hypothesis of the~\namecref{lem:PN-trace-equiv}, we have
    $\markMii \trans{\vec{\trTi[1]}}{\!\!\!}^* \; \markMi$ and 
    $\markMii \trans{\vec{\trTi[2]}}{\!\!\!}^* \; \markMi$. 
    Then, by the induction hypothesis we have
    $\vec{\trTi[1]} \equivSeq \vec{\trTi[2]}$,
    and so the thesis $\trT[1] \vec{\trTi[1]} \equivSeq \trT[2] \vec{\trTi[2]}$
    follows since $\equivSeq$ is a congruence.
  \item If $\trT[1] \neq \trT[2]$, then by Definition~\ref{def:independent} it must be
    $\trT[1] \independent \trT[2]$. 
    By the diamond property (Lemma \ref{lem:occurrenceNet-LTS}), 
    there exists $\markMii$
    such that $\markM[1] \trans{\trT[2]} \markMii$ and 
    $\markM[2] \trans{\trT[1]} \markMii$. 
    By linearity
    (item~\ref{lem:occurrenceNet-LTS:item-distinct} of Lemma \ref{lem:occurrenceNet-LTS}), 
    $\markMi \nottrans{\;\;\trT[1]}$ and
    $\markMi \nottrans{\;\;\trT[2]}$. 
    By Lemma \ref{lem:occurrenceNet-reach}, applied on $\markM[2]$, 
    there exists $\vec{\trT}$ such that
    $\markMii \trans{\vec{\trT}}{\!\!}^* \; \markMi$ 
    and $\card{\vec{\trT}} + 1 = \card{\vec{\trTi[2]}}$. 
    So, we are in the following situation:
    \begin{center}
      \begin{tikzcd}[column sep=normal, row sep=normal, negated/.style={
          decoration={markings,
            mark= at position 0.5 with {
              \node[transform shape] (tempnode) {$\slash$};
            }
          },
          postaction={decorate}
        }]
        & & \markMi 
        \ar[r, sloped, negated, "{\quad\trT[2]}"] & {} \\
        & \markM[1] 
        \ar[ru, sloped, "\vec{\trTi[1]}"] 
        \ar[rd, sloped, "{\trT[2]}"] & & \\
        \markM
        \ar[ru, sloped, "{\trT[1]}"]
        \ar[rd, sloped, "{\trT[2]}"]
        & 
        & \markMii
        \ar[dd, dashed, "\vec{\trT}"] 
        \\
        & \;\; \markM[2]
        \ar[ru, "{\trT[1]}"]
        \ar[rd, "\vec{\trTi[2]}"]
        \\
        & & \markMi 
        \ar[r, sloped, negated, "{\quad\trT[1]}"]
        & {}
      \end{tikzcd}
    \end{center}

    Therefore, we have that:
    \begin{align*}
      & \markM[1] \trans{\trT[2]} \trans{\vec{\trT}}{\!\!}^* \; \markMi
        \quad\text{and}\quad
        \markM[1] \trans{\vec{\trTi[1]}}{\!\!}^* \; \markMi
      \\
      & \markM[2] \trans{\trT[1]} \trans{\vec{\trT}}{\!\!}^* \; \markMi
        \quad\text{and}\quad
        \markM[2] \trans{\vec{\trTi[2]}}{\!\!}^* \; \markMi
    \end{align*}
    Notice that  
    $\card{\trT[2]\vec{\trT}} = \card{\trT[1]\vec{\trT}} = \card{\vec{\trT}} + 1 
    = \card{\vec{\trTi[2]}} \leq  \card{\vec{\trTi[1]}} < \card{\vec{\trT[1]}}$.
    Hence, by applying the induction hypothesis twice:
    \[
    \trT[2]\vec{\trT} \equivSeq \vec{\trTi[1]} 
    \quad \text{and} \quad
    \trT[1]\vec{\trT} \equivSeq \vec{\trTi[2]}
    \]
    Then, since $\equivSeq$ is a congruence:
    \[
    \trT[1]\trT[2]\vec{\trT} \equivSeq \trT[1]\vec{\trTi[1]} \;\;\text{and}\;\;
    \trT[2]\trT[1]\vec{\trT} \equivSeq \trT[2]\vec{\trTi[2]}
    \]
    Since $\trT[1] \independent \trT[2]$, 
    then $\trT[1]\trT[2]\vec{\trT} \equivSeq \trT[2]\trT[1]\vec{\trT}$. 
    By transitivity of $\equivSeq$:
    \[
    \vec{\trT[1]} 
    \; = \;
    \trT[1]\vec{\trTi[1]} 
    \; \equivSeq \;
    \trT[1] \trT[2] \vec{\trT} 
    \; \equivSeq \;
    \trT[2] \trT[1] \vec{\trT} 
    \; \equivSeq \;
    \trT[2]\vec{\trTi[2]} 
    \; = \;
    \vec{\trT[2]}
    \qedhere
    \]
  \end{itemize}
\end{proof}

\begin{defi}
  \label{def:petri:trOfTrSU}
  For all sequences of transitions $\vec{\trT}$, we define 
  the set $\trOfTrSU{\vec{\trT}}$ of the transitions 
  occurring in $\vec{\trT}$ as:
  \[
  \trOfTrSU{\vec{\trT}} = 
  \setcomp{\trT}{\exists \vec{\trT[1]},\vec{\trT[2]}: 
    \vec{\trT} = \vec{\trT[1]}\trT\vec{\trT[2]}}
  \]
  and we extend $\trOfTrSU{}$ to step firing sequences $\vec{\trSU}$ as follows:
  \[
  \trOfTrSU{\vec{\trSU}} = 
  \bigcup \setcomp{\trSU}{\exists \vec{\trSU[1]},\vec{\trSU[2]}: 
    \vec{\trSU} = \vec{\trSU[1]}\trSU\vec{\trSU[2]}}
  \]
\end{defi}

\begin{lem}
  \label{lem:equivSeq-implies-equal-sets}
  If $\vec{\trT} \equivSeq \vec{\trTi}$ then
  $\trOfTrSU{\vec{\trT}} = \trOfTrSU{\vec{\trTi}}$.
\end{lem}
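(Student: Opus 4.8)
The plan is to observe that $\trOfTrSU{\cdot}$, viewed as a map from the free monoid $\Transitions^*$ to the set $\powset{\Transitions}$ equipped with union and the empty set, is a monoid homomorphism, and then to exploit the fact that $\equivSeq$ is defined (Definition~\ref{def:independent}) as the \emph{least} congruence satisfying a generating property.

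First I would record the two defining properties of $\trOfTrSU{\cdot}$ on words, both immediate from Definition~\ref{def:petri:trOfTrSU}: $\trOfTrSU{\bcEmpty} = \emptyset$, and $\trOfTrSU{\vec{v}\,\vec{w}} = \trOfTrSU{\vec{v}} \cup \trOfTrSU{\vec{w}}$ for all $\vec{v},\vec{w} \in \Transitions^*$ (since a transition occurs in $\vec{v}\,\vec{w}$ iff it occurs in $\vec{v}$ or in $\vec{w}$).

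Next I would introduce the auxiliary relation $\relR = \setcomp{(\vec{v},\vec{w})}{\trOfTrSU{\vec{v}} = \trOfTrSU{\vec{w}}}$ and check that it is a congruence on $\Transitions^*$. As the kernel of a function it is an equivalence relation, and it is compatible with concatenation because, by the homomorphism property, $\trOfTrSU{\vec{u_1}\,\vec{v}\,\vec{u_2}} = \trOfTrSU{\vec{u_1}} \cup \trOfTrSU{\vec{v}} \cup \trOfTrSU{\vec{u_2}}$, which depends on $\vec{v}$ only through $\trOfTrSU{\vec{v}}$. Moreover $\relR$ contains the generators of $\equivSeq$: whenever $\trT \independent \trTi$ we have $\trOfTrSU{\trT\trTi} = \setenum{\trT,\trTi} = \trOfTrSU{\trTi\trT}$, so $(\trT\trTi,\trTi\trT) \in \relR$. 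Since $\equivSeq$ is the least congruence with this property, $\equivSeq \;\subseteq\; \relR$, which is exactly the claim $\vec{\trT} \equivSeq \vec{\trTi} \implies \trOfTrSU{\vec{\trT}} = \trOfTrSU{\vec{\trTi}}$.

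I do not expect a genuine obstacle here; the only point needing a line of care is the verification that ``having the same image under $\trOfTrSU{\cdot}$'' is closed under taking contexts, which reduces to the homomorphism identity above. An entirely equivalent route, if one prefers not to name $\relR$, is a direct rule induction on the derivation of $\vec{\trT} \equivSeq \vec{\trTi}$ from the reflexivity axioms, the base axiom $\trT\trTi \equivSeq \trTi\trT$ for $\trT \independent \trTi$, and the concatenation rule: in each case the equality of occurrence sets is immediate, using the homomorphism property for the concatenation case.
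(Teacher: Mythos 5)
Your proof is correct and is exactly the argument the paper is gesturing at: the paper's own proof of this lemma is the single line ``Trivial by Definition~\ref{def:independent}'', and your kernel-of-a-homomorphism argument (or equivalently the rule induction you mention) is the standard way to discharge that triviality, using that $\equivSeq$ is the \emph{least} congruence containing the generating pairs and that $\trOfTrSU{\cdot}$ is invariant on those generators and compatible with concatenation.
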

\begin{proof}
  Trivial by Definition~\ref{def:independent}.
\end{proof}

\begin{lem}
  \label{lem:equal-Tr-implies-equal-marking}
  Let $\netN$ be an occurrence net,
  and let $\markM$ be a reachable marking.
  If $\markM \trans{\vec{\trSU[1]}} \markM[1]$, 
  $\markM \trans{\vec{\trSU[2]}} \markM[2]$ and 
  $\trOfTrSU{\vec{\trSU[1]}} = \trOfTrSU{\vec{\trSU[2]}}$, then
  $\markM[1] = \markM[2]$.
\end{lem}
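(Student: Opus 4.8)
The plan is to reduce this statement to the trace‑equivalence result already proved in Lemma~\ref{lem:PN-trace-equiv}, after first flattening the step firing sequences into ordinary firing sequences. First I would observe that every step firing sequence can be unfolded into an ordinary firing sequence reaching the same marking: if $\markM \trans{\trSU} \markMi$ with $\trSU = \setenum{\trT[1],\ldots,\trT[k]}$, then by Lemma~\ref{lem:PN-mstep-indepenent} the transitions in $\trSU$ are pairwise independent, all enabled at $\markM$, and (using Lemma~\ref{lem:ON-disjoint-pre-trans}, i.e.\ disjointness of presets in an occurrence net, together with the diamond property, item~\ref{lem:occurrenceNet-LTS:item-diamond} of Lemma~\ref{lem:occurrenceNet-LTS}) firing them one at a time in \emph{any} order leads from $\markM$ to $\markM - \pre{\trSU} + \post{\trSU} = \markMi$. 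Iterating this over the steps of $\vec{\trSU[1]}$ and $\vec{\trSU[2]}$ yields ordinary firing sequences $\vec{\trT[1]}$ and $\vec{\trT[2]}$ with $\markM \trans{\vec{\trT[1]}}{\!}^* \markM[1]$ and $\markM \trans{\vec{\trT[2]}}{\!}^* \markM[2]$, and moreover $\trOfTrSU{\vec{\trT[1]}} = \trOfTrSU{\vec{\trSU[1]}}$ and $\trOfTrSU{\vec{\trT[2]}} = \trOfTrSU{\vec{\trSU[2]}}$ by construction.

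Next I would argue that $\vec{\trT[1]}$ and $\vec{\trT[2]}$ have the same length: each is a linearisation of the multiset union of the steps, and by linearity (item~\ref{lem:occurrenceNet-LTS:item-distinct} of Lemma~\ref{lem:occurrenceNet-LTS}) no transition can occur twice in a firing sequence from a reachable marking, so each $\vec{\trT[i]}$ contains exactly the transitions of $\trOfTrSU{\vec{\trSU[i]}}$, each once. Combined with the hypothesis $\trOfTrSU{\vec{\trSU[1]}} = \trOfTrSU{\vec{\trSU[2]}}$, this gives $\trOfTrSU{\vec{\trT[1]}} = \trOfTrSU{\vec{\trT[2]}}$ and hence $\card{\vec{\trT[1]}} = \card{\vec{\trT[2]}}$; in particular both firing sequences consume and produce the same multiset of tokens overall, so one would already get $\markM[1] = \markM - \sum \pre{\trT} + \sum \post{\trT} = \markM[2]$ directly.

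In fact the cleanest route is precisely that last observation: since in an occurrence net no transition repeats along a firing sequence from a reachable marking, the net effect $\markM[i] = \markM + \sum_{\trT \in \trOfTrSU{\vec{\trT[i]}}} (\post{\trT} - \pre{\trT})$ depends only on the \emph{set} $\trOfTrSU{\vec{\trT[i]}}$, not on the order or the grouping into steps. So once both step firing sequences are shown to involve the same set of transitions, the equality $\markM[1]=\markM[2]$ is immediate. I expect the main obstacle to be the careful bookkeeping in the first paragraph: making the unfolding of a step into single‑transition firings fully rigorous (that the intermediate markings remain nonnegative and that every transition in the step stays enabled until it is fired) requires invoking Lemma~\ref{lem:ON-disjoint-pre-trans} and the diamond property in the right order, and one must also confirm that the unfolded marking after a step is exactly $\markM - \pre{\trSU} + \post{\trSU}$; everything after that is routine. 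If one prefers to avoid re‑deriving the set‑dependence of the net effect, an alternative is to appeal to Lemma~\ref{lem:PN-trace-equiv} to get $\vec{\trT[1]} \equivSeq \vec{\trT[2]}$ would not be available here (we do not yet know $\markM[1]=\markM[2]$), so the direct token‑counting argument is the one to use.
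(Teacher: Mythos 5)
Your proof is correct, but it follows a genuinely different and more elementary route than the paper's. The paper also begins by sequentialising both step firing sequences into ordinary firing sequences (the same standard flattening you spell out via Lemma~\ref{lem:ON-disjoint-pre-trans} and the diamond property), but then argues by induction on the length of the first sequence, case-splitting on whether the initial transitions coincide and chaining together determinism, the diamond property, Lemma~\ref{lem:occurrenceNet-reach} to close the diamond, Lemma~\ref{lem:PN-trace-equiv} on suffixes, and Lemma~\ref{lem:equivSeq-implies-equal-sets}; the marking equality emerges only at the end of this induction. Your argument instead rests on the marking (state) equation: by linearity (item~\ref{lem:occurrenceNet-LTS:item-distinct} of Lemma~\ref{lem:occurrenceNet-LTS}) no transition fires twice along a run from a reachable marking --- in particular no transition can occur in two distinct steps of $\vec{\trSU[i]}$ --- so the telescoping sum gives $\markM[i] = \markM + \sum_{\trT \in \trOfTrSU{\vec{\trSU[i]}}}\bigl(\post{\trT} - \pre{\trT}\bigr)$, a quantity depending only on the \emph{set} of fired transitions, and the conclusion is immediate once the two sets coincide. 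This buys a shorter, order-free proof that sidesteps the nested induction entirely; what it gives up is reuse of the trace-equivalence machinery the paper has already built (and which it needs anyway for Theorem~\ref{th:bc-to-pnet}). Your side remark is also accurate: Lemma~\ref{lem:PN-trace-equiv} cannot be invoked up front because its hypothesis is precisely the marking equality being proved, which is exactly why the paper's proof has to manufacture common endpoints via the diamond property before applying it. The only step you should write out in full is the flattening of a single step, and you have identified the right ingredients for it.
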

\begin{proof}
  Since $\markM \trans{\vec{\trSU[1]}} \markM[1]$ and 
  $\markM \trans{\vec{\trSU[2]}} \markM[2]$, there exist
  sequentialisations $\vec{\trT[1]}$ of $\vec{\trSU[1]}$ 
  and $\vec{\trT[2]}$ of $\vec{\trSU[2]}$ such that
  $\markM \trans{\vec{\trT[1]}} \markM[1]$ and
  $\markM \trans{\vec{\trT[2]}} \markM[2]$.
  Since by hypothesis
  $\trOfTrSU{\vec{\trSU[1]}} = \trOfTrSU{\vec{\trSU[2]}}$, then
  $\trOfTrSU{\vec{\trT[1]}} = \trOfTrSU{\vec{\trT[2]}}$.
  We proceed by induction on the length of $\vec{\trT[1]}$.
  The base case is trivial, as $\markM = \markM[1] = \markM[2]$.
  For the inductive case, suppose $\vec{\trT[1]} = \trT[1]\vec{\trTi[1]}$,
  with $\card{\vec{\trTi[1]}} = n$. 
  By determinism, there exists a unique marking $\markMi[1]$
  such that $\markM \trans{\trT[1]} \markMi[1]$ (a single step).
  Since $\trOfTrSU{\vec{\trSU[1]}} = 
  \trOfTrSU{\vec{\trSU[2]}}$, it must be $\vec{\trT[2]} = \trT[2]\vec{\trTi[2]}$,
  with $\card{\vec{\trTi[2]}} = n$. 
  Let $\markMi[2]$ be the unique
  marking such that $\markM \trans{\trT[2]} \markMi[2]$ (a single step).

  \smallskip\noindent
  There are two subcases.
  \begin{itemize}
  \item If $\trT[1] = \trT[2]$, then $\markMi[1] = \markMi[2]$, and so
    the thesis follows directly by the induction hypothesis. 
  \item If $\trT[1] \neq \trT[2]$,
    by the diamond property 
    (item \ref{lem:occurrenceNet-LTS:item-determinism} of Lemma~\ref{lem:occurrenceNet-LTS}),
    there exists $\markMi$ such that $\markMi[1] \trans{\trT[2]} \markMi$ and
    $\markMi[2] \trans{\trT[1]} \markMi$. Since $\trT[2] \in
    \trOfTrSU{\vec{\trTi[1]}}$, by linearity
    (item \ref{lem:occurrenceNet-LTS:item-distinct} of Lemma~\ref{lem:occurrenceNet-LTS})
    it follows that
    $\markM[1] \nottrans{\;\;\trT[2]}$, and hence, 
    by applying Lemma \ref{lem:occurrenceNet-reach} on $\markMi[1]$
    we obtain $\markMi \trans{\vec{\trTii[1]}} \markM[1]$ 
    for some $\vec{\trTii[1]}$. Summing up, we have that:
    \[\markMi[1] \trans{\vec{\trTi[1]}} \markM[1] \qquad \text{and} \qquad
    \markMi[1] \trans{\trT[2]\vec{\trTii[1]}} \markM[1]\]
    Then, by Lemma \ref{lem:PN-trace-equiv}, 
    $\vec{\trTi[1]} \equivSeq \trT[2]\vec{\trTii[1]}$,
    and hence:
    \[
    \vec{\trT[1]} = \trT[1]\vec{\trTi[1]} \equivSeq 
    \trT[1]\trT[2]\vec{\trTii[1]}
    \]
    By Lemma \ref{lem:equivSeq-implies-equal-sets}:
    \[
    \trOfTrSU{\vec{\trT[1]}} = \trOfTrSU{\trT[1]\trT[2]\vec{\trTii[1]}}
    \]
    Similarly, we can conclude that $\markMi \trans{\vec{\trTii[2]}} \markM[2]$ 
    for some $\vec{\trTii[2]}$ and that:
    \[
    \trOfTrSU{\vec{\trT[2]}} = \trOfTrSU{\trT[2]\trT[1]\vec{\trTii[2]}}
    \]
    Since $\trOfTrSU{\vec{\trT[1]}} = \trOfTrSU{\vec{\trT[2]}}$,
    we can conclude:
    \[\trOfTrSU{\vec{\trTii[1]}} = \trOfTrSU{\vec{\trTii[2]}}\]
    Since $\card{\vec{\trTii[1]}} = n - 1 < n + 1 = \card{\vec{\trT[1]}}$,
    the thesis follows by the induction hypothesis.
    \qedhere
  \end{itemize}
\end{proof}

\begin{lem}
  \label{lem:PN-independent-implies-swap}
  Let $(\txT,i)$ and $(\txTi,j)$ be transitions of $\PNet{\relR}{\bcB}$.
  Then:
  \[
  (\txT,i) \independent (\txTi,j)
  \quad \implies \quad
  \txT \relR \txTi
  \]
\end{lem}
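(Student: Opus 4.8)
The plan is to argue by contraposition, reducing the statement to Lemma~\ref{lem:PNet-leq-implies-not-independent} and the definition of the ordering $<$ from Figure~\ref{def:bc-to-pnet}. Suppose $(\txT,i) \independent (\txTi,j)$. By Definition~\ref{def:independent} this means $(\txT,i) \neq (\txTi,j)$ and there is a reachable marking $\markM$ with $\markM \trans{(\txT,i)}$ and $\markM \trans{(\txTi,j)}$. Since the transitions of $\PNet{\relR}{\bcB}$ are pairs whose second component is the position index of the corresponding transaction, $(\txT,i) \neq (\txTi,j)$ forces $i \neq j$; hence either $i < j$ or $j < i$.

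First I would treat the case $i < j$. Towards a contradiction, assume $\neg(\txT \relR \txTi)$. Then, by the definition of $<$ in Figure~\ref{def:bc-to-pnet}, we have $(\txT,i) < (\txTi,j)$. Since $\markM$ is a reachable marking and $\markM \trans{(\txT,i)}$, Lemma~\ref{lem:PNet-leq-implies-not-independent} gives $\markM \nottrans{(\txTi,j)}$, contradicting $\markM \trans{(\txTi,j)}$. Hence $\txT \relR \txTi$, as required.

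The case $j < i$ is handled symmetrically: the defining condition of $\independent$ in Definition~\ref{def:independent} is symmetric in its two arguments, so $(\txTi,j) \independent (\txT,i)$ as well, and the argument of the previous paragraph (with the two transitions exchanged) yields $\txTi \relR \txT$; since the swappability relation $\relR$ is symmetric, this is exactly $\txT \relR \txTi$.

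I do not anticipate any genuine obstacle: the claim is essentially the converse of Lemma~\ref{lem:PNet-leq-implies-not-independent}, and the only point requiring a little care is the bookkeeping of which of $i$ and $j$ is the smaller index, which is dispatched using the symmetry of $\independent$ and of $\relR$.
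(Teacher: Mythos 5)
Your proof is correct and takes essentially the same route as the paper's: both argue by contradiction, combining the definition of $<$ from Figure~\ref{def:bc-to-pnet} with Lemma~\ref{lem:PNet-leq-implies-not-independent} to rule out $\neg(\txT \relR \txTi)$. The only difference is that your explicit appeal to the symmetry of $\relR$ in the case $j<i$ surfaces an assumption the paper's proof uses implicitly (there $\relR$ is nominally an arbitrary relation, but every swappability relation it is instantiated with is symmetric).
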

\begin{proof}
  By Definition~\ref{def:independent},
  $(\txT,i) \independent (\txTi,j)$ implies that
  $(\txT,i) \neq (\txTi,j)$ and there exists some reachable marking
  $\markM$ such that $\markM \trans{(\txT,i)}$ and $\markM \trans{(\txTi,j)}$.
  By contradiction, assume that $\neg (\txT \relR \txTi)$.
  Then, since $i < j$ or $j > i$, by Definition \ref{def:bc-to-pnet}
  we would have that $(\txT,i) < (\txTi,j)$ or $(\txTi,j) < (\txT,i)$.
  Then, by Lemma \ref{lem:PNet-leq-implies-not-independent} 
  we obtain a contradiction.
\end{proof}

\begin{defi}
  Let $\PNet{\relR}{\bcB} = (\Places, \Transitions, \Arcs, \markM[0])$.
  We define $\txOfTr{} : \Transitions \rightarrow \Tx$ as 
  \(
  \txOfTr{\txT,i} = \txT
  \).
  We then extend $\txOfTr{}$ to a function from steps to 
  multisets of transactions as follows:
  \[
  \txOfTr{\emptyset} = \emptymset 
  \qquad 
  \txOfTr{\trSU \cup \setenum{\trT}} = 
  \msetenum{\txOfTr{\trT}} + \txOfTr{\trSU}\;\;
  \]
  Finally, we extend $\txOfTr{}$ to finite sequences of steps as follows:
  \[
  \txOfTr{\bcEmpty} = \bcEmpty \qquad 
  \txOfTr{\trSU\vec{\trSU}} = \txOfTr{\trSU}\txOfTr{\vec{\trSU}}
  \]
\end{defi}

\begin{lem}
  \label{lem:petri-equiv-par-block}
  Let $\PNet{\relR}{\bcB} = (\Places, \Transitions, \Arcs, \markM[0])$,
  and let $\vec{\trSU}$ be a step firing sequence. 
  Then, for all $\WR{}$ and $\bcSt$:
  \[
  \msem{\WR{}}{\bcSt}{\vec{\trSU}} \; = \; \msem{\WR{}}{\bcSt}{\txOfTr{\vec{\trSU}}}
  \]
\end{lem}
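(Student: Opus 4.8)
The statement relates the semantics of a step firing sequence $\vec{\trSU}$ of $\PNet{\relR}{\bcB}$ (via \Cref{def:sem:step:seq}) to the semantics of the induced sequence of multisets $\txOfTr{\vec{\trSU}}$ (via \Cref{def:txpar:pbc-semantics}). The two definitions are structurally parallel — both fold over the sequence, at each entry merging the state updates $\WR{\bcSt,\txT}$ over the transactions in the entry and applying the merge to $\bcSt$. So the proof is essentially an unfolding argument: the plan is to proceed by induction on the length of $\vec{\trSU}$, showing at each step that the single-step semantics $\msem{\WR{}}{\bcSt}{\trSU}$ of a step $\trSU = \setenum{(\txT[1],1),\ldots,(\txT[n],n)}$ coincides with $\msem{\WR{}}{\bcSt}{\txOfTr{\trSU}}$, where $\txOfTr{\trSU} = \msetenum{\txT[1],\ldots,\txT[n]}$.

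First I would set up the base case: when $\vec{\trSU} = \bcEmpty$, we have $\txOfTr{\bcEmpty} = \bcEmpty$, and both \Cref{def:sem:step:seq} and \Cref{def:txpar:pbc-semantics} give $\bcSt$, so the equality is immediate. For the inductive step, write $\vec{\trSU} = \trSU \vec{\trSU'}$. By \Cref{def:sem:step:seq}, $\msem{\WR{}}{\bcSt}{\trSU\vec{\trSU'}} = \msem{\WR{}}{\bcSti}{\vec{\trSU'}}$ where $\bcSti = \bcSt \bigoplus_{(\txT,i) \in \trSU} \WR{\bcSt,\txT}$. By \Cref{def:txpar:pbc-semantics}, $\msem{\WR{}}{\bcSt}{\txOfTr{\trSU}\txOfTr{\vec{\trSU'}}} = \msem{\WR{}}{\msem{\WR{}}{\bcSt}{\txOfTr{\trSU}}}{\txOfTr{\vec{\trSU'}}}$. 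The inner multiset semantics $\msem{\WR{}}{\bcSt}{\txOfTr{\trSU}}$ is, by \Cref{def:multiset-sem}, exactly $\bcSt \bigoplus_{\txT \in \txOfTr{\trSU}} \WR{\bcSt,\txT}$, which is the same merge as the one computing $\bcSti$ (the index component $i$ of a transition $(\txT,i)$ plays no role in $\WR{\bcSt,\txT}$, and $\txOfTr{}$ just strips it). Hence both reduce to $\msem{\WR{}}{\bcSti}{\vec{\trSU'}}$ and $\msem{\WR{}}{\bcSti}{\txOfTr{\vec{\trSU'}}}$ respectively, and the induction hypothesis closes the case.

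The one point that needs slight care — and is the only genuine obstacle — is that the iterated merge $\bigoplus$ over a \emph{set} of transitions $\trSU$ and over the corresponding \emph{multiset} $\txOfTr{\trSU}$ must be seen to agree. Since $\vec{\trSU}$ is a step firing sequence of an occurrence net, all transitions in a single step $\trSU$ are distinct (they are enabled at a common reachable marking; cf.\ \Cref{lem:PN-mstep-indepenent} and the fact that distinct transitions of an occurrence net have disjoint presets, \Cref{lem:ON-disjoint-pre-trans}), so $\txOfTr{\trSU}$ is genuinely the multiset $\msetenum{\txT[1],\ldots,\txT[n]}$ with $n = \card{\trSU}$ and no repetitions in the underlying transitions' indices. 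Moreover $\mrg$ is commutative and associative with neutral element $\lambda\qmvA.\bot$ (\Cref{lem:merge-comm-monoid}), so the fold is well-defined independently of enumeration order, and the two folds range over the same family $\setcomp{\WR{\bcSt,\txT}}{(\txT,i)\in\trSU}$. I would dispatch this in a short remark or inline, then the rest is the purely mechanical induction sketched above.
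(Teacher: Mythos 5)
Your proof is correct and is essentially the paper's own argument spelled out in full: the paper simply declares the lemma ``straightforward by Definitions~\ref{def:sem:step:seq} and~\ref{def:txpar:pbc-semantics},'' and your induction on the length of $\vec{\trSU}$, together with the observation that $\txOfTr{}$ merely strips indices so the two folds merge the same family of updates, is precisely the content behind that claim.
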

\begin{proof}
  Straightforward by Definitions \ref{def:sem:step:seq} and \ref{def:txpar:pbc-semantics}.
\end{proof}

\begin{lem}
  \label{lem:equivSeq-implies-equivStSeq}
  If $\vec{\trT} \equivSeq \vec{\trTi}$ holds in $\PNet{\relR}{\bcB}$, then 
  $\txOfTr{\vec{\trT}} \equivStSeq[\relR] \txOfTr{\vec{\trTi}}$.
\end{lem}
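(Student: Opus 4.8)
Lemma~\ref{lem:equivSeq-implies-equivStSeq}: if $\vec{\trT} \equivSeq \vec{\trTi}$ holds in $\PNet{\relR}{\bcB}$, then $\txOfTr{\vec{\trT}} \equivStSeq[\relR] \txOfTr{\vec{\trTi}}$.

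The plan is to induct on the derivation of $\vec{\trT} \equivSeq \vec{\trTi}$, using the fact that $\equivSeq$ (Definition~\ref{def:independent}) is defined as the least congruence on the free monoid $\Transitions^*$ generated by the swaps $\trT \independent \trTi \implies \trT\trTi \equivSeq \trTi\trT$. The key observation is that $\txOfTr{}$ is a monoid homomorphism from $\Transitions^*$ to $\Tx^*$: by its defining equations, $\txOfTr{\vec{\trT}\,\vec{\trTi}} = \txOfTr{\vec{\trT}}\,\txOfTr{\vec{\trTi}}$, and $\txOfTr{\bcEmpty} = \bcEmpty$. Hence, to push the equivalence through $\txOfTr{}$, it suffices to show that the generators of $\equivSeq$ map to instances (or trivial cases) of the generators of $\equivStSeq[\relR]$, and that congruence and reflexivity/symmetry/transitivity are preserved.

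Concretely, I would argue as follows. For the base/generator case, suppose $\trT \independent \trTi$ so that $\trT\trTi \equivSeq \trTi\trT$; write $\trT = (\txT,i)$ and $\trTi = (\txTi,j)$, so that $\txOfTr{\trT\trTi} = \txT\,\txTi$ and $\txOfTr{\trTi\trT} = \txTi\,\txT$. By Lemma~\ref{lem:PN-independent-implies-swap}, $(\txT,i) \independent (\txTi,j)$ implies $\txT \relR \txTi$; moreover independency forces $(\txT,i) \neq (\txTi,j)$, and since $i \neq j$ (distinct indices, as the two transitions share an index only if equal) we also get $\txT \neq \txTi$ whenever that is needed, but in any case the generating clause of Mazurkiewicz equivalence $\equivStSeq[\relR]$ (Definition~\ref{def:mazurkiewicz}) gives $\txT \relR \txTi \implies \txT\,\txTi \equivStSeq[\relR] \txTi\,\txT$. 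So the generator maps to a generator. For reflexivity (words of length $0$ or $1$, and the reflexive closure in general), $\txOfTr{}$ maps any word to a word, and $\equivStSeq[\relR]$ is reflexive. For symmetry and transitivity, these follow because $\equivStSeq[\relR]$ is an equivalence and the image of a symmetric/transitive step under $\txOfTr{}$ is again such a step. For the congruence case — if $\vec{\trT} \equivSeq \vec{\trTi}$ was derived from $\vec{\trT}_0 \equivSeq \vec{\trTi}_0$ via $\vec{\trT} = \vec{\trU}_1\vec{\trT}_0\vec{\trU}_2$ and $\vec{\trTi} = \vec{\trU}_1\vec{\trTi}_0\vec{\trU}_2$ — we apply the induction hypothesis to get $\txOfTr{\vec{\trT}_0} \equivStSeq[\relR] \txOfTr{\vec{\trTi}_0}$, then use the homomorphism property $\txOfTr{\vec{\trU}_1\vec{\trT}_0\vec{\trU}_2} = \txOfTr{\vec{\trU}_1}\txOfTr{\vec{\trT}_0}\txOfTr{\vec{\trU}_2}$ together with the fact that $\equivStSeq[\relR]$ is itself a congruence (it is the least congruence of Definition~\ref{def:mazurkiewicz}) to conclude $\txOfTr{\vec{\trT}} \equivStSeq[\relR] \txOfTr{\vec{\trTi}}$.

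I do not expect a serious obstacle here; the lemma is essentially the statement that a monoid homomorphism carries one relatively-presented congruence into another once it respects the generators. The one point requiring a little care is the bookkeeping of the generator case: making sure that $(\txT,i) \independent (\txTi,j)$ genuinely yields a valid instance $\txT \relR \txTi$ of the Mazurkiewicz generator — this is exactly Lemma~\ref{lem:PN-independent-implies-swap}, so the real content is already isolated — and handling the degenerate possibility that $\txT = \txTi$ (same transaction label on different indices), in which case $\txT\,\txTi \equivStSeq[\relR] \txTi\,\txT$ holds trivially by reflexivity. A secondary minor point is to be explicit that $\equivSeq$'s inductive structure includes the equivalence-closure rules, so that the induction has cases for reflexivity, symmetry, and transitivity in addition to the generator and congruence cases; each of these is immediate since $\equivStSeq[\relR]$ is an equivalence relation and $\txOfTr{}$ is a total function.
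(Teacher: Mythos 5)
Your proof is correct and is essentially the paper's argument: the paper phrases it as "the pullback of $\equivStSeq[\relR]$ along $\txOfTr{}$ is a congruence containing the independence generators (via Lemma~\ref{lem:PN-independent-implies-swap}), hence contains the least such congruence $\equivSeq$", while you unroll the same universal-property reasoning into an explicit induction on the derivation. The key ingredients — $\txOfTr{}$ being a monoid homomorphism, Lemma~\ref{lem:PN-independent-implies-swap} for the generator case, and the congruence/equivalence closure of $\equivStSeq[\relR]$ — are identical.
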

\begin{proof}
  Define: 
  \[
  \equivSeq' 
  \; = \;
  \setcomp
  {(\vec{\trT},\vec{\trTi})}
  {\txOfTr{\vec{\trT}} \equivStSeq[\relR] \txOfTr{\vec{\trTi}}}
  \] 
  It suffices to show that $\equivSeq\; \subseteq\; \equivSeq'$.
  Notice that $\equivSeq'$ is a congruence satisfying:
  \[
  \txT \relR \txTi \;\;\implies\;\; 
  (\txT,i)(\txTi,j) \equivSeq' (\txTi,j)(\txT,i)
  \]
  But then, by Lemma \ref{lem:PN-independent-implies-swap}, 
  it follows that $\equivSeq'$ also satisfies:
  \begin{equation*}
    (\txT,i) \independent (\txTi,j) \;\;\implies\;\; 
    (\txT,i)(\txTi,j) \equivSeq' (\txTi,j)(\txT,i)
  \end{equation*}
  Since $\equivSeq$ is the smallest congruence satisfying
  this implication, we have
  $\equivSeq\; \subseteq\; \equivSeq'$.
\end{proof}

\thbctopnet*
\begin{proof}
  For item~\ref{th:bc-to-pnet:confluence},
  assume that $\markM[0] \trans{\vec{\trSU}} \markM$ and 
  $\markM[0] \trans{\vec{\trSUi}} \markM$. 
  A standard result from Petri nets theory ensures that
  there exists sequentializations $\vec{\trT}$ of $\vec{\trSU}$
  and $\vec{\trTi}$ of $\vec{\trSUi}$ such that:
  \[
  \markM[0] \trans{\vec{\trT}} \markM
  \quad \text{and} \quad
  \markM[0] \trans{\vec{\trTi}} \markM
  \]
  By Lemma \ref{lem:PN-trace-equiv},
  it must be $\vec{\trT} \equivSeq \vec{\trTi}$.
  Then, by Lemma \ref{lem:equivSeq-implies-equivStSeq}: 
  \[\txOfTr{\vec{\trT}} \equivStSeq[\pswapWR] \txOfTr{\vec{\trTi}}\]
  By Lemma \ref{lem:equivRel-implies-equiv}:
  \[
  \semBc{\txOfTr{\vec{\trT}}}{\bcSt} = \semBc{\txOfTr{\vec{\trTi}}}{\bcSt}
  \]
  By Lemmas \ref{lem:PN-mstep-indepenent} and \ref{lem:PN-independent-implies-swap},
  it follows that all multisets of transactions in $\txOfTr{\vec{\trSU}}$,
  as well as those in $\txOfTr{\vec{\trSUi}}$,
  are strongly swappable \wrt $\pswapWR$.
  Therefore, by Theorem \ref{th:seq-union}:
  \[
  \msem{\WRmin{}}{\bcSt}{\txOfTr{\vec{\trSU}}} = 
  \msem{\WRmin{}}{\bcSt}{\txOfTr{\vec{\trSUi}}}
  \]
  Then, by Lemma \ref{lem:petri-equiv-par-block}:
  \[
  \msem{\WRmin{}}{\bcSt}{\vec{\trSU}} = 
  \msem{\WRmin{}}{\bcSt}{\vec{\trSUi}}
  \]
  
  \smallskip\noindent
  For item~\ref{th:bc-to-pnet:bc}, note that a transition $(\txT[i],i)$ 
  is enabled if all transitions $(\txT[j],j)$ with $j < i$ have been fired.  
  So $\setenum{(\txT[1],1)} \cdots \setenum{(\txT[n],n)}$ is a step firing sequence. Moreover, $\setenum{(\txT[1],1)} \cdots \setenum{(\txT[n],n)}$
  contains all the transactions of $\PNet{\pswapWR}{\bcB}$, and so, by linearity (Lemma \ref{lem:occurrenceNet-LTS:item-distinct}) it is maximal.
  
  \smallskip\noindent
  For item~\ref{th:bc-to-pnet:maximal},
  let $\vec{\trSUi} = \setenum{(\txT[1],1)} \cdots \setenum{(\txT[n],n)}$. By item~\ref{th:bc-to-pnet:bc}, we have that $\vec{\trSUi}$ is a maximal
  step firing sequence. It is easy to see that $\bcB \seqn \txOfTr{\vec{\trSUi}}$. 
  By Theorem \ref{th:seq-union}: 
  \[\msem{\WRmin{}}{\bcSt}{\txOfTr{\vec{\trSUi}}} = \semBc{\bcB}{\bcSt}\]
  Since $\vec{\trSU}$ and $\vec{\trSUi}$ are both maximal, by 
  Lemma \ref{lem:equal-Tr-implies-equal-marking} and 
  by item~\ref{th:bc-to-pnet:confluence}, it follows that:  
  \[
  \msem{\WRmin{}}{\bcSt}{\vec{\trSUi}} = \msem{\WRmin{}}{\bcSt}{\vec{\trSU}}
  \]
  Since $\msem{\WRmin{}}{\bcSt}{\vec{\trSUi}} = 
  \msem{\WRmin{}}{\bcSt}{\txOfTr{\vec{\trSUi}}}$ 
  (by Lemma \ref{lem:petri-equiv-par-block}) we have that:
  \[
  \msem{\WRmin{}}{\bcSt}{\vec{\trSU}} = \semBc{\bcB}{\bcSt}
  \tag*{\qedhere}
  \]
\end{proof}

\lstinputlisting[label={ex:eth:token},caption={A simple ERC-721 token implementation}, language=solidity,numbersep=10pt,float]{erc721.sol}

\lstinputlisting[label={ex:eth:lottery},caption={A two-players lottery contract}, language=solidity,numbersep=10pt,float]{lottery.sol}

\end{document}